\pgfplotsset{compat=1.17}
\DeclareMathOperator*{\argmax}{arg\,max}
\pgfplotsset{compat=1.17}
\newtheorem{theorem}{Theorem}[section]
\newtheorem{lemma}[theorem]{Lemma}
\newtheorem{proposition}[theorem]{Proposition}
\newtheorem{corollary}[theorem]{Corollary}
\newtheorem{definition}[theorem]{Definition}
\newtheorem{remark}[theorem]{Remark}
\newcommand{\setoutcomes}{[m]}
\newcommand{\numoutcomes}{m}
\newcommand{\setactions}{[n]}
\newcommand{\numactions}{n}
\newcommand{\Clinear}{\mathcal{C}_{\mathrm{linear}}}
\newcommand{\Cunbounded}{\mathcal{C}_{\mathrm{unbounded}}}
\newcommand{\Thetaall}{\Theta_{\mathrm{all}}}
\newcommand{\Cbounded}{\mathcal{C}_{\mathrm{bounded}}}
\newcommand{\numcombinatorialactions}{\bar{n}}
\newcommand{\numcombinatorialactionsexpanded}{n}
\newcommand{\pdim}[1]{\mathrm{Pdim}(#1)}
\newcommand{\Bepsilon}{\mathcal{B}_{\epsilon}}
\newcommand{\Lepsilon}{\mathcal{L}_{\epsilon}}
\title{The Pseudo-Dimension of Contracts}
\author{Paul D\"utting\thanks{Google Research} 
\and Michal Feldman\thanks{Tel Aviv University} $^{,}$\thanks{Microsoft ILDC}
\and Tomasz Ponitka\footnotemark[2]
\and Ermis Soumalias\thanks{University of Zurich} $^{,}$\thanks{ETH AI Center}
}
\date{January 23, 2024}
\begin{document}

\maketitle

\begin{abstract}
Algorithmic contract design 
studies scenarios 
where a principal incentivizes an agent to exert effort on her behalf.
In this work, we focus on settings where the agent's type is drawn from an unknown distribution, and formalize an offline learning framework for learning 
near-optimal 
contracts 
from sample agent types.
A central tool in our analysis is the notion of \emph{pseudo-dimension} from statistical learning theory. 
Beyond its role in establishing upper bounds on the sample complexity, pseudo-dimension measures the intrinsic complexity of a class of contracts, offering a new perspective on the tradeoffs between simplicity and optimality in contract design. 
{Our main results provide} 
essentially optimal tradeoffs 
between pseudo-dimension and representation error (defined as the loss in principal's utility) with respect to 
linear and bounded contracts. 
Using these tradeoffs, we derive sample- and time-efficient learning algorithms, 
and demonstrate their near-optimality by 
providing almost matching lower bounds on 
the sample complexity.
Conversely, for unbounded contracts, we prove an impossibility result showing that no learning algorithm exists.

Finally, we extend our techniques in three important ways.  
First, we provide refined pseudo-dimension and sample complexity guarantees for the \emph{combinatorial actions} model, revealing a novel connection between the number of critical values and sample complexity.
Second, we extend our results to  
\emph{menus} of contracts, showing that their pseudo-dimension scales linearly with the menu size.
Third, we adapt our algorithms to the \emph{online learning} setting, where we show that,  
a polynomial number of type samples 
suffice to learn near-optimal bounded contracts. 
Combined with prior work, this establishes a formal separation between expert advice and bandit feedback for this setting.
\end{abstract}

% MAIN PAPER
%%%%%%%%%%%%%%%%%%%%%%%%%%%%%%%%%%%%%%%%%%%%
\section{Introduction}

Contract theory plays a foundational role in understanding markets for services, much like mechanism design and auctions underpin our understanding of markets for goods. 
The importance of this framework is reflected in its widespread application and recognition in economics, with the 2016 Nobel Prize awarded to Hart and Holmström \cite{Nobel2016}. 
As more of the classic applications are moving online and growing in scale, and as new applications and opportunities arise in online markets, this mandates an algorithmic approach to contract theory (see, e.g., \cite{babaioff2006combinatorial,ho2014adaptive,dutting2019simple,ZhuBYWJJ23}). 

A prime example of the growing relevance of contract theory in the online era is the \emph{creator economy},  projected to surpass half a trillion dollars by 2027 \citep{goldmansachs2022creatorEconomy}.
The creator economy refers to the ecosystem of individual content creators---such as YouTubers, Twitch streamers, TikTok influencers, and independent writers---who monetize their work through platforms that generate value for users and revenue for themselves. Online platforms aim to maximize long-term growth by encouraging creators to produce high-quality content. 
However, creators often prioritize minimizing production costs (e.g., by producing cheap, click-bait content or fake news) leading to a misalignment of incentives. 

Scenarios of this nature are captured by the classic hidden-action principal-agent model \citep{holmstrom1979,grossman1983}, where a principal (the platform) delegates a task (producing content) to an agent (the creator). 
In this model, the agent chooses from a set of actions (e.g., levels of effort in content production), each associated with a cost and a distribution over outcomes (e.g., engagement metric and audience growth) which benefit the principal.  
Since the principal 
observes only these outcomes, not the 
agent's actions, an asymmetry of information arises, creating a moral hazard problem: the agent 
has no inherent motivation to undertake costly actions that align with the principal's 
goals. Hence, the principal  
has to design a \textit{contract} (e.g., a revenue-sharing scheme) that specifies transfers---monetary rewards on observed outcomes---incentivizing agents 
to exert effort.

An additional challenge in applying this framework 
is that the principal typically interacts with a population of (independent) agents, 
each with their own \emph{type} 
(e.g., expertise, production costs, and audience profile). 
One possible approach is to adopt a \emph{Bayesian perspective}, 
assuming the principal has perfect knowledge of the distribution over agent types, and design a contract that performs well in expectation. This approach has been extensively studied by
\cite[e.g.,][]{alon2021contracts,castiglioni2022designing,GuruganeshSW023,AlonDLT23,CastiglioniCLXS25,GSW21}.

However, access to the full distribution of agent types is often an unrealistic assumption. A more practical and feasible approach is to use historical data in the form of {\em samples}, where each sample corresponds to an individual agent's type.
A similar shift from relying on full distributional knowledge to adopting a sample-based approach
has fueled the rise of \textit{automated mechanism design} \cite[e.g.,][]{ColeR14,balcan16,morgenstern2015pseudodimension,BalcanSV18,curry2024automateddesignaffinemaximizer} and \emph{data-driven algorithm design} \cite[e.g.,][]{BalcanDDKSV21,DBLP:books/cu/20/Balcan20}.
Inspired by this {line of} work, we {introduce a framework for offline learning in contract design settings}\footnote{There is an active line of work on \emph{online} learning for contracts (which we discuss in more detail below). {Some of our insights stem from comparisons with this earlier literature and the distinct feedback model it employs}.},
answering the following central question:

\begin{center}
     \textbf{Question 1:} \emph{How many samples from the agent type distribution are\\ required to learn a near-optimal contract {with high probability}?}
\end{center}

To address this question, it is essential to recognize that the principal's choices might be restricted to a specific class of contracts, e.g., due to legal regulations or the cognitive difficulty that general contracts might induce.
For example, a natural candidate for such a class is the class of \emph{linear} contracts, where the agent receives a fixed fraction of the principal's revenue.

Building on this, we propose a systematic approach to establishing sample complexity guarantees for specific contract classes by leveraging the \emph{pseudo-dimension} --- a combinatorial measure of complexity for a class of contracts.
Notably, understanding the intrinsic complexity of key contract classes is an important question in its own right, extending beyond its primary role in determining sample complexity guarantees, see \cite[e.g.,][]{dutting2019simple,caroll2015}.
This leads us to pose the following central question:

\begin{center}
\textbf{Question 2:} \emph{What is the pseudo-dimension of key contract classes?}    
\end{center}

We further explore whether the inherent complexity of key contract classes can be mitigated by approximating them with simpler classes. The quality of this approximation is measured by the  resulting loss in the principal's utility, referred to as the \emph{representation error} with respect to the original class.

\begin{center}
     \textbf{Question 3:} \emph{{Are there contract classes  with low pseudo-dimension \\ that closely approximate key contract classes?}}
\end{center}

In this work we provide answers to all three questions.

Our work also connects to the rapidly growing and impactful area of \emph{online} contract learning \cite[e.g.,][]{ho2014adaptive,ZhuBYWJJ23,bcmg23,ChenEtAl2024,scheid24incentivizedLearning}. 
The primary distinction between this line of research and our model lies in the feedback type: 
these works focus on repeated interactions with agents drawn from the underlying distribution, where the principal observes only the realized outcomes ({a setup commonly referred to as} {\emph{bandit feedback}}). 
In contrast, we study a one-shot problem in which a sample corresponds to the agent's type ({a setup often termed} {\emph{expert advice}} {or \emph{full feedback}}).
A key finding {from the online learning literature} is that finding a near-optimal contract {under the} bandit feedback model requires exponentially many rounds of adaptation \citep{ZhuBYWJJ23}. 
This lower bound applies even {in the case of} a single agent with an unknown type, highlighting that the {primary} challenge in this feedback model lies in learning the distributions over outcomes induced by individual actions, rather than learning the distribution of agent types.
In contrast, {offline learning with expert advice} focuses on the problem of learning the agent type distribution, which has the natural interpretation of learning a (near-)optimal contract from a small number of sample agents. 
We include a detailed comparison with this prior work in \Cref{sec:related_work,section:online}.

\subsection{Main Results: Bounds on Pseudo-Dimension and Sample Complexity}

We consider the hidden-action principal-agent model with $n$ actions and $m$ outcomes. 
The agent is characterized by a \emph{private} type, encoding their costs and associated outcome distributions for all potential actions. We assume that the agent's type is drawn from an \emph{unknown} distribution $\mathcal{D}$. The principal has sample access to $\mathcal{D}$ but does not know the agent's type or the actual distribution $\mathcal{D}$. Our goal is to learn a (single) contract from a prespecified contract class $\mathcal{C}$ that maximizes the principal's expected utility, with the expectation taken over the randomness of the agent's type.

\begin{table}[t]
\vspace{-1cm}
    \centering
    \renewcommand{\arraystretch}{1.3}
    \begin{tabular}{|c|c|c|c|}
    \hline
    \textbf{Contract Class} & \multicolumn{2}{c|}{\textbf{Pseudo-Dimension}} & \textbf{Sample Complexity} \\ 
    % & Representation Error & Representation Error & \\ 
    \hline 
    % & Representation Error & Representation Error & \\ 
    & \textbf{Full Class} & \textbf{$\bm{\epsilon}$-Approximation} & \\ 
    \hline%\cmidrule{1-4}
    $\mathcal{C}_{\mathrm{linear}} = \{\alpha \cdot r : \alpha \in [0,1]\}$ & $\Theta(\log n)$ & ${\Theta}(\log (1/\epsilon))$ & ${\widetilde{\Theta}((1/\epsilon^2) \log (1/\delta))}$ \\
     & {\footnotesize (\Cref{theorem:linear_contract_pdim,theorem:special_case})} & {\footnotesize(\Cref{theorem:t_contracts_true_pd,thm:pdim_lb_for_rep_error_eps})} & {\footnotesize(\Cref{thm:efficient_linear,thm:sample_lower_bound_linear})} \\ \hline
    $\Cbounded = [0,1]^m$ & $\widetilde{\Theta}(m \log n)$ & $\widetilde{\Theta}(m \log (1/\epsilon))$ & $\widetilde{\Theta} ((1/\epsilon^2)(m + \log(1/\delta))$ \\
     & {\footnotesize (\Cref{theorem:general_contract_pdim,theorem:special_case})} & {\footnotesize(\Cref{lemma:se_contracts_pseudo-dimension,thm:pdim_lb_for_rep_error_eps})} & {\footnotesize(\Cref{thm:cone_efficient,thm:sample_complexity_lb_bounded_contracts})} \\ \hline
    $\Cunbounded = [0, \infty)^m$ & $\widetilde{\Theta}(m \log n)$ & - & $\infty$ \\
     & {\footnotesize(\Cref{theorem:general_contract_pdim,theorem:special_case})} & - & {\footnotesize(\Cref{thm:sample_impossibility_unbounded_contracts})} \\ \hline
    \end{tabular}
    \caption{Overview of our upper and lower bounds on the pseudo-dimension and sample complexity of key classes of contracts. Here, the $\widetilde{\Theta}$ notation suppresses {the $\log(1/\epsilon)$ and $\log m$ factors in the bounds that already depend polynomially on $1/\epsilon$ and $m$, respectively. 
    The pseudo-dimension lower bounds hold for any $\epsilon$-approximation of the corresponding full contract class. 
    }
    }
    \label{tab:overview}
\end{table}

\begin{figure}[!ht]
\centering
\vskip -60pt
\includegraphics[width=1\textwidth]{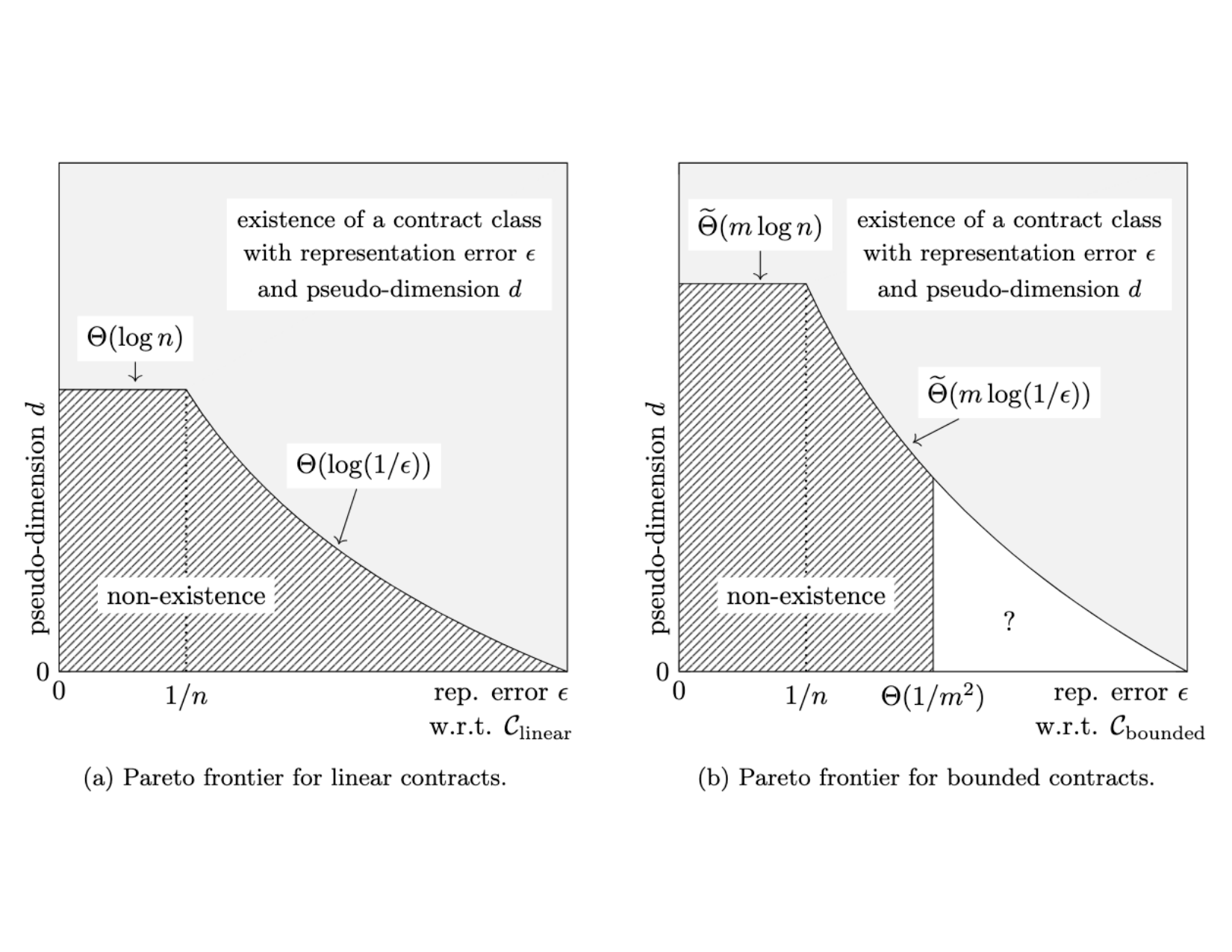}
\vskip -60pt
\caption{Trade-offs between pseudo-dimension and representation error for linear and bounded contracts in the case where $n > 24 \cdot m^2$.
For every point $(\epsilon,d)$, an $\epsilon$-approximation with pseudo-dimension $d$ exists if $(\epsilon,d)$ is in the gray region, and it does not exist if $(\epsilon,d)$ is in the diagonally-marked region.
For clarity, we suppress the logarithmic factors in this figure. Specifically, the boundaries between the two regions are tight up to a constant for linear contracts and up to a factor of $\log m$ for bounded contracts.}
    \label{fig:tradeoffs_summary}
\end{figure}
\vskip -50pt

{We focus on three main classes of contracts: linear, bounded, and unbounded. For each class, we provide two key results: trade-offs between the pseudo-dimension and the representation error with that class and sample complexity guarantees for learning a near-optimal contract. Additionally, we show that many of our results are nearly tight.
Those results are presented in \Cref{tab:overview} and \Cref{fig:tradeoffs_summary}.
}

While the results for linear, bounded, and unbounded {contracts} are the {centerpiece} of this work, our framework and technical tools (e.g., \Cref{lem:learning_with_oracle,{lem:online_learning_reduction}}) 
are applicable to a wide range of settings. 
We illustrate the versatility of our techniques through their extension to additional settings, as discussed in \Cref{sec:furtherresults}.

\paragraph{Linear Contracts.}
{A linear contract sets the transfer for each outcome to be {an} $\alpha$-fraction of the principal's reward, for some fixed $\alpha \in [0,1].$}
We show that the pseudo-dimension of the class of linear contracts is  $\Theta(\log n)$ (\Cref{theorem:linear_contract_pdim,theorem:special_case}). 
Additionally, we show that it is possible to alleviate the dependence on the number of actions $n$ from the pseudo-dimension by considering an $\epsilon$-approximation of the full class of linear contracts.
In \Cref{theorem:t_contracts_true_pd,thm:pdim_lb_for_rep_error_eps}, we map the Pareto frontier between pseudo-dimension and representation error, showing that for representation error $\epsilon$, the optimal pseudo-dimension is $\Theta(\min(\log n, \log (1/\epsilon))$.
A key insight from these results is that for a representation error $\epsilon < 1/n$, there is no discretization that gives a better pseudo-dimension than the full class of linear contracts. 
These results lead to a sample-efficient and polynomial-time algorithm for learning linear contracts:  Having access to  $ N = O\left( (1 / \epsilon^2) \left( \log \left( 1 / \epsilon \right) + \log \left( 1 / \delta  \right) \right)\right)$ samples, we can find, with probability $1-\delta$, a contract with utility at most $\epsilon$ away from that of the optimal linear contract (\Cref{thm:efficient_linear}). 
We further establish a nearly matching sample complexity lower bound, showing that $ \Omega((1/\epsilon^2) \log (1/\delta))$ samples are necessary (\Cref{thm:sample_lower_bound_linear}). 

\paragraph{Bounded Contracts.}
A bounded contract sets the transfer for each outcome to a value in $[0,1]$.
We show that the pseudo-dimension of the class of bounded contracts is 
$O (m\log n + m \log m)$ and $\Omega(m \log n)$ (\Cref{theorem:general_contract_pdim,theorem:special_case}). 
Similarly to our results for linear contracts, we show that it is possible to alleviate the dependence on the number of actions $n$ in the pseudo-dimension bound. 
We map the Pareto frontier between pseudo-dimension and representation error for the region where 
$\epsilon < 1/(24 \cdot m^2)$, showing that for representation error $\epsilon$, the optimal pseudo-dimension is $\Theta(\min(m \log n, m \log (1/\epsilon))$.
As with linear contracts, 
we observe that for $\epsilon < 1/n$, no discretization improves the pseudo-dimension compared to the full class of bounded contracts.
Notably, our upper bound applies to any $\epsilon$, not necessarily bounded by $1/(24 \cdot m^2)$. 
These results translate into a sample-efficient learning algorithm, which can also be made time-efficient with appropriate oracle access (\Cref{thm:bounded_efficient}): using $ N = O\left( (1 / \epsilon^2) \left( m \log \left( 1 / \epsilon \right) + \log \left( 1 / \delta  \right) \right)\right)$ samples, we can find, with probability $1-\delta$, a contract with utility at most $\epsilon$ away from that of the optimal linear contract (\Cref{thm:efficient_linear}). 
Finally, we establish a nearly matching sample complexity lower bound of ${\Omega}((1/\epsilon^2)(m + \log(1/\delta))$ (\Cref{thm:sample_complexity_lb_bounded_contracts}).

\paragraph{Unbounded Contracts.}
{Unbounded contracts represent the most general class, allowing arbitrary transfers for each outcome.}
For the class of unbounded contracts, perhaps surprisingly, we show that the pseudo-dimension is also 
$\Theta (m\log n + m \log m)$  (\Cref{theorem:general_contract_pdim,theorem:special_case}), identical to that of the class of bounded contracts. However, in strong contrast to the case of bounded contracts, we prove that no learning algorithm with finite sample complexity exists for this setting (\Cref{thm:sample_impossibility_unbounded_contracts}), revealing a fundamental difference between bounded and unbounded contracts. 

\paragraph{Discussion.}

Taken together, for the most widely used contract classes---linear and bounded---we derive tight trade-offs (up to logarithmic factors) between {contract expressiveness (representation error) and sample complexity (pseudo-dimension)}, providing a principled framework for balancing these objectives. In contrast, for unbounded contracts, we establish a fundamental negative result: no learning algorithm with finite sample complexity exists in this setting, {highlighting} a critical distinction from bounded contracts. Furthermore, these results extend the scope of the {prior} contract learning literature \cite[e.g.,][]{ZhuBYWJJ23,bcmg23,scheid24incentivizedLearning}.
{The complexity bounds in prior work were dominated by the {difficulty} of learning the distribution {over outcomes induced by the actions of a fixed agent type} under bandit feedback. In contrast, our results capture the difficulty of learning the distribution over agent types themselves.}

\subsection{Extensions of the Main Results}\label{sec:furtherresults}
In this section, we present extensions of our core results to other settings, highlighting the novel challenges addressed in this paper, the flexibility of our techniques, and their applicability to more complex and less-explored settings.

\paragraph{Combinatorial Contracts.}
In \Cref{sec:combinatorial_actions}, we extend our results to the well-established setting with combinatorial actions \cite[e.g.,][]{Duetting2021CombinatorialC,dutting2024combinatorial,ezra2023Inapproximability,DEFK24,DBLP:journals/corr/abs-2403-09794}, where the agent can choose a set of actions rather than a single action/effort level. 
We demonstrate how the pseudo-dimension analysis enables the designer to leverage structural insights about optimal contracts  \cite{Duetting2021CombinatorialC} to achieve significantly improved sample complexity compared to naive approaches. 
Specifically, we discover a novel connection between the number of critical values and the pseudo-dimension, demonstrating that the pseudo-dimension---and consequently the sample complexity---scales polynomially with the number of critical values rather than the number of actions. This yields an exponential improvement in many well-studied scenarios (e.g., when the success probability function is gross substitutes).

\paragraph{Menus of Contracts.} In \Cref{sec:menus}, we illustrate how pseudo-dimension can serve as a powerful tool for analyzing less well-understood contract classes. As an example, we provide pseudo-dimension and sample complexity upper bounds for menus of contracts with a fixed size, offering new insights into the learnability of these more general and powerful classes \citep{GuruganeshSW023}. 
Importantly, we show that the pseudo-dimension grows linearly with the menu size.

\paragraph{Online Learning.} 
In \Cref{section:online}, we adapt our offline learning framework to the online setting. Unlike prior work on online contract learning \cite[e.g.,][]{ZhuBYWJJ23,bcmg23,scheid24incentivizedLearning}, which {typically} operates under a bandit feedback model, our approach assumes a stronger \textit{expert advice model}. 
This enables our algorithm to deliver significantly improved performance, achieving a regret of $\widetilde{O}(\sqrt{m \cdot T})$ for bounded contracts, {corresponding} to a sample complexity that is polynomial in $1/\epsilon$.
In contrast, \citet*{ZhuBYWJJ23} demonstrate that in the bandit feedback setting, any algorithm incurs a regret of at least $\Omega(T^{1-1/(m+2)})$, {corresponding} to a {sample complexity that is} exponential  in $1/\epsilon$, even when the agent type distribution is concentrated on a single type.
The key takeaway from this comparison is that, in terms of sample complexity, the cost of learning the distribution of agent types is only polynomial, while the cost of learning the distribution of outcomes for individual actions is exponential.

\subsection{Our Techniques}

\paragraph{Upper Bounds and Delineability.} The key technique for our pseudo-dimension upper bounds is to apply the  \emph{delineability framework} introduced by \citet*{BalcanSV18}. 
{They observed that, in related settings in mechanism design, the objective function is linear over a bounded number of regions of the type space and demonstrated how to leverage this structure to derive pseudo-dimension bounds.}
In this work, we extend this idea to the contract design problem, showing that our model exhibits a similar structure to that of mechanism design. 
{Notably, while we demonstrate that this structure holds for general contracts, we find that in the case of linear contracts, the boundaries between those regions correspond precisely to the \emph{critical values} \cite{Duetting2021CombinatorialC}}.
While previous work has established a crucial connection between the number of critical values and the \emph{time} complexity of finding optimal (linear) contracts \cite{Duetting2021CombinatorialC,dutting2024combinatorial}, in this work, we uncover a novel connection between the number of critical values and the \emph{sample} complexity of learning near-optimal linear contracts.

\paragraph{Pseudo-Dimension Lower Bounds.}
One of our main technical achievements is showing that the {pseudo-dimension} upper bounds from the delineability framework are near-tight. Our lower bound construction for %approximations of the class of 
bounded contracts proceeds in two steps:
First, by considering suitable distributions over agent types, we prove that any $\epsilon/m$-approximation with $1/\epsilon \leq n$ contains a structured grid of $(1/\epsilon)^{m-1}$  contracts 
(\Cref{lem:pdim_lb_contract_construction}). Then, by carefully adjusting the 
{outcome distributions and costs of the different actions,}
we construct agent types that have high utility on any chosen subset of the grid and low utility elsewhere (\Cref{lem:pdim_lb_type_construction}). These observations directly yield a shattering construction.

\paragraph{Sample Complexity Lower Bounds.} {Beyond the pseudo-dimension lower bounds discussed above,}
our second technical contribution is demonstrating that pseudo-dimension yields nearly optimal bounds on sample complexity. Our lower bounds are established through a reduction of the contract design problem to distinguishing between a set of sufficiently close distributions of agent types based on samples.
Notably, while the distributions of agent types are close in total variation distance, the corresponding optimal contracts differ significantly.
Leveraging standard techniques from statistical learning theory, we establish lower bounds for distinguishing between these distributions (\Cref{lem:bernoulli_lb,lem:discrete_lb}).

\subsection{Related Literature} \label{sec:related_work}
Over several decades, a rich body of literature has developed around contract theory (see, e.g., \citet*{holmstrom1979,grossman1983}). 
Recently, there has been a significant focus on the computational aspects of contract design, as summarized in the survey of \citet*{DuettingFTC24}. For instance,~\citet*{babaioff2006combinatorial} examine the computational challenges involved in incentivizing a group of agents to coordinate on a task. Additional work on this and other combinatorial contract problems includes \cite{dutting2021complexity,Duetting2021CombinatorialC,duetting2022multi,ezra2023Inapproximability,dutting2024combinatorial,deo2024supermodular,DEFK24,DBLP:journals/corr/abs-2403-09794}.

Another important line of research explores how well simple contracts approximate the optimal utility. \citet*{dutting2019simple} find that the best linear contracts can closely match the performance of the optimal contracts under specific conditions, while \citet*{caroll2015} shows that linear contracts are robust even when actions are unknown.

A rich body of work, initiated by \citet*{CastiglioniM021}, \citet*{GSW21}, and \citet*{alon2021contracts} explores 
the problem of optimal contract design for a known distribution over agent types. 
Additional work on this topic includes \cite{alon2021contracts,castiglioni2022designing,GuruganeshSW023,AlonDLT23,CastiglioniCLXS25,GSW21}. A key finding of this line of work is that computing optimal deterministic contracts is typically NP-hard, while computing an (almost) optimal randomized contract is feasible.
The main differentiation of our work from that line of research is that we make the weaker assumption of only having samples from that distribution, and 
focus on deriving sample-efficient learning algorithms. 

Beyond computational issues, there has been interest in integrating statistical inference with contract theory. \citet*{schorfheide2012,schorfheide2016} address the problem of delegating statistical estimation to a strategic agent, and \citet*{frazier14}  investigate incentivizing learning agents in a bandit setting. \citet*{bates2022} recently demonstrated how hypothesis testing can be conducted through interactions between a principal and multiple agents.

The intersection of contract design and learning was pioneered in \citet*{ho2014adaptive}, who approach the problem from an online learning perspective. More recently, this problem was studied by \citet*{ZhuBYWJJ23}. One of the main results of this line of work is an exponential lower bound for learning bounded contracts from bandit feedback. This lower bound applies even if the agent's type is fixed, but requires an exponential number of actions. Motivated by this, \citet*{ChenEtAl2024} and \citet*{bcmg23} show that near-optimal bounded contracts can be learned with polynomially many samples (in the bandit feedback model), when either additional ``regularity'' assumptions are satisfied or there are only constantly many actions.
%and show how this allows to side-step the impossibility from earlier work. 
Additional work on online learning in contract design includes 
\cite{cohen2022,DuettingGuruganeshSchneiderWang23,burkett2024,scheid24incentivizedLearning}.
The main distinguishing feature of our work is the focus on type samples (expert advice).

This focus on learning from samples of the agent's type distribution aligns our work with the tradition of \textit{automated mechanism design}, a research direction initiated by \citet*{ColeR14}.
Particularly relevant are the studies by \citet*{morgenstern2015pseudodimension}, who introduce the notion of pseudo-dimension to auction design, and \citet*{BalcanSV18}, along with follow-up work \cite{BalcanDDKSV21}, which leverage structural properties of auction classes to derive pseudo-dimension bounds.

A related body of work focuses on machine learning-powered preference elicitation \citep{brero2021machinelearningpowerediterativecombinatorial}. Rather than relying on historical data, this line of research involves the auctioneer interacting iteratively with bidders, using the auction’s dynamic history to elicit bids that enable more efficient allocations. This framework has been applied to various auction formats \cite[e.g.,][]{soumalias2024pricesbidsvalueseverything,Soumalias_2024} and extended to broader combinatorial settings, such as assignment problems \cite[e.g.,][]{soumalias2024machinelearningpoweredcourseallocation}.

\section{Contract Design Preliminaries}\label{sec:contract_prelims}

Throughout this work, we use $[k]$ to denote the set $\{0, \ldots, k-1\}$ and $\Delta^k$ to denote the set of all probability distributions over $[k]$.

\subsection{The Principal-Agent Model}

In this {section}, we define the hidden-action principal-agent model \citep{holmstrom1979,grossman1983}.
{The principal delegates the execution of a costly project to an agent, who can choose an action from a set of actions $\setactions = \{0, \ldots, n - 1\}$.
The set of possible outcomes is $\setoutcomes = \{ 0, \ldots , m - 1 \}$ {for $m \geq 2$}.
Every outcome $j \in \setoutcomes$ is associated with a reward $r_j \in \mathbb{R}_{\geq 0}$, known to the agent and the principal.}
We assume that the reward for outcome $0$ is $r_0 = 0$, and that the is some outcome $j \in [m]$ with positive reward $r_j > 0$.
{The action chosen by the agent is hidden from the principal, who can observe only the final outcome.}

{Every action $i \in [n]$ is associated with a {\em private} probability distribution $f_i = (f_{i,0}, \ldots, f_{i, \numoutcomes - 1 }) \in \Delta^\numoutcomes$ over the outcomes, and also with a 
{\em private} cost $c_i \in \mathbb{R}_{\geq 0}$ that the agent incurs for taking action $i$.
{We refer to the ensemble of probability distributions} $f=(f_0,\ldots,f_{\numactions - 1})$ {as the agent's \textit{production function}}, 
and {to} $c=(c_0,\ldots,c_{\numoutcomes - 1})$ {as} %be 
the {agent's costs.} %action costs.
The agent's type is the tuple $\theta =(f,c)$ and is known only by the agent.
The principal observes the final outcome, but not the agent's type $\theta$ nor the action taken by the agent. }

{To motivate the {agent} to exert effort, the principal designs a contract}
$t = (t_0, \ldots, t_{\numoutcomes-1}) \in \mathbb{R}^\numoutcomes_{\geq 0}$ determining the agent's payment for each outcome. For a given contract $t$, if the agent with type $\theta = (f, c)$ takes action $i$ resulting in outcome $j$, the agent's utility is  $t_j - c_i$ and the principal's utility is $r_j - t_j$. We define the agent's and principal's expected utilities for action $i$ in contract $t$ as:
\begin{align*}
    u_a((f, c), t, i) &= \sum_{j \in \setoutcomes} f_{i,j} t_j - c_i \\
    u_p((f, c), t, i) &= \sum_{j \in \setoutcomes} f_{i,j} (r_j - t_j)
\end{align*}
We denote the agent's best response to contract $t$ by $i^\star(\theta, t) \in \argmax_{i \in \setactions} u_a(\theta, t, i)$. 
We assume that the agent always selects the best response action, breaking ties in {favor} of the principal, i.e., out of all actions in $\argmax_{i \in \setactions} u_a(\theta, t, i)$, action
$i^\star(\theta, t)$ is the one that maximizes $u_p(\theta, t, i)$. 
Slightly abusing notation, we denote the principal's expected utility for contract $t$ by $u_p(\theta, t) = u_p(\theta, t, i^\star(\theta, t))$, and the principal's expected reward as 
$r_p(\theta, t) = \sum_{j \in [m]} f_{i^\star(\theta, t), j}  r_j$.

As is common in the literature, we assume that for every agent's type $\theta = (f,c)$ the cost of action $0$ is $c_0 = 0$. This ensures that the agent's utility from a contractual relationship is at least $0$ (by non-negativity of payments).
We denote by $\Thetaall = (\Delta^{m})^n \times (\{0\} \times \mathbb{R}_{\geq 0}^{n-1})$ the space of all permissible agent types.

\subsection{Contract Classes} 

An important special case of the contracts defined above is linear contracts, where the principal transfers a fixed fraction of her reward to the agent.

\begin{definition}[Linear Contracts] \label{definition:linear_contract}
We say that a contract $t$ is linear if  $t = (\alpha \cdot r_0, \ldots, \alpha \cdot r_{m-1})$ for some $\alpha \in [0,1]$. We will refer to the linear contract by its parameter $\alpha$.
\end{definition}

The key concept in the analysis of linear contracts is that of critical values \citep{Duetting2021CombinatorialC}.

\begin{definition}[Critical Values]\label{def:critical_definition}
For a given agent type $\theta = (f,c)$, a value of $\alpha \in (0, 1]$ is called a critical value if the principal's expected reward changes at $\alpha$, i.e., 
\begin{align*}
r_p ((f,c), \alpha) \neq r_p ((f,c), \alpha - \epsilon) \quad \text{for any } \epsilon > 0.
\end{align*}
We denote the set of all critical values as ${C}_{f,c} = \{\alpha \in (0, 1]: \alpha \text{ is critical}\}$.
\end{definition}

\citet*{Duetting2021CombinatorialC} made the important observation that $r_p(\theta, \cdot)$ is a non-decreasing step function with the number of steps bounded by $n-1$. 
Formally, they proved: 

\begin{lemma}[\cite{Duetting2021CombinatorialC}]\label{corollary:linear_contracts_critical_values}
For any agent type $\theta = (f,c)$, there exists some $k < n$ and a series of $\alpha$ values $0 = \alpha_0 < \alpha_1 < \ldots < \alpha_k$ such that for every $x \in [0, 1]$, we have $r_p(\theta, x) = r_p(\theta, \alpha_i)$, where $i$ is the maximal index satisfying $\alpha_i \leq x$. In particular, the number of critical values is $|C_{f,c}| \leq n-1$. Moreover, the principal's expected reward $r_p(\theta, \alpha)$ is monotonically non-decreasing in $\alpha$.
\end{lemma}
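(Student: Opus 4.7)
The plan is to view the agent's expected utility as an affine function of the contract parameter $\alpha$ and then invoke the classical upper-envelope structure. For a linear contract with parameter $\alpha \in [0,1]$, the agent's expected utility from action $i$ simplifies to
\[ u_a(\theta, \alpha, i) \;=\; \alpha R_i - c_i, \qquad \text{where } R_i := \sum_{j \in \setoutcomes} f_{i,j} r_j. \]
Each $u_a(\theta, \cdot, i)$ is thus an affine function of $\alpha$ with slope $R_i$ and intercept $-c_i$, and the best response $i^\star(\theta, \alpha)$ is the index of the line attaining the maximum at $\alpha$, with ties broken in favor of the principal (equivalently, in favor of larger $R_i$, since $r_p(\theta, \alpha) = R_{i^\star(\theta, \alpha)}$ and $t_j = \alpha r_j$ gives principal's utility $(1-\alpha) R_i$).

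Second, I would invoke the classical fact that the pointwise maximum of $n$ affine functions on $[0,1]$ is a convex piecewise-linear function with at most $n$ linear pieces, and hence at most $n-1$ interior break points. Enumerating those break points that lie in $(0,1]$ as $0 < \alpha_1 < \cdots < \alpha_k$ with $k \leq n-1$, and setting $\alpha_0 := 0$, the function $i^\star(\theta, \cdot)$ is constant on each subinterval delimited by consecutive $\alpha_i$'s, and therefore so is $r_p(\theta, \cdot) = R_{i^\star(\theta, \cdot)}$. This immediately yields the indexing rule $r_p(\theta, x) = r_p(\theta, \alpha_i)$ for the largest $\alpha_i \leq x$, and since every critical value (point at which $r_p$ actually jumps) must be one of the $\alpha_i$'s, it gives $|C_{f,c}| \leq k \leq n-1$.

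For monotonicity, I would run the standard revealed-preference swap argument. Given $0 \leq \beta < \beta' \leq 1$ with best responses $i := i^\star(\theta, \beta)$ and $i' := i^\star(\theta, \beta')$, the two best-response inequalities read
\[ \beta R_i - c_i \;\geq\; \beta R_{i'} - c_{i'} \qquad \text{and} \qquad \beta' R_{i'} - c_{i'} \;\geq\; \beta' R_i - c_i. \]
Summing and cancelling the cost terms gives $(\beta' - \beta)(R_{i'} - R_i) \geq 0$, so $r_p(\theta, \beta') = R_{i'} \geq R_i = r_p(\theta, \beta)$, as required.

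The main obstacle I anticipate is the careful handling of tie-breaking at the break points. At each $\alpha_i$, two or more of the affine lines $\alpha R_j - c_j$ coincide, so the best response is pinned down only by the tie-breaking rule; one must verify that principal-favoring tie-breaking makes the step function right-continuous, so that $r_p(\theta, \alpha_i)$ correctly represents the value on the interval starting at $\alpha_i$ rather than the one ending at $\alpha_i$. This is a bookkeeping check rather than a conceptual hurdle, and it follows from the revealed-preference comparison above applied at $\beta = \alpha_i$ together with the tie-breaking rule that selects the action maximizing $R_\cdot$.
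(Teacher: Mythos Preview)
The paper does not prove this lemma; it is quoted directly from \cite{Duetting2021CombinatorialC} without argument. Your proposal---writing the agent's utility as the upper envelope of $n$ affine functions of $\alpha$, reading off at most $n-1$ breakpoints, and obtaining monotonicity via the revealed-preference swap---is correct and is essentially the standard proof given in that reference.
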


Another significant class of contracts are those with bounded transfers, where without loss of generality we set
the bound to be $1$.
In this work, we study all three main classes of contracts, summarized in the following definition.

\begin{definition}[Contract Classes] \label{def:contract_classes}
We define: 
\begin{itemize}
    \item The class of linear contracts to be $\mathcal{C}_{\mathrm{linear}} = \{ \alpha \cdot r \,|\, \alpha \in [0,1]\}$.
    \item The class of bounded contracts to be $\Cbounded = [0,1]^m$.
    \item The class of unbounded contracts to be $\Cunbounded=\mathbb{R}_{\geq 0}^m$.
\end{itemize}
\end{definition}

Note that $\Cbounded \subseteq \Cunbounded$ and that $\Clinear \subseteq \Cbounded$ if $r_j \leq 1$ for all outcomes $j \in [m]$.
For any class of contracts $\mathcal{C}$, we define $\mathrm{OPT}(\mathcal{C}, \theta) = \max_{t \in \mathcal{C}} u_p(\theta, t)$ for any agent type $\theta = (f,c)$, and we define $\mathrm{OPT}(\mathcal{C}, \mathcal{D}) = \max_{t \in \mathcal{C}} \mathbb{E}_{\theta \sim \mathcal{D}}[u_p(\theta, t)]$ for any distribution $\mathcal{D}$ over agent types.

We also define the notion of a representation error, which measures the loss in the principal's utility incurred by restricting the contract space to some subset.

\begin{definition}[Representation Error {and $\epsilon$-Approximation}]\label{def:representation_error} Consider two contract classes $\bar{\mathcal{C}} \subseteq \mathcal{C}$. We say that the representation error of $\bar{\mathcal{C}}$ with respect to $\mathcal{C}$ is at most $\epsilon$ if for every distribution $\mathcal{D}$ over the agent's type space, we have $\mathrm{OPT}(\bar{\mathcal{C}}, \mathcal{D}) \geq \mathrm{OPT}(\mathcal{C}, \mathcal{D}) - \epsilon$.
{We say that $\bar{\mathcal{C}}$ is an $\epsilon$-approximation of $\mathcal{C}$ if it has representation error of at most $\epsilon$ with respect to $\mathcal{C}$.}
\end{definition}

\subsection{Binary Outcome Model}\label{sec:binaryoutcome}
A notable special case of the contracting model occurs when there are only two outcomes, i.e., $m = 2$, with rewards $r_0 = 0$ and $r_1 > 0$. We refer to this case as the binary outcome model, and we say that outcome 0 corresponds to failure, and outcome 1 corresponds to success. This special case encompasses many significant scenarios and has been extensively studied (see, e.g., \cite{Duetting2021CombinatorialC,dutting2024combinatorial}).

Previous work (see, e.g., \cite{Duetting2021CombinatorialC,DuettingFTC24}) made two important observations: 
first, that linear contracts are always optimal in the binary outcome model; and second, that when restricting attention to linear contracts, it is without loss of generality to consider the binary outcome model. We formalize and prove these observations as \Cref{lem:binary_outcome} in \Cref{sec:proofs_for_contracts_prelims}.

\section{Learning Theory Preliminaries}\label{sec:learning_theory_preliminaries}

In this section, we describe a general model for learning a parameterized class of real-valued functions with input space $\mathcal{I}$ and parameter space $\mathcal{P}$.
The class of functions is denoted as $\mathcal{F} : \mathcal{I} \times \mathcal{P} \to \mathbb{R}$, and for any parameter $p \in \mathcal{P}$, $\mathcal{F}(\cdot, p)$ is the function parameterized by $p$.

We describe the learning model in terms of a general class of functions $\mathcal{F}$ for clarity, but our focus is on the principal's utility as a function of the agent's type parameterized by chosen contracts. Specifically, we define $\mathcal{F} = u_p$ with the input space $\mathcal{I} \subseteq \Thetaall = (\Delta^{m})^n \times (\{0\} \times \mathbb{R}_{\geq 0}^{n-1})$ representing the agent's type space and the parameter space $\mathcal{P} \subseteq \Cunbounded$ representing the contract space.

Our primary objective is to solve the following problem: There is some unknown distribution $\mathcal{D}$ over the agent types. We want to design efficient learning algorithms that take a limited number of independent samples from $\mathcal{D}$ as input and return some contract $t$ that nearly maximizes the principal's expected utility $\mathbb{E}_{\theta \sim \mathcal{D}}[u_p(\theta, t)]$ with high probability.

In the general model, this corresponds to the following problem: There is a distribution $\mathcal{D}$ over the input space $\mathcal{I}$, and we aim to find a parameter $p \in \mathcal{P}$ that  nearly maximizes $\mathbb{E}_{i \sim \mathcal{D}}[\mathcal{F}(i,p)]$.
Slightly abusing the notation,  we denote $\mathrm{OPT}(\mathcal{F}, \mathcal{D}) = \max_{p \in \mathcal{P}} \mathbb{E}_{i \sim \mathcal{D}}[\mathcal{F}(i,p)]$.

\subsection{Uniform Learnability} 

The first crucial concept for establishing our result is uniform learnability.

\begin{definition}[Uniform Learnability]
    We say that $\mathcal{F}$ is $(\epsilon, \delta, N)$-uniformly learnable for some $\epsilon, \delta, N > 0$ if for every distribution $\mathcal{D}$ over the input space $\mathcal{I}$ it holds that:
    \begin{align*}
      \mathbb{P}_{S \sim \mathcal{D}^N}\left[\forall_{p \in \mathcal{P}} \left| \frac{1}{|S|} \sum_{i \in S} \mathcal{F}(i,p) - \mathbb{E}_{i \sim \mathcal{D}}[\mathcal{F}(i,p)]\right| \leq \epsilon\right] \geq 1-\delta.
  \end{align*}
\end{definition} \label{def:uniform_learnability}

In words, $\mathcal{F}$ is $(\epsilon, \delta, N)$-uniformly learnable if, given a set of $N$ samples from the input distribution, with probability at least $1 - \delta$, the expected value of all members of that class $\mathcal{F}$ is at most $\epsilon$ away from their empirical value on the sample set. 

The essential connection between our problem and uniform learnability is that, for a uniformly learnable class of functions, the parameter that maximizes the function with respect to the empirical distribution is near-optimal with respect to the true distribution with high probability. This is formalized in the following standard lemma. 
We include a proof for completness in \Cref{sec:proofs_from_learning_theory_preliminaries}.

\begin{restatable}
{lemma}{empiricalmaximization}\label{lem:empirical_maximization}
Let $\mathcal{F}$ be a class of functions that is $(\epsilon, \delta, N)$-uniformly learnable for some $\epsilon, \delta, N > 0$. Consider any distribution $\mathcal{D}$ over $\mathcal{I}$. Let $S \sim \mathcal{D}^N$ be a collection of $N$ independent samples drawn from $\mathcal{D}$, and let $\widehat{p} \in \mathcal{P}$ be the parameter that maximizes $(1/|S|) \cdot \sum_{i \in S} \mathcal{F}(i, p)$ over all parameters $p \in \mathcal{P}$. Then, with probability at least $1-\delta$ over the random samples, we have $\mathbb{E}_{i \sim \mathcal{D}}[\mathcal{F}(i, \widehat{p})] \geq  \mathrm{OPT}(\mathcal{F}, \mathcal{D}) - 2 \cdot \epsilon$.
\end{restatable}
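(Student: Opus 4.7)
The plan is to prove this via the standard ``empirical risk minimization'' chain of inequalities that converts uniform convergence into a guarantee on the empirical maximizer. The key observation is that the uniform learnability assumption gives us a two-sided approximation of the population expectation $F(p) := \mathbb{E}_{i \sim \mathcal{D}}[\mathcal{F}(i, p)]$ by its empirical counterpart $\widehat{F}(p) := (1/|S|) \sum_{i \in S} \mathcal{F}(i, p)$ \emph{simultaneously} for every $p \in \mathcal{P}$. Since this holds for $\widehat{p}$ and for any optimizer $p^\star$ at the same time on the same sample set, a direct three-inequality chain yields the result, losing $\epsilon$ at each of the two comparisons with the empirical average.

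Concretely, I would first invoke $(\epsilon, \delta, N)$-uniform learnability to define the good event
\[
E := \Bigl\{ \sup_{p \in \mathcal{P}} \bigl| \widehat{F}(p) - F(p) \bigr| \leq \epsilon \Bigr\},
\]
and note that $\Pr_{S \sim \mathcal{D}^N}[E] \geq 1 - \delta$ by definition. Next, I would fix any $p^\star \in \argmax_{p \in \mathcal{P}} F(p)$ achieving $F(p^\star) = \mathrm{OPT}(\mathcal{F}, \mathcal{D})$, which exists because the statement defines $\mathrm{OPT}$ as a maximum. Condition on $E$; then I would carry out the three-step chain
\[
F(\widehat{p}) \;\geq\; \widehat{F}(\widehat{p}) - \epsilon \;\geq\; \widehat{F}(p^\star) - \epsilon \;\geq\; F(p^\star) - 2\epsilon \;=\; \mathrm{OPT}(\mathcal{F}, \mathcal{D}) - 2\epsilon,
\]
where the first and third inequalities use the event $E$ applied to $\widehat{p}$ and $p^\star$ respectively, and the middle inequality uses the definition of $\widehat{p}$ as an empirical maximizer.

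There is essentially no technical obstacle here; the only subtle point to address in writing is ensuring that the maxima defining $\widehat{p}$ and $p^\star$ are actually attained, which I take for granted from the statement's use of $\max$ rather than $\sup$ (otherwise one would replace them by $(\epsilon/N)$-near-maximizers and absorb an arbitrarily small slack, yielding the same bound). Putting the pieces together gives the claimed bound with probability at least $1 - \delta$ over the sample.
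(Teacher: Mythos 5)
Your proposal is correct and matches the paper's proof essentially verbatim: both condition on the uniform-convergence event and run the same three-inequality chain $F(\widehat{p}) \geq \widehat{F}(\widehat{p}) - \epsilon \geq \widehat{F}(p^\star) - \epsilon \geq F(p^\star) - 2\epsilon$. Your remark about attainment of the maxima is a reasonable extra care that the paper simply takes for granted.
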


\subsection{Pseudo-Dimension}

In order to establish uniform learnability guarantees, we use the notion of the pseudo-dimension. Before we define pseudo-dimension, we need to define the concept of shattering.

\begin{definition}[Shattering]
The class of functions $\mathcal{F}$ shatters a set of $k$ inputs $\{i_1, \ldots, i_k\} \subseteq \mathcal{I}$ if there are some thresholds $\tau_1, \ldots, \tau_k \in \mathbb{R}$ such that for all $S \subseteq \{1, \ldots, k\}$ there is a parameter $p \in \mathcal{P}$ so that (1) for all $j \in S$ we have $F(i_j, p) \geq \tau_j$ and (2) for all $j \notin S$ we have $F(i_j, p) < \tau_j$.
\end{definition}

We are now ready to define pseudo-dimension of $\mathcal{F}$.

\begin{definition}[Pseudo-Dimension of Real-Valued Functions]
    The pseudo-dimension of a class of functions $\mathcal{F}$ is the size of the largest subset of the input space $\mathcal{I}$ that $\mathcal{F}$ can shatter. We denote the pseudo-dimension of $\mathcal{F}$ as $\pdim{\mathcal{F}}$.
\end{definition} 

This definition can be extended to classes of contracts as follows.

\begin{definition}[Pseudo-Dimension of Contracts]\label{def:pdim_of_contracts}
    The pseudo-dimension of a class of contracts $\mathcal{C} \subseteq \mathbb{R}^m_{\geq 0}$ 
    for an agent's type space $\Theta \subseteq \Thetaall$
    is the pseudo-dimension of the class of functions given by $\mathcal{F} = u_p$ where the input space is $ \mathcal{I} = \Theta$ and the parameter space is $\mathcal{P} = \mathcal{C}$. Slightly abusing the notation, we denote the pseudo-dimension of $\mathcal{C}$ for type space $\Theta$ as $\pdim{ \mathcal{C}, \Theta}$.
    When discussing the general type space $\Theta_{\mathrm{all}}$, we omit the type space, and  simply write $\pdim{\mathcal{C}} = \pdim{\mathcal{C}, \Theta_{\mathrm{all}}}$.
\end{definition}

We first prove two basic observations following from the definition of the pseudo-dimension. For proofs, see \Cref{sec:proofs_from_learning_theory_preliminaries}.

\begin{restatable}{lemma}{pdimproperties}\label{lem:pdim_properties}
Let $\mathcal{F}$ be a class of functions with input space $\mathcal{I}$ and parameter space $\mathcal{P}$. Then:
    \begin{enumerate}[label=(\alph*)]
        \item It holds that $\pdim{\mathcal{F}} \leq \log |\mathcal{F}|$. \label{enum:pdim_size}
        \item For any classes of functions such that  $\mathcal{F'} \subseteq \mathcal{F}$ it holds that $\pdim{\mathcal{F'}} \leq \pdim{\mathcal{F}}$.
        \label{enum:pdim_mono}
    \end{enumerate}
\end{restatable}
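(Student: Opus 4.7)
The plan is to prove both parts directly from the definition of shattering, as each is a straightforward combinatorial observation.

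For part \ref{enum:pdim_size}, I will show that shattering a set of $k$ inputs forces $\mathcal{F}$ to contain at least $2^k$ distinct functions. Suppose $\mathcal{F}$ shatters $\{i_1, \ldots, i_k\}$ with thresholds $\tau_1, \ldots, \tau_k$. For each subset $S \subseteq \{1, \ldots, k\}$, fix a witnessing parameter $p_S \in \mathcal{P}$. The functions $\mathcal{F}(\cdot, p_S)$ and $\mathcal{F}(\cdot, p_{S'})$ must differ whenever $S \neq S'$, since on any $i_j$ with $j \in S \triangle S'$ one of the two values is $\geq \tau_j$ and the other is $< \tau_j$. Hence the map $S \mapsto \mathcal{F}(\cdot, p_S)$ is injective, so $|\mathcal{F}| \geq 2^k$. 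Taking $k = \pdim{\mathcal{F}}$ and applying a logarithm yields the bound.

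For part \ref{enum:pdim_mono}, I will argue that any shattering of $\mathcal{F}'$ lifts to a shattering of $\mathcal{F}$. Formally, writing $\mathcal{F}'$ as the restriction of $\mathcal{F}$ to a subset $\mathcal{P}' \subseteq \mathcal{P}$ of parameters, suppose $\mathcal{F}'$ shatters $\{i_1, \ldots, i_k\}$ with thresholds $\tau_1, \ldots, \tau_k$ and witnessing parameters $\{p_S\}_{S \subseteq [k]} \subseteq \mathcal{P}'$. Since $\mathcal{P}' \subseteq \mathcal{P}$, the very same parameters and thresholds witness that $\mathcal{F}$ shatters $\{i_1, \ldots, i_k\}$, so $\pdim{\mathcal{F}} \geq k$. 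Taking $k = \pdim{\mathcal{F}'}$ gives the claim.

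Neither part presents a real obstacle, since both are immediate consequences of the definitions; I expect the proof to be only a few lines long. The only small subtlety is making sure in part \ref{enum:pdim_mono} that the inclusion $\mathcal{F}' \subseteq \mathcal{F}$ is interpreted at the level of parameter spaces so that witnesses transfer verbatim, which I will state explicitly for clarity.
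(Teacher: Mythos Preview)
Your proposal is correct and follows essentially the same approach as the paper: both parts are derived directly from the definition of shattering by counting labelings for (a) and transferring witnesses for (b). The only cosmetic difference is that the paper phrases (b) contrapositively (if $\mathcal{F}$ fails to shatter then so does $\mathcal{F}'$), whereas you argue the forward direction; your version is arguably cleaner, and your explicit remark that $\mathcal{F}' \subseteq \mathcal{F}$ means restricting to a sub-parameter-space $\mathcal{P}' \subseteq \mathcal{P}$ is a helpful clarification the paper leaves implicit.
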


The following theorem says that pseudo-dimension bounds allow us to establish uniform learning guarantees, and hence sample complexity bounds (see \Cref{lem:empirical_maximization}).
For a proof, see, e.g., \citet*{DBLP:journals/jcss/LiLS01}.

\begin{theorem}[\cite{DBLP:journals/jcss/LiLS01}] \label{theorem:uniform_learnability}
    Fix any class of functions $\mathcal{F}$ with pseudo-dimension $\pdim{\mathcal{F}}$.
    Let $H \in \mathbb{R}_{> 0}$ be an upper bound on the absolute values of $\mathcal{F}$, i.e., $H$ is so that $|\mathcal{F}(i,p)| \leq H$ for all $i \in \mathcal{I}$ and $p \in \mathcal{P}$. We get the following equivalent guarantees:
    \begin{enumerate}
        \item For every $\epsilon > 0$ and $\delta > 0$, the class of functions $\mathcal{F}$ is $(\epsilon, \delta, N)$-uniformly learnable for:
        \begin{align*}
             N = O\left(\left(\frac{H}{\epsilon}\right)^2\left( 
             \pdim{\mathcal{F}} + \ln \left( \frac{1}{\delta}\right)\right) \right).
        \end{align*}
        \item  For every $\delta > 0$ and $N > 0$, the class of functions $\mathcal{F}$ is $(\epsilon, \delta, N)$-uniformly learnable for:\label{enum:epsilon_unif_learnability}
        \begin{align*}
             \epsilon = O\left(\frac{H}{\sqrt{N}}\left( \sqrt{\pdim{\mathcal{F}}} + 
            \sqrt{\ln \left( \frac{1}{\delta}\right)} \right) \right).
        \end{align*}
    \end{enumerate}
\end{theorem}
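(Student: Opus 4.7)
The plan is to prove statement (1); statement (2) is equivalent and follows by inverting the resulting inequality, solving for $\epsilon$ in terms of $N$. For (1) I would deploy the classical empirical process argument for real-valued function classes, organized into three stages: symmetrization, an entropy estimate via the pseudo-dimension, and concentration, with chaining invoked to obtain the tight rate.

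First I would apply ghost-sample symmetrization. Drawing an independent second sample $S' \sim \mathcal{D}^N$ and attaching Rademacher signs $\sigma_k \in \{\pm 1\}$ to the paired differences, a standard argument shows that, for $N$ large enough that $N\epsilon^2 \geq 2H^2$,
\[
\mathbb{P}\!\left[\sup_{p} \left|\tfrac{1}{N}\textstyle\sum_{i \in S} \mathcal{F}(i,p) - \mathbb{E}_{\mathcal{D}}[\mathcal{F}(\cdot,p)]\right| > \epsilon\right] \;\leq\; 4\,\mathbb{P}\!\left[\sup_{p} \left|\tfrac{1}{N}\textstyle\sum_{k=1}^{N} \sigma_k \mathcal{F}(i_k, p)\right| > \tfrac{\epsilon}{4}\right],
\]
where the right-hand probability is taken conditionally on the fixed $2N$-point input sample $S \cup S'$.

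Second, I would control the Rademacher process on this finite sample via covering numbers of the restricted class. Pollard's pseudo-dimension analog of the Sauer--Shelah lemma yields
\[
\mathcal{N}\!\left(\tau,\, \mathcal{F}|_{S \cup S'},\, L_\infty\right) \;\leq\; \left(\tfrac{c\, N\, H}{\tau \cdot d}\right)^{d}, \qquad d := \pdim{\mathcal{F}},
\]
for an absolute constant $c$. A union bound over an $(\epsilon/8)$-cover combined with Hoeffding's inequality at each center (each $\sigma_k \mathcal{F}(i_k,p) \in [-H,H]$, so a fixed $p$ deviates with probability at most $2\exp(-N\epsilon^2/(c'H^2))$) already yields uniform convergence, but at the loose sample complexity $N = O((H/\epsilon)^2(d\log(H/\epsilon) + \log(1/\delta)))$.

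The remaining obstacle, and the only nontrivial one, is removing the spurious $\log(H/\epsilon)$ factor to match the rate stated in the theorem. I would do so by Dudley's chaining: instead of a single-scale cover, bound the Rademacher complexity by the entropy integral $\frac{c}{\sqrt{N}}\int_0^H \sqrt{\log \mathcal{N}(\tau, \mathcal{F}, L_2(P_N))}\, d\tau$. Using the estimate $\log \mathcal{N}(\tau, \mathcal{F}, L_2(P_N)) = O(d \log(H/\tau))$ inherited from the previous step, the integral evaluates to $O(H\sqrt{d/N})$ with no surplus logarithm. A bounded-differences (McDiarmid) argument then converts this expected Rademacher bound into the high-probability deviation $O\!\left(H(\sqrt{d/N} + \sqrt{\log(1/\delta)/N})\right)$. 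Setting this equal to $\epsilon$ and solving for $N$ gives statement (1); solving the same inequality for $\epsilon$ in terms of $N$ and $\delta$ gives statement (2).
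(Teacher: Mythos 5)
The paper does not actually prove this theorem; it is quoted from \citet{DBLP:journals/jcss/LiLS01}, whose own derivation is a direct combinatorial packing/``relative deviation'' argument rather than chaining. Your outline follows the other standard route (symmetrization, covering numbers from the pseudo-dimension, Dudley's entropy integral, McDiarmid), and that route does yield the stated bound, so there is no objection to the overall strategy. The loose single-scale union-bound argument you describe as an intermediate step, giving $N = O((H/\epsilon)^2(d\log(H/\epsilon)+\log(1/\delta)))$, is also correct and would in fact already suffice for every application in this paper, since the authors absorb such logarithmic factors throughout.

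The one genuine gap is in the chaining step. The covering-number estimate you feed into Dudley's integral, $\log \mathcal{N}(\tau,\mathcal{F},L_2(P_N)) = O(d\log(H/\tau))$, is \emph{not} inherited from the Pollard-type $L_\infty$ bound $\mathcal{N}_\infty(\tau,\mathcal{F}|_{S\cup S'}) \leq (cNH/(\tau d))^d$ stated just before: taking logarithms there gives $O(d\log(NH/\tau))$, and carrying that extra $\log N$ through the entropy integral leaves you with $\epsilon = O(H\sqrt{d\log N/N})$, i.e.\ exactly the spurious logarithmic factor you set out to remove. To get the $N$-free entropy estimate you must invoke Haussler's packing bound for classes of finite pseudo-dimension, which controls the $L_1$ (hence $L_2$) empirical covering number by $(cH/\tau)^{O(d)}$ uniformly in $N$; this is a separate, nontrivial theorem, not a corollary of the $L_\infty$ bound. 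With that citation repaired, the argument goes through, and statement (2) does follow from (1) by solving the same inequality for $\epsilon$, as you say.
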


\subsection{Delineability}

The main technical tool that we use to derive pseudo-dimension bounds is the notion of delineability due to \citet*{BalcanSV18}.

\begin{definition}[Delineability] \label{definition:delineable}
    The class of functions $\mathcal{F}$ is $(d,k)$-delineable for some $d, t \in \mathbb{N}$ if:
    \begin{enumerate}
        \item the parameter space $\mathcal{P} = \mathbb{R}^d$, and
        \item for every input $i \in \mathcal{I}$, 
        there is a set $\mathcal{H}$ of $k$ hyperplanes so that the function $\mathcal{F}(i, \cdot)$ on the parameter space $\mathcal{P}$ is linear over every connected component of $\mathcal{P} \setminus \mathcal{H}$.
    \end{enumerate}
\end{definition} 

\citet*{BalcanSV18} show the following crucial result:

\begin{theorem}[\cite{BalcanSV18}]\label{theorem:delineable_pseudo-dimensions}
    For any $(d,k)$-delineable class of functions $\mathcal{F}$, we have that the pseudo-dimension of $\mathcal{F}$ is $\pdim{\mathcal{F}} \leq O(d\log(dk))$.
\end{theorem}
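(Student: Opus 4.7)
The plan is to bound $N = \pdim{\mathcal{F}}$ by a double application of the classical hyperplane-arrangement counting bound. Suppose $\mathcal{F}$ shatters an input set $\{i_1, \ldots, i_N\}$ with witness thresholds $\tau_1, \ldots, \tau_N$. The central quantity to control is the number of distinct sign patterns
\[
\bigl(\operatorname{sign}(\mathcal{F}(i_j, p) - \tau_j)\bigr)_{j=1}^N
\]
that are realizable as $p$ ranges over $\mathcal{P} = \mathbb{R}^d$; shattering forces this number to be at least $2^N$, so any polynomial upper bound will yield a bound on $N$.

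For each input $i_j$, delineability supplies a set $\mathcal{H}_j$ of at most $k$ hyperplanes in $\mathbb{R}^d$ such that $\mathcal{F}(i_j, \cdot)$ is affine on every connected component of $\mathbb{R}^d \setminus \mathcal{H}_j$. I would form the global arrangement $\mathcal{H} = \bigcup_{j=1}^N \mathcal{H}_j$ of at most $Nk$ hyperplanes. By the standard Schl\"afli/Zaslavsky bound, the number of cells of $\mathbb{R}^d \setminus \mathcal{H}$ is at most $O((Nk)^d)$. Within any single cell $C$, all $N$ functions $\mathcal{F}(i_j, \cdot)$ are simultaneously affine; hence the sign pattern of the $N$ quantities $\mathcal{F}(i_j, p) - \tau_j$ on $C$ is determined by an arrangement of $N$ genuine hyperplanes in $\mathbb{R}^d$, which contributes at most $O(N^d)$ further sign patterns. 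Multiplying and summing over the cells of $\mathcal{H}$ gives a global bound of $O((Nk)^d \cdot N^d) = O(k^d N^{2d})$ realizable sign patterns.

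Combining with the shattering lower bound, we obtain $2^N \leq O(k^d N^{2d})$, and solving this inequality by the usual trick (take logs and absorb the $\log N$ term into a constant multiple of $d \log(dk)$) yields $N = O(d \log(dk))$, which is the desired pseudo-dimension bound.

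The main obstacle is conceptual rather than calculational: one cannot simply append the ``threshold loci'' $\{p : \mathcal{F}(i_j, p) = \tau_j\}$ to the global arrangement, since in general these are not hyperplanes but only piecewise-affine level sets that become affine \emph{within each cell} of the delineability arrangement. Delineability is precisely the structural hypothesis that lets one apply the hyperplane-arrangement bound a second time locally inside each cell and then multiply, producing the factor $(Nk)^d \cdot N^d$ instead of a single bound over $Nk + N$ global hyperplanes; keeping the two layers of the arrangement bookkeeping separate is what makes the argument go through.
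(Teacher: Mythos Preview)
The paper does not actually prove this theorem; it is stated as a cited result from \cite{BalcanSV18} and used as a black box. Your sketch is essentially the standard proof from that reference: overlay the $Nk$ delineability hyperplanes to get $O((Nk)^d)$ cells, observe that within each cell the $N$ threshold conditions become genuine affine halfspaces contributing $O(N^d)$ further sign patterns, multiply, and solve $2^N \le O(k^d N^{2d})$ for $N$. The argument is correct, and your remark about why the threshold loci cannot simply be thrown into the global arrangement (they are only piecewise-affine) correctly identifies the reason for the two-layer structure.
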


\section{Pseudo-Dimension Results}\label{sec:pdim}

In this section, we provide a tight analysis of the pseudo-dimension of the main contract classes ($\Clinear$, $\Cbounded$, $\Cunbounded$). Moreover, by considering suitable discretizations of linear and bounded contracts ($\Lepsilon$ for $\Clinear$ and $\Bepsilon$ for bounded contracts), we derive the Pareto frontier between the pseudo-dimension and the representation error with respect to both $\Clinear$ and $\Cbounded$.

First, in \Cref{sec:pdim_upper_bounds}, we provide upper bounds on the pseudo-dimension of the full contract classes ($\Clinear$ , $\Cbounded$ and $\Cunbounded$) and their discretizations ($\Lepsilon$ and $\Bepsilon$). Then, in \Cref{sec:pdim_lower_bounds}, we establish nearly matching lower bounds, demonstrating the tightness of our results.

\subsection{Upper Bounds on the Pseudo-Dimension}\label{sec:pdim_upper_bounds}

\paragraph{Linear Contracts.} We first state the upper bound on the pseudo-dimension of the class of linear contracts. Crucially, our upper bound depends on the number of critical values (see \Cref{def:critical_definition}), which is at most $n-1$ in general (see \Cref{corollary:linear_contracts_critical_values}). Moreover, better bounds on the number of critical values are known for important special cases, which we discuss in \Cref{sec:combinatorial_actions}.

\begin{theorem}[Pseudo-Dimension Upper Bound for $\Clinear$] \label{theorem:linear_contract_pdim}
For any agent's type space $\Theta$, let $k$ be an upper bound on the number of critical values for any type in $\Theta$, i.e., $k$ is such that $|C_{f,c}| \le k$ for all $\theta = (f,c) \in \Theta$.
Then, the pseudo-dimension
of the class of linear contracts for the type space $\Theta$ is 
$\pdim{\mathcal{C}_{\mathrm{linear}}, \Theta} = O (\log k)$. In particular, it holds that $\pdim{\mathcal{C}_{\mathrm{linear}}} = O (\log n)$. 
\end{theorem}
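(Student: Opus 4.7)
The plan is to reduce the statement directly to the delineability framework of \citet{BalcanSV18} (\Cref{definition:delineable} and \Cref{theorem:delineable_pseudo-dimensions}), applied in dimension $d = 1$. A linear contract is fully specified by the single scalar $\alpha \in [0,1]$, so the parameter space is one-dimensional; in $\mathbb{R}^1$ the ``hyperplanes'' demanded by the delineability definition are just points, which is precisely what the critical-value structure of linear contracts provides.

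The key step is to show that, for every fixed type $\theta = (f,c) \in \Theta$, the map $\alpha \mapsto u_p(\theta, \alpha)$ is piecewise linear with breakpoints contained in $C_{f,c}$. For any fixed action $i$, the principal's utility under the linear contract $\alpha$ is $\sum_{j} f_{i,j}(r_j - \alpha r_j) = (1-\alpha)\sum_{j} f_{i,j} r_j$, which is affine in $\alpha$. By \Cref{corollary:linear_contracts_critical_values}, the expected reward $r_p(\theta, \alpha)$ is constant on each open interval between consecutive critical values, so $u_p(\theta, \alpha) = (1-\alpha)\, r_p(\theta, \alpha)$ is affine on each such interval. Because $|C_{f,c}| \le k$ for every $\theta \in \Theta$, this exhibits $\Clinear$ as a $(1, k)$-delineable class on $\Theta$, and \Cref{theorem:delineable_pseudo-dimensions} immediately yields $\pdim{\Clinear, \Theta} = O(\log k)$. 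The second claim $\pdim{\Clinear} = O(\log n)$ follows by specializing to $\Theta = \Thetaall$, where $k \leq n-1$ by \Cref{corollary:linear_contracts_critical_values}.

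I expect the main subtlety to be reconciling the fact that the parameter space of linear contracts is the compact interval $[0,1]$ whereas \Cref{definition:delineable} is phrased for all of $\mathbb{R}^d$. I would handle this either by extending $u_p(\theta, \cdot)$ naturally to $\mathbb{R}$ (outside $[0,1]$ the best response is eventually a single action, contributing at most one further breakpoint on each side and not affecting the asymptotics), or by appealing to monotonicity of the pseudo-dimension under restriction of the parameter space via \Cref{lem:pdim_properties}\ref{enum:pdim_mono}. A secondary point to verify is that tie-breaking at the finitely many critical values themselves is immaterial, since the linearity requirement is only on each connected component of the complement of the hyperplane set and the set of parameters where ties occur has measure zero.
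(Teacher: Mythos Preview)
Your proposal is correct and follows essentially the same approach as the paper: both arguments show that $\Clinear$ is $(1,O(k))$-delineable by observing that $u_p(\theta,\alpha)=(1-\alpha)\,r_p(\theta,\alpha)$ is affine between consecutive critical values (via \Cref{corollary:linear_contracts_critical_values}), and then invoke \Cref{theorem:delineable_pseudo-dimensions}. The paper writes $(1,k+1)$ rather than $(1,k)$ and does not explicitly discuss the $[0,1]$ versus $\mathbb{R}$ or tie-breaking issues you raise, but these are cosmetic differences that do not affect the argument or the bound.
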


\begin{proof}
We first argue that the class of linear contracts $\mathcal{C}_{\text{linear}}$
is $(1, k + 1 )$-delineable.

Fix any agent type $\theta = (f,c) \in \Theta$.
According to
\Cref{corollary:linear_contracts_critical_values},
there exists a series of $\alpha$ values $0 = \alpha_0 < \alpha_1 < \ldots < \alpha_k \le 1$ such that $r_p((f,c), \cdot)$ is a step function with steps exactly at those $\alpha$ values.
The principal's expected utility for agent type $(f, c)$ is $u_p((f,c), \alpha) = (1 - \alpha) \cdot r_p((f,c), \alpha)$. 
 Thus, for any $\alpha \in [\alpha_j, \alpha_{j+1})$, the principal's utility is linear in $\alpha$.
Therefore, there exists a set $\mathcal{H}$ of at most $k+1$ hyperplanes (precisely the critical values of $\alpha$), such that for any connected component $\mathcal{P}'$ of $\mathcal{P} \setminus \mathcal{H}$, the principal's expected utility for agent type $(f, c)$ is linear over $\mathcal{P}'$.
According to \Cref{definition:delineable}, $\mathcal{C}_{\text{linear}}$ is $(1, k + 1)$-delineable.

From \Cref{theorem:delineable_pseudo-dimensions}, we have that $\pdim{\Clinear, \Theta } = O (\log k).$
For the second result, we can apply \Cref{corollary:linear_contracts_critical_values} to get that the number of critical values is always at most $n-1$. 
\end{proof}

\paragraph{Discretizing Linear Contracts.} 
Our first pseudo-dimension bound in this section (\Cref{theorem:linear_contract_pdim}) depends on the number of actions, $\numactions$, which can be problematic in settings with large or continuous action spaces. 
To address this, we restrict oursearch space to a suitably discretized subclass of linear contracts. We begin by defining this discretized class.

\begin{definition}[$\epsilon$-Discrete Linear Contracts]\label{def:t-discretized-linear}
{Fix any $0 < \epsilon \leq 1$.
We say that a linear contract $\alpha$ is $\epsilon$-discrete if $\alpha = 1$ or  $\alpha = k \cdot \epsilon$ for some $k \in \mathbb{N}_0$ such that $k \cdot \epsilon \le 1$. 
We denote by $\Lepsilon = \{k \cdot \epsilon: k \in \mathbb{N}_{0}, k \cdot \epsilon \le 1 \} \cup \{ 1 \} $ the set of all $\epsilon$-discrete linear contracts.}
\end{definition}

First, we establish a representation error guarantee (see \Cref{def:representation_error}), showing that there is always an $\epsilon$-discrete linear contract at most $\epsilon$ away from the optimal contract.

\begin{lemma}
\label{lemma:t_contracts_representation_error}
The class of $\epsilon$-discrete contracts $\Lepsilon$ is an $\epsilon$-approximation of $\Clinear$, i.e., the representation error of $\Lepsilon$ with respect to $\Clinear$ is at most $\epsilon$.
\end{lemma}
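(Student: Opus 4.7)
My plan is to round any optimal linear contract $\alpha^*$ \emph{upward} to the nearest $\epsilon$-discrete contract. Using the monotonicity of $r_p(\theta, \cdot)$ provided by \Cref{corollary:linear_contracts_critical_values}, I can argue that this rounding costs the principal at most $\epsilon$ per type, under the natural normalization $\max_j r_j \leq 1$ implicit in the paper (the setting in which $\Clinear \subseteq \Cbounded$).

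Concretely, I would fix any distribution $\mathcal{D}$ over agent types and let $\alpha^* \in [0,1]$ attain $\mathrm{OPT}(\Clinear, \mathcal{D})$ (principal-favorable tie-breaking makes $u_p(\theta,\cdot)$ upper semi-continuous in $\alpha$, so the maximum is attained; otherwise one works with a near-optimal $\alpha^*$ and lets the slack vanish). Then I define
\[
\alpha' \;=\; \min\{\alpha \in \Lepsilon : \alpha \geq \alpha^*\},
\]
which is well-defined since $1 \in \Lepsilon$ and satisfies $0 \leq \alpha' - \alpha^* \leq \epsilon$.

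For each $\theta \in \Thetaall$, monotonicity gives $r_p(\theta, \alpha') \geq r_p(\theta, \alpha^*)$, so
\begin{align*}
u_p(\theta, \alpha^*) - u_p(\theta, \alpha')
&= (1 - \alpha^*)\, r_p(\theta, \alpha^*) \;-\; (1 - \alpha')\, r_p(\theta, \alpha') \\
&\leq (1 - \alpha^*)\, r_p(\theta, \alpha^*) \;-\; (1 - \alpha')\, r_p(\theta, \alpha^*) \\
&= (\alpha' - \alpha^*)\, r_p(\theta, \alpha^*) \;\leq\; \epsilon,
\end{align*}
where the last step uses $\alpha' - \alpha^* \leq \epsilon$ and $r_p(\theta, \alpha^*) \leq \max_j r_j \leq 1$. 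Taking expectations over $\theta \sim \mathcal{D}$ then yields
\[
\mathrm{OPT}(\Lepsilon, \mathcal{D}) \;\geq\; \mathbb{E}_{\theta \sim \mathcal{D}}[u_p(\theta, \alpha')] \;\geq\; \mathrm{OPT}(\Clinear, \mathcal{D}) - \epsilon,
\]
which is exactly the claimed representation bound.

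The only place I expect to need care — more a warning than a real obstacle — is the \emph{direction} of rounding. Rounding downward to the nearest $\epsilon$-discrete value would be tempting but unsafe, since $r_p(\theta, \cdot)$ is a step function whose upward jumps at critical values can be arbitrarily large; a critical value sitting strictly between the chosen discrete point and $\alpha^*$ could wipe out an unbounded amount of expected reward. Rounding upward circumvents this entirely: monotonicity guarantees we keep the reward at least as large, and the only cost is the $\epsilon$-shrinkage in the retention factor $(1-\alpha)$, which is bounded by $\epsilon$ precisely because the retained reward is at most $1$.
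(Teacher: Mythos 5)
Your proposal is correct and follows essentially the same argument as the paper: round the optimal $\alpha^\star$ up to the nearest element of $\Lepsilon$, invoke the monotonicity of $r_p(\theta,\cdot)$ from \Cref{corollary:linear_contracts_critical_values}, and bound the loss by $(\widehat{\alpha}-\alpha^\star)\cdot r_p(\theta,\alpha^\star) \leq \epsilon$ using the normalization $r_j \leq 1$. Your remark on why downward rounding would fail is a nice addition but the core chain of inequalities is identical to the paper's.
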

\begin{proof}
Let $\mathcal{D}$ be a distribution over the agent types.
Let $\alpha^\star \in \mathcal{C}_{\text{linear}}$ be the optimal linear contract for $\mathcal{D}$.
Let $\widehat{\alpha} = \min \{ \alpha \in \mathcal{L}_\epsilon: \alpha \ge \alpha^\star \}$.
Note that $\widehat{\alpha}$ is well-defined because $\alpha^\star \le 1$ and $1 \in \mathcal{L}_\epsilon$.
Furthermore, $\alpha^\star \le \widehat{\alpha} \le \alpha^\star + \epsilon$. Observe that:
\begin{align*}
    u_{p}(\mathcal{D}, \widehat{\alpha}) &= \mathbb{E}_{\theta \sim \mathcal{D}}\left[ u_p(\theta, \widehat{\alpha}) \right] \\
    & = \mathbb{E}_{\theta \sim \mathcal{D}}\left[ (1 - \widehat{\alpha}) \cdot r_p (\theta, \widehat{\alpha}) \right]  \\ 
    & \ge  \mathbb{E}_{\theta \sim \mathcal{D}}\left[ (1 - \widehat{\alpha}) \cdot r_p (\theta, \alpha^\star) \right]  && (\text{by \Cref{corollary:linear_contracts_critical_values} since $\alpha^\star \leq \widehat{\alpha}$})\\ 
    & \ge \mathbb{E}_{\theta \sim \mathcal{D}}\left[ (1 - \alpha^\star - \epsilon ) \cdot r_p (\theta, \alpha^\star) \right]  && (\text{since } \widehat{\alpha} \leq \alpha^\star + \epsilon) \\ 
    & = \mathbb{E}_{\theta \sim \mathcal{D}}\left[ u_{p}(\theta, \alpha^\star) - \epsilon \cdot r_p (\theta, \alpha^\star) \right]  \\ 
    & \ge  \mathbb{E}_{\theta \sim \mathcal{D}}\left[ u_{p}(\theta, \alpha^\star) - \epsilon \cdot 1 \right] && (\text{since $r_j \leq 1$ for all $j \in \setoutcomes$})\\
    &= \mathrm{OPT}(\mathcal{C}_{\mathrm{linear}}, \theta) - \epsilon && (\text{by the choice of } \alpha^\star)
\end{align*}
which proves the lemma.
\end{proof}

Next, we establish a pseudo-dimension upper bound for the class:

\begin{theorem}[Pseudo-Dimension Upper Bound for $\Lepsilon$] \label{theorem:t_contracts_true_pd}
    The class of $\epsilon$-discrete linear contracts $\Lepsilon$ has pseudo-dimension at most $\pdim{\Lepsilon} = O(\min( \log n, \log (1/\epsilon)))$.
\end{theorem}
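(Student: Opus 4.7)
The statement to prove is $\pdim{\Lepsilon} = O(\min(\log n, \log(1/\epsilon)))$, and my plan is to establish the two bounds $\pdim{\Lepsilon} = O(\log n)$ and $\pdim{\Lepsilon} = O(\log(1/\epsilon))$ separately and take their minimum. Both bounds should follow immediately from tools that are already in place, namely \Cref{lem:pdim_properties} and \Cref{theorem:linear_contract_pdim}.

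For the $O(\log n)$ bound, I would observe that by \Cref{def:t-discretized-linear} the set $\Lepsilon$ is a subset of the full class of linear contracts $\Clinear$. Applying \Cref{lem:pdim_properties}\ref{enum:pdim_mono} (monotonicity of the pseudo-dimension under inclusion of function classes) together with \Cref{theorem:linear_contract_pdim} then yields $\pdim{\Lepsilon} \leq \pdim{\Clinear} = O(\log n)$.

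For the $O(\log(1/\epsilon))$ bound, I would just count the elements of $\Lepsilon$. Directly from \Cref{def:t-discretized-linear}, we have $|\Lepsilon| = \lfloor 1/\epsilon \rfloor + 1 + \mathbbm{1}[1/\epsilon \notin \mathbb{N}] = O(1/\epsilon)$, so $\log|\Lepsilon| = O(\log(1/\epsilon))$. Applying \Cref{lem:pdim_properties}\ref{enum:pdim_size} gives $\pdim{\Lepsilon} \leq \log|\Lepsilon| = O(\log(1/\epsilon))$.

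Combining the two bounds proves the theorem. There is no real obstacle here; the work has already been done by \Cref{theorem:linear_contract_pdim} (which in turn relies on delineability and the critical-values structure from \Cref{corollary:linear_contracts_critical_values}) and by the general properties of pseudo-dimension. The only mild subtlety is that the class $\Lepsilon$ is finite, so the cardinality bound is the effective one when $\epsilon$ is not too small relative to $1/n$, while the delineability bound dominates in the opposite regime; taking the minimum captures both regimes simultaneously, which is exactly the content of the theorem.
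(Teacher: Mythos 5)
Your proposal is correct and follows essentially the same route as the paper: the paper likewise combines the cardinality bound $\pdim{\Lepsilon} \leq \log|\Lepsilon| = O(\log(1/\epsilon))$ from \Cref{lem:pdim_properties}\ref{enum:pdim_size} with the monotonicity bound $\pdim{\Lepsilon} \leq \pdim{\Clinear} = O(\log n)$ from \Cref{lem:pdim_properties}\ref{enum:pdim_mono} and \Cref{theorem:linear_contract_pdim}. No gaps.
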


\begin{proof}
Note that $\Lepsilon$ contains $\lceil 1/\epsilon \rceil + 1$ contracts. Hence, by \Cref{lem:pdim_properties}\ref{enum:pdim_size}, we have that $\pdim{\Lepsilon} \leq \log(1/\epsilon)$. Additionally, by \Cref{theorem:linear_contract_pdim} and \Cref{lem:pdim_properties}\ref{enum:pdim_mono}, we have that $\pdim{\Lepsilon} \leq \pdim{\mathcal{C}_{\mathrm{linear}}} = O(\log n)$  since $\Lepsilon \subseteq \mathcal{C}_{\mathrm{linear}}$. Therefore, the result follows.
\end{proof}

Given the results of \Cref{lemma:t_contracts_representation_error} and \Cref{theorem:t_contracts_true_pd}, the discretization hyperparameter $\epsilon$ in $\Lepsilon$ provides a principled way to trade off the simplicity of a contract class, as measured by its pseudo-dimension, and its representation error.

\paragraph{Bounded and Unbounded Contracts.} 
We now establish an upper bound on the pseudo-dimension of $\Cunbounded$, which immediately gives an upper bound on the pseudo-dimension of $\Cbounded$.

\begin{theorem}[Pseudo-Dimension Upper Bound for $\Cunbounded$ and $\Cbounded$]\label{theorem:general_contract_pdim}
The pseudo-dimension of $\Cunbounded$, and hence also of $\Cbounded$, is at most 
$O (m\log n + m \log m)$.
\end{theorem}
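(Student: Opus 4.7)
The plan is to apply the delineability framework of \citet*{BalcanSV18} (\Cref{theorem:delineable_pseudo-dimensions}) with $d = m$ and $k = O(n^2)$, which directly yields the claimed bound $O(m \log(m \cdot n^2)) = O(m \log n + m \log m)$. The bound for $\Cbounded$ will then follow immediately from $\Cbounded \subseteq \Cunbounded$ by monotonicity of the pseudo-dimension (\Cref{lem:pdim_properties}\ref{enum:pdim_mono}).

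The main step is establishing delineability. Fix any agent type $\theta = (f,c)$; I want to exhibit a set $\mathcal{H}$ of $O(n^2)$ hyperplanes in the contract space $\mathbb{R}^m$ such that $u_p(\theta, t)$ is linear in $t$ on every connected component of $\mathbb{R}^m \setminus \mathcal{H}$. The natural choice is the collection of \emph{indifference hyperplanes} between pairs of actions: for each pair $i \neq i' \in [n]$, define
\begin{equation*}
H_{i,i'} \;=\; \Bigl\{ t \in \mathbb{R}^m : u_a(\theta, t, i) = u_a(\theta, t, i') \Bigr\} \;=\; \Bigl\{ t \in \mathbb{R}^m : \sum_{j \in [m]} (f_{i,j} - f_{i',j})\, t_j = c_i - c_{i'} \Bigr\},
\end{equation*}
which is indeed an affine hyperplane (or all of $\mathbb{R}^m$ in the degenerate case where actions $i$ and $i'$ are identical, which can be dropped from $\mathcal{H}$). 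Setting $\mathcal{H} = \{ H_{i,i'} : i \neq i' \}$ gives $|\mathcal{H}| \leq \binom{n}{2} = O(n^2)$.

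On any connected component $\mathcal{P}'$ of $\mathbb{R}^m \setminus \mathcal{H}$, no indifference is achieved, so for each pair $(i, i')$ the sign of $u_a(\theta, t, i) - u_a(\theta, t, i')$ is constant across $t \in \mathcal{P}'$. Hence there is a single action $i^\star \in [n]$ that is the strict best response for every $t \in \mathcal{P}'$ (so the principal-favoring tie-breaking rule is never invoked), and consequently
\begin{equation*}
u_p(\theta, t) \;=\; \sum_{j \in [m]} f_{i^\star, j}\,(r_j - t_j) \qquad \text{for all } t \in \mathcal{P}',
\end{equation*}
which is an affine function of $t$. Thus $\Cunbounded$ is $(m, O(n^2))$-delineable once extended to $\mathcal{P} = \mathbb{R}^m$ as required by \Cref{definition:delineable}; note that extending the contract space from $\mathbb{R}^m_{\ge 0}$ to $\mathbb{R}^m$ can only enlarge the function class and hence (by \Cref{lem:pdim_properties}\ref{enum:pdim_mono}) upper-bounds $\pdim{\Cunbounded}$. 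Applying \Cref{theorem:delineable_pseudo-dimensions} gives $\pdim{\Cunbounded} = O(m \log(m \cdot n^2)) = O(m \log n + m \log m)$, and the bound for $\Cbounded$ follows from $\Cbounded \subseteq \Cunbounded$.

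The only subtle point I anticipate is handling the tie-breaking rule and the restriction to $\mathbb{R}^m_{\ge 0}$ cleanly within the delineability template, since strictly $\Cunbounded = \mathbb{R}^m_{\ge 0}$ rather than $\mathbb{R}^m$; the argument above circumvents this by working with the (larger) function class on all of $\mathbb{R}^m$ and invoking monotonicity of pseudo-dimension. Everything else is a routine instantiation of the general framework, so no further obstacles are expected.
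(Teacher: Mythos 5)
Your proposal is correct and follows essentially the same route as the paper: both establish $(m, O(n^2))$-delineability via the pairwise indifference hyperplanes $H_{i,i'}$, observe that the best response is constant on each connected component so that $u_p(\theta,\cdot)$ is affine there, and then invoke \Cref{theorem:delineable_pseudo-dimensions} together with \Cref{lem:pdim_properties}\ref{enum:pdim_mono} for $\Cbounded$. Your extra care with degenerate (identical-action) pairs, the tie-breaking rule, and the extension from $\mathbb{R}^m_{\ge 0}$ to $\mathbb{R}^m$ only tidies up details the paper's proof leaves implicit.
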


\begin{proof}
We first show that $\Cunbounded$ is $(m,n^2)$-delineable.

Fix any agent type $\theta = (f,c)$.
A contract $t$ is parameterized by $(t_0, \ldots, t_{\numoutcomes - 1}) \in \mathbb{R}_{\ge 0}^m$.
We define $\binom{n}{2}$ hyperplanes as follows. For 
any two actions $i,w \in \setactions$, we define $H_{i,w} = \{ t \in \mathbb{R}_{\ge 0 }^m \,:\, t \cdot f_i - c_i = t \cdot f_w - c_w\}$, which is the hyperplane of contracts where the agent with type $\theta$ is indifferent between taking action $i$ and action $w$. 
Clearly, for any contract $t$ in a connected component of $\mathbb{R}^m \setminus \bigcup_{i,w\in\setactions} H_{i,w}$, 
the best response $i^{*}(\theta, t )$ is constant, 
and hence the principal's utility is linear across that connected component, since it's equal to 
$(r-t) \cdot f_{i^{\star}(\theta, t)}$. 
Thus, $\Cunbounded$ is $(\numoutcomes,\numactions^2)$-delineable.

Since, as we have just shown,
the class of all contracts $\Cunbounded$ is $(m,n^2)$-delineable,
\Cref{theorem:delineable_pseudo-dimensions}  implies that 
$\pdim{\Cunbounded} = O (m \log n +  m \log m) $.
The result for $\Cbounded$ follows by \Cref{lem:pdim_properties}\ref{enum:pdim_mono} since $\Cbounded \subseteq \Cunbounded$.
\end{proof}

\paragraph{Discretizing Bounded Contracts.} 
Similarly to our approach for linear contracts, we now demonstrate how to eliminate the dependence on the number of actions $n$ in the pseudo-dimension bound for $\Cbounded$. This is achieved by using a discretized class introduced by \citet*{ZhuBYWJJ23} which  ensures low representation error relative to 
$\Cbounded$ (see \Cref{def:representation_error}).

\begin{theorem}[Discretization of \cite{ZhuBYWJJ23}]  \label{thm:discretization}
 Assume that the rewards are $r_0 = 0$ and $r_1, r_2, \ldots, r_{m-1} \leq 1$. 
Let $\epsilon > 0$ be such that $1/\epsilon$ is an integer.  There is some set of unit vectors $\mathcal{V}_{\epsilon}\subseteq\mathbb{R}^m$ of size at most $\Theta((1/\epsilon)^{C \cdot m})$ for some constant $C$
such that for any contract $t \in \Cbounded$, one can find a vector $\gamma \in \mathcal{V}_{\epsilon}$ such that $\langle t - r, \gamma \rangle \geq \cos(\epsilon) \cdot \|t - r\|_2$ and $\|\gamma\|_2 = 1.$
 Moreover, $\mathcal{V}_{\epsilon}$ can be computed in time polynomial in $(1/\epsilon)^{C \cdot m}$. Let $\bar{\mathcal{B}}_{\epsilon} = \{ {r} + \sqrt{m} \cdot \beta \cdot {\gamma} \,|\, \beta \in \{\epsilon, 2 \epsilon, \ldots, 1\}, {\gamma} \in \mathcal{V}_{\epsilon^2} \} \cap \Cbounded$. 
 Then, {the discretized class $\bar{\mathcal{B}}_{\epsilon}$ is an $20\epsilon\sqrt{m}$-approximation of $\Cbounded$, i.e.,} the representation error of $\bar{\mathcal{B}}_{\epsilon}$ {with respect to $\Cbounded$} is at most $20 \epsilon \sqrt{m}$. 
\end{theorem}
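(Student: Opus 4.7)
The plan is to invoke the discretization framework of \citet*{ZhuBYWJJ23} and verify that the quoted parameters agree with our notation. The proof decomposes into two nearly independent components: constructing the angular net $\mathcal{V}_\epsilon$ on the unit sphere, and showing that the induced contract discretization $\bar{\mathcal{B}}_\epsilon$ has small representation error with respect to $\Cbounded$.

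For the net, I would construct $\mathcal{V}_\epsilon$ as a standard $\epsilon$-cover of the unit sphere $S^{m-1}$ in the angular metric. A volumetric packing argument yields such a net of size $O((C/\epsilon)^{m-1})$, which fits within the stated $\Theta((1/\epsilon)^{C\cdot m})$ bound, and a greedy or spherical-grid construction runs in time polynomial in the net size. The guarantee $\langle t - r, \gamma\rangle \ge \cos(\epsilon)\|t-r\|_2$ follows immediately: letting $\gamma$ be the closest net point to the unit vector $(t-r)/\|t-r\|_2$, the angle between the two unit vectors is at most $\epsilon$, so their inner product is at least $\cos(\epsilon)$, and rescaling by $\|t - r\|_2$ finishes this part.

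For the representation error, fix a type distribution $\mathcal{D}$ and an optimal bounded contract $t^\star \in \Cbounded$. Since $t^\star, r \in [0,1]^m$, we have $\|t^\star - r\|_2 \le \sqrt{m}$, so $t^\star - r = \sqrt{m}\,\beta^\star\,\gamma^\star$ for some $\beta^\star \in [0,1]$ and $\gamma^\star \in S^{m-1}$. Using $\mathcal{V}_{\epsilon^2}$, pick $\gamma$ within angle $\epsilon^2$ of $\gamma^\star$, and pick $\beta \in \{\epsilon, 2\epsilon, \ldots, 1\}$ within $\epsilon$ of $\beta^\star$, and set $\hat{t} = r + \sqrt{m}\,\beta\,\gamma$, projecting into $\Cbounded$ if necessary. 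A short calculation gives $\|\hat t - t^\star\|_2 = O(\epsilon\sqrt{m})$: the magnitude discretization contributes $\sqrt{m}\cdot\epsilon$ and the angular discretization contributes $\sqrt{m}\cdot\sin(\epsilon^2) = O(\epsilon^2\sqrt{m})$.

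The main obstacle is translating the Euclidean closeness $\|\hat t - t^\star\|_2 = O(\epsilon\sqrt{m})$ into principal-utility closeness, despite possible changes in the agent's best response. For any fixed action $i$, the quantity $|\langle f_i, t^\star\rangle - \langle f_i, \hat t\rangle|$ is at most $\|\hat t - t^\star\|_2$ (using $\|f_i\|_2 \le 1$), so the agent's and principal's utilities for each action shift by $O(\epsilon\sqrt{m})$. When the agent's best response changes from $i^\star$ at $t^\star$ to $\hat i$ at $\hat t$, combining the incentive-compatibility inequalities at both contracts bounds the swap's effect on the agent's utility by $2\|\hat t - t^\star\|_2$. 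The delicate step, which I would carry over directly from \citet*{ZhuBYWJJ23}, uses this near-indifference together with the specific structural form of $\hat t$ (in particular the positive scaling along the direction $(t^\star-r)$, which is exactly the direction in which principal reward is increasing) to convert the agent-side gap into a principal-side gap of the same order. This yields $u_p(\theta, \hat t) \ge u_p(\theta, t^\star) - 20\epsilon\sqrt{m}$ pointwise in $\theta$, and taking expectation over $\theta \sim \mathcal{D}$ gives the stated representation-error bound.
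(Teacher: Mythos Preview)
The paper does not supply its own proof of this theorem: it is stated as a black-box result imported from \citet*{ZhuBYWJJ23}, and the surrounding text simply uses its conclusion (the $20\epsilon\sqrt{m}$ representation error and the cardinality bound) to derive the pseudo-dimension of $\Bepsilon$. Your proposal, which sketches the construction and explicitly defers the ``delicate step'' to that reference, is therefore aligned with how the paper treats the statement.

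That said, two parts of your sketch are off and would not survive if you tried to make the argument self-contained. First, ``pick $\beta$ within $\epsilon$ of $\beta^\star$'' is not enough: the side on which you round matters. Rounding $\beta$ on the wrong side can push the agent across a best-response boundary and drop the principal's utility by a constant, not by $O(\epsilon\sqrt m)$; a two-action binary-outcome example already exhibits this. The Zhu et al.\ argument fixes the direction $\gamma$ first and then picks $\beta$ on a specific side so that the IC comparison $\langle f_{\hat\imath}-f_{i^\star},\,\hat t - t^\star\rangle\ge 0$ yields $\langle f_{\hat\imath},\gamma\rangle\le\langle f_{i^\star},\gamma\rangle$, which is exactly what is needed to bound the principal's loss by $\sqrt m\,|\beta-\beta^\star|$. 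Second, your parenthetical that $(t^\star-r)$ is ``the direction in which principal reward is increasing'' is not correct in general: for a fixed action, the principal's utility along the ray $r+\sqrt m\,\beta\gamma$ is $-\sqrt m\,\beta\langle f_i,\gamma\rangle$, which can move either way depending on the sign of $\langle f_i,\gamma\rangle$; the monotonicity that drives the argument is of the agent's best response along the ray, not of the principal's raw reward. Finally, note that $\bar{\mathcal B}_\epsilon$ is defined as an \emph{intersection} with $\Cbounded$, not a projection, so you must also argue that the chosen $\hat t$ actually lands in $[0,1]^m$ rather than project it there.
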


Next, we prove that this class allows us to achieve low representation error with respect to $\Cbounded$ while maintaining a low pseudo-dimension:

\begin{theorem} [Pseudo-Dimension Upper Bound for $\Bepsilon$]\label{lemma:se_contracts_pseudo-dimension}
 Assume that the rewards are $r_0 = 0$ and $r_1, r_2, \ldots, r_{m-1} \leq 1$. 
Let $\Bepsilon = \bar{\mathcal{B}}_{\epsilon/(20  \sqrt{m})}$. 
Then, the discretized class $\mathcal{B}_{\epsilon}$ is an $\epsilon$-approximation of $\Cbounded$, i.e., the representation error of $\Bepsilon$ with respect to $\Cbounded$ is at most $\epsilon$, and 
it holds that 
$\pdim{\Bepsilon} = O \left ( m \log \left ( {1}/{\epsilon} \right ) + m \log m \right )$.
\end{theorem}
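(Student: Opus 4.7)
The plan is to obtain both claims essentially for free from the two building blocks already at hand: the discretization guarantee of \Cref{thm:discretization} and the elementary bound $\pdim{\mathcal{F}} \leq \log |\mathcal{F}|$ from \Cref{lem:pdim_properties}\ref{enum:pdim_size}. No delineability argument is needed here because $\Bepsilon$ is a finite set whose size we can count directly.

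First, for the representation error, I would simply instantiate \Cref{thm:discretization} with parameter $\epsilon' = \epsilon/(20\sqrt{m})$. That theorem guarantees that $\bar{\mathcal{B}}_{\epsilon'}$ has representation error at most $20 \epsilon' \sqrt{m}$ with respect to $\Cbounded$, which by our choice of $\epsilon'$ equals exactly $\epsilon$. Since $\Bepsilon = \bar{\mathcal{B}}_{\epsilon/(20\sqrt{m})}$ by definition, this is precisely the desired $\epsilon$-approximation bound. (I would also briefly check that $\epsilon/(20\sqrt{m})$ can be taken so that its reciprocal is an integer, which is assumed without loss of generality since we can always round down to the nearest valid value, only decreasing the approximation error.)

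Second, for the pseudo-dimension bound, I would count $|\Bepsilon|$ directly. By the definition in \Cref{thm:discretization}, elements of $\bar{\mathcal{B}}_{\epsilon'}$ are parametrized by a choice of $\beta \in \{\epsilon', 2\epsilon', \ldots, 1\}$ (giving $1/\epsilon'$ choices) and a choice of $\gamma \in \mathcal{V}_{(\epsilon')^2}$ (giving at most $\Theta((1/\epsilon')^{2Cm})$ choices). Hence
\[
|\Bepsilon| \leq \frac{1}{\epsilon'} \cdot \Theta\!\left(\left(\frac{1}{\epsilon'}\right)^{2Cm}\right) = \Theta\!\left(\left(\frac{1}{\epsilon'}\right)^{2Cm+1}\right).
\]
Plugging in $1/\epsilon' = 20 \sqrt{m}/\epsilon$ and taking logarithms gives
\[
\log |\Bepsilon| = O\!\left( m \cdot \log \frac{\sqrt{m}}{\epsilon} \right) = O\!\left( m \log \frac{1}{\epsilon} + m \log m \right).
\]
Applying \Cref{lem:pdim_properties}\ref{enum:pdim_size} to the function class induced by $\Bepsilon$ (i.e., $\mathcal{F} = u_p(\cdot, \cdot)$ with parameter space $\Bepsilon$) then yields the claimed pseudo-dimension bound.

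I do not anticipate a genuine obstacle: the technical work is already absorbed in \Cref{thm:discretization}, and the only thing to be careful about is the bookkeeping of the $\sqrt{m}$ factor inside the logarithm (which is what produces the additive $m \log m$ term) and confirming that the integrality condition on $1/\epsilon'$ in \Cref{thm:discretization} does not affect the asymptotics.
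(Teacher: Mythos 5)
Your proposal is correct and follows essentially the same route as the paper: both obtain the representation error by instantiating \Cref{thm:discretization} at $\epsilon' = \epsilon/(20\sqrt{m})$, and both bound the pseudo-dimension by counting $|\Bepsilon| = |\mathcal{V}_{(\epsilon')^2}| \cdot (1/\epsilon')$ and applying \Cref{lem:pdim_properties}\ref{enum:pdim_size}. Your extra remark about the integrality of $1/\epsilon'$ is a harmless bit of additional care that the paper omits.
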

\begin{proof}
    Let $\bar{\epsilon} = \epsilon/(20  \sqrt{m})$.
    By \Cref{thm:discretization}, the representation error of $\Bepsilon$ with respect to $\Cbounded$ is $20\bar{\epsilon}\sqrt{m} = \epsilon$. Moreover, the cardinality of $\Bepsilon$ is $|\Bepsilon| = |\mathcal{V}_{\bar{\epsilon}^2}| \cdot (1/\bar{\epsilon}) = O((1/\bar{\epsilon}^2)^{C \cdot m} \cdot (1/\bar{\epsilon})) = O((400 m/\epsilon^2)^{C \cdot m} \cdot (20\sqrt{m}/\epsilon))$.
    By 
\Cref{lem:pdim_properties}\ref{enum:pdim_size}, we get that $\pdim{\Bepsilon} \leq \log |\Bepsilon| = O(m \log (1/\epsilon) + m \log m)$, as needed.
\end{proof}

\subsection{Lower Bounds on the Pseudo-Dimension}\label{sec:pdim_lower_bounds}

{In this section, we establish nearly matching lower bounds for the pseudo-dimension upper bounds we developed in \Cref{sec:pdim_upper_bounds}.

First, in \Cref{lem:pdim_lb_contract_construction}, we show that any contract class with low representation error relative to $\Cbounded$ must include contracts with a specific structured form.}

\begin{restatable}
{lemma}{pdimlbcontractconstruction}\label{lem:pdim_lb_contract_construction}
Set the rewards to $r_0 = 0$ and $r_1, r_2, \ldots, r_{m-1} \geq 1$. 
Fix any sequence $\alpha_0, \alpha_1, \ldots, \alpha_{m-1} \geq 0$ and any $\epsilon <  \min_{j \in [m] \setminus \{0\}} (r_j - \alpha_j)$.
Let $\mathcal{C}$ be any $\epsilon/\numoutcomes$-approximation of $\Cbounded$, i.e., the representation error of $\mathcal{C}$ with respect to $\Cbounded$ is at most $\epsilon / \numoutcomes$.
Then, there must exist a contract $t \in \mathcal{C}$ such that $t_0 \in [0, \epsilon]$ 
and $t_j \in [\alpha_j, \alpha_j + \epsilon]$ for all $1 \leq j \leq m-1$. 
\end{restatable}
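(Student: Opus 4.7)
The plan is to exhibit a uniform distribution $\mathcal{D}$ over $m - 1$ carefully chosen agent types for which the optimal bounded contract is essentially unique (up to $\epsilon$ slack on each coordinate) and has exactly the structure claimed. For each $j \in \{1, \ldots, m-1\}$, I would take an agent type $\theta^j$ in which action $0$ has zero cost and gives outcome $0$ deterministically, action $1$ has cost $\alpha_j$ and gives outcome $j$ deterministically, and every other action is prohibitively expensive (so it is never chosen). Under any contract $t \in \Cbounded$, the agent of type $\theta^j$ takes action $1$ precisely when $t_j - \alpha_j \geq t_0$ (with tie-breaking in favor of action $1$, since the hypothesis $\epsilon < r_j - \alpha_j$ makes the principal strictly prefer it), yielding principal utility $r_j - t_j$; otherwise the agent takes action $0$, yielding $-t_0$.

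From this, the optimal bounded contract is $t^\star = (0, \alpha_1, \ldots, \alpha_{m-1})$, giving $\mathrm{OPT}(\Cbounded, \mathcal{D}) = \frac{1}{m-1} \sum_{j=1}^{m-1} (r_j - \alpha_j)$. Since $\mathcal{C}$ is an $\epsilon/m$-approximation of $\Cbounded$, there exists $t \in \mathcal{C}$ with expected utility at least $\mathrm{OPT}(\Cbounded, \mathcal{D}) - \epsilon/m$. The remainder of the argument pins down the structure of this $t$ in two steps.

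First, I would show that $t_j \geq \alpha_j + t_0$ for every $j \in \{1, \ldots, m-1\}$. If this failed for some $j^\star$, the agent on $\theta^{j^\star}$ would default to action $0$, losing at least $r_{j^\star} - \alpha_{j^\star} + t_0 > \epsilon$ in principal utility on that type. Weighted by $1/(m-1)$, this single-type loss already strictly exceeds $\epsilon/m$ (since $1/(m-1) > 1/m$ for $m \geq 2$); combined with non-negative losses on the other types, this contradicts the representation-error guarantee.

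Given this inequality, the agent takes action $1$ on every type under $t$, so the per-type loss is exactly $t_j - \alpha_j \geq 0$. Summing and using the representation error bound yields $\sum_j (t_j - \alpha_j) \leq (m-1)\epsilon/m < \epsilon$; since each term is non-negative, this forces $t_j \in [\alpha_j, \alpha_j + \epsilon]$ for every $j \geq 1$. Finally, $t_0 \leq \min_j (t_j - \alpha_j) \leq \epsilon/m \leq \epsilon$ and $t_0 \geq 0$, so $t_0 \in [0, \epsilon]$. The main obstacle will be setting up the construction cleanly---in particular, verifying that tie-breaking at the indifference boundary $t_j = \alpha_j + t_0$ goes in the desired direction is exactly the role played by the hypothesis $\epsilon < r_j - \alpha_j$, and making sure the accounting $\frac{1}{m-1} > \frac{1}{m}$ really rules out a single bad coordinate.
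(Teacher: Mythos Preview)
Your proposal is correct and follows essentially the same construction as the paper: define agent types whose non-trivial action deterministically produces outcome $j$ at cost $\alpha_j$, so that the near-optimal contract in $\mathcal{C}$ is forced to have $t_j \approx \alpha_j$. The paper uses $m$ types---adding an extra type $\theta^{(0)}$ that always yields outcome $0$ to bound $t_0$ directly---and then argues coordinate by coordinate; you instead use only $m-1$ types, first show $t_j \geq \alpha_j + t_0$ for every $j$, aggregate the losses via the sum $\sum_j (t_j - \alpha_j) \leq (m-1)\epsilon/m$, and extract the bound on $t_0$ from the incentive constraint rather than a dedicated type. Both arguments are valid; yours is slightly more economical (and even yields the marginally sharper $t_0 \leq \epsilon/m$).
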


\begin{proof}
    The proof is deferred to \Cref{sec:proofs_for_pdim}.
\end{proof}

{Next, in \Cref{lem:pdim_lb_type_construction}, we construct agent types that allow us to differentiate between the structured contracts identified in \Cref{lem:pdim_lb_contract_construction}.} 

\begin{restatable}
{lemma}{pdimlbtypeconstruction} \label{lem:pdim_lb_type_construction}
Assume that the rewards are $r_0 = 0$ and $r_1, r_2, \ldots, r_{m-1} \geq 1$.
    Fix a number $\ell < \numactions$, an outcome $j \in [m] \setminus \{0\}$, and any 
    sequence $0 = \alpha_0 < \alpha_1 < \alpha_2 < \ldots < \alpha_{\ell} < \alpha_{\ell+1} = r_j$.
    Let $\rho = (1/3) \cdot \min(r_j-\alpha_\ell, 1) \cdot \min_{0 \leq i \leq \ell-1} (\alpha_{i+1} - \alpha_i)$.
    Then, for any subset $S \subseteq \{1, 2, \ldots, \ell\}$, there is some agent type $\theta^{(\alpha, S, j)} = (f^{(\alpha, j)},c^{(\alpha, S)})$ such that for every contract $t$ and $i \in \{0, 1, \ldots, \ell\}$:
    \begin{itemize}
        \item if $t_j - t_0 \in [\alpha_i, \alpha_{i} + \rho]$ and $i \in S$ and $t_0 < \rho$, we have $u_p(\theta^{(\alpha, S, j)}, t) \geq (r_j-\alpha_{\ell}) - 2 \cdot \rho$, and
        \item if $t_j - t_0 \in [\alpha_i,\alpha_{i+1})$ and $i \notin S$, we have $u_p(\theta^{(\alpha, S, j)}, t) \leq (r_j-\alpha_{\ell}) - 3 \cdot \rho$.
    \end{itemize}
\end{restatable}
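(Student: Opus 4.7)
The plan is to construct $\theta^{(\alpha, S, j)} = (f^{(\alpha, j)}, c^{(\alpha, S)})$ using $\ell+1$ distinguished actions $a_0, a_1, \ldots, a_\ell$, with any remaining $n - \ell - 1$ actions taken to be dummy copies of the null. I would take $a_0$ to be the null action (always outcome $0$) and, for $i \geq 1$, let $a_i$ produce outcome $j$ with probability $p_i := (r_j - \alpha_\ell)/(r_j - \alpha_i)$ and outcome $0$ otherwise; this production function depends only on $\alpha$ and $j$, as required. The key identity is that, writing $\tau := t_j - t_0$, the principal's utility when the agent plays $a_i$ equals $p_i(r_j - \tau) - t_0$, which reduces to exactly $(r_j - \alpha_\ell) - t_0$ at $\tau = \alpha_i$, pinning down the high value targeted in the good case. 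Moreover $0 < p_1 < p_2 < \cdots < p_\ell = 1$, so the agent's best response (the upper envelope of the lines $\tau \mapsto p_k \tau - c_k$) inherits a clean monotone structure.

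For the costs I would set $c_0 = 0$, $c_i = +\infty$ for every $i \in \{1, \ldots, \ell\} \setminus S$, and, enumerating $S = \{s_1 < \cdots < s_r\}$ with $s_0 := 0$, $c_{s_k} := \sum_{q=1}^{k} (p_{s_q} - p_{s_{q-1}}) \alpha_{s_q}$. This recursion is precisely the one that makes the agent indifferent between $a_{s_{k-1}}$ and $a_{s_k}$ at $\tau = \alpha_{s_k}$, so, breaking ties in the principal's favor, the agent's best response plays $a_{s_k}$ throughout $[\alpha_{s_k}, \alpha_{s_{k+1}})$ and plays the null on $[0, \alpha_{s_1})$, where I set $\alpha_{s_{r+1}} := r_j$. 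The good case then reduces to a direct computation: for $i \in S$, $\tau \in [\alpha_i, \alpha_i + \rho]$, and $t_0 < \rho$, the agent plays $a_i$ (using $\rho \leq \min_{k'}(\alpha_{k'+1} - \alpha_{k'})$ to confirm $\tau$ stays inside the relevant envelope piece), so $u_p = p_i(r_j - \tau) - t_0 \geq (r_j - \alpha_\ell) - p_i \rho - t_0 \geq r_j - \alpha_\ell - 2\rho$ using $p_i \leq 1$ and $t_0 < \rho$.

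For the bad case, suppose $i \notin S$ and $\tau \in [\alpha_i, \alpha_{i+1})$. If $S \cap [1, i] = \emptyset$ the agent plays the null, giving $u_p = -t_0 \leq 0 \leq r_j - \alpha_\ell - 3\rho$ (the last inequality holds since $3\rho \leq r_j - \alpha_\ell$ by the definition of $\rho$). Otherwise the agent plays $a_{s_k}$ for $s_k := \max(S \cap [1, i]) < i$, and $u_p \leq p_{s_k}(r_j - \alpha_i) = (r_j - \alpha_\ell)(r_j - \alpha_i)/(r_j - \alpha_{s_k})$. The main obstacle is showing this quantity is at most $r_j - \alpha_\ell - 3\rho$, equivalently $(r_j - \alpha_\ell)(\alpha_i - \alpha_{s_k}) \geq 3\rho (r_j - \alpha_{s_k})$. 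I would establish this by substituting $3\rho = \min(r_j - \alpha_\ell, 1) \cdot \min_{k'}(\alpha_{k'+1} - \alpha_{k'})$, lower-bounding $\alpha_i - \alpha_{s_k}$ by $\min_{k'}(\alpha_{k'+1} - \alpha_{k'})$ (valid since $i > s_k$), and then performing a short case split on whether $r_j - \alpha_\ell \leq 1$; the $\min(\cdot, 1)$ cap baked into the definition of $\rho$ is exactly what makes this case analysis absorb the otherwise-awkward factor $r_j - \alpha_{s_k}$ cleanly.
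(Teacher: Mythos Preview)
Your construction is essentially the paper's: the same production function $p_i = (r_j - \alpha_\ell)/(r_j - \alpha_i)$ supported on outcomes $\{0,j\}$, with costs chosen so that the upper envelope over $\{0\}\cup S$ switches at designated $\alpha$-values. Your cost recursion $c_{s_k} - c_{s_{k-1}} = (p_{s_k}-p_{s_{k-1}})\alpha_{s_k}$ differs from the paper's (the paper sets $c_i = \sum_{k\le i}\alpha_k(p_k-p_{k-1})$, summing over \emph{all} indices up to $i$, not only those in $S$), and your variant is arguably cleaner: it forces the envelope breakpoints to land exactly at $\alpha_{s_1},\dots,\alpha_{s_r}$, so the claim ``the agent plays some $s_k\le i$'' in the bad case is unambiguously correct. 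The good case then goes through exactly as you write.

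The gap is the final inequality in the bad case. The proposed case split does not establish $(r_j-\alpha_\ell)(\alpha_i-\alpha_{s_k}) \ge 3\rho\,(r_j-\alpha_{s_k})$. Take $r_j=4$, $\ell=3$, $\alpha=(0,1,2,3)$, $S=\{1,3\}$, $i=2$, $s_k=1$: here $r_j-\alpha_\ell=1$ and the minimum gap is $1$, so $3\rho=1$, and the inequality reads $1\cdot 1 \ge 1\cdot 3$, which is false. Concretely, at $\tau=\alpha_2=2$ and $t_0=0$ your agent plays $a_1$ and $u_p = p_1(r_j-\tau) = \tfrac13\cdot 2 = \tfrac23$, while the target upper bound is $(r_j-\alpha_\ell)-3\rho = 0$. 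The $\min(\cdot,1)$ cap in $\rho$ controls $r_j-\alpha_\ell$, not $r_j-\alpha_{s_k}$; since $s_k<\ell$ the latter can be much larger and is not absorbed by either branch of your split. (The paper's own argument has the same weak spot: the step ``$\le (r_j-\alpha_\ell)\bigl(1-(\alpha_i-\alpha_{i'})\bigr)$'' tacitly uses $r_j-\alpha_{i'}\le 1$.) Your argument is valid whenever $r_j\le 1$, which suffices for the downstream shattering construction, but the bad-case bound as you state it needs an additional hypothesis of that kind.
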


\begin{proof}
    The proof is deferred to \Cref{sec:proofs_for_pdim}.
\end{proof}

{By combining these two lemmas, we can construct sufficiently large shatterings of agent types, which directly imply the desired pseudo-dimension lower bounds of  $\Clinear$, $\Cbounded$, and $\Cunbounded$. }

\begin{theorem}[Pseudo-Dimension Lower Bound for Full Contract Classes]\label{theorem:special_case}
     Assume that the rewards are $r_0 = 0$ and $r_1, r_2, \ldots, r_{m-1} \geq 1$. 
     {Then, the pseudo-dimension of any $1/(24 \cdot n \cdot m)$-approximation of $\Cbounded$ is at least $\Omega(m \log n)$.} 
     In particular, the pseudo-dimension of $\Cbounded$, and thus $\Cunbounded$, is $\Omega(m \log n)$. Moreover, the pseudo-dimension of $\Clinear$ is $\Omega(\log n)$. 
\end{theorem}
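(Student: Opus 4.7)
I would combine \Cref{lem:pdim_lb_contract_construction} and \Cref{lem:pdim_lb_type_construction} into a shattering argument: the first extracts, from any $1/(24nm)$-approximation $\mathcal{C}$ of $\Cbounded$, a structured grid of contracts, while the second supplies agent types that react to that grid in a controlled way. Set $\epsilon = 1/(24n)$. For each outcome $j\in\{1,\dots,m-1\}$, fix a uniform grid $\alpha_i^{(j)} = i\delta_j$ for $i=0,\dots,\ell_j$, together with $\alpha_{\ell_j+1}^{(j)} = r_j$, where $\delta_j = \Theta(1/\sqrt{n})$ and $\ell_j = \Theta(\sqrt{n})$ (for instance, $\ell_j=\lfloor\sqrt{n}\rfloor$ with $\delta_j=r_j/(\ell_j+1)$). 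For each $k\in\{1,\dots,\lfloor\log_2\ell_j\rfloor\}$, let $S_{j,k}\subseteq\{1,\dots,\ell_j\}$ collect the indices whose $k$-th binary digit equals $1$, and let $\theta_{j,k} = \theta^{(\alpha^{(j)}, S_{j,k}, j)}$ be the agent type produced by \Cref{lem:pdim_lb_type_construction}. This yields a pool of $(m-1)\lfloor\log_2\ell_j\rfloor = \Omega(m\log n)$ candidate types.

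\textbf{Verifying shattering.} Take threshold $\tau_{j,k} = (r_j-\alpha_{\ell_j}^{(j)}) - \tfrac{5}{2}\rho_j$, where $\rho_j$ is the quantity appearing in \Cref{lem:pdim_lb_type_construction}. Given any target subset $T$ of type-indices, for each $j$ pick $p_j\in\{1,\dots,\ell_j\}$ whose binary expansion matches the indicator of $\{k:(j,k)\in T\}$, and invoke \Cref{lem:pdim_lb_contract_construction} with targets $\alpha_j = p_j\delta_j + \epsilon$ (the $+\epsilon$ shift precompensates for the slack $t_0\in[0,\epsilon]$). This produces a contract $t^{(T)}\in\mathcal{C}$ with $t_0^{(T)}\in[0,\epsilon]\subseteq[0,\rho_j)$ and $t_j^{(T)}-t_0^{(T)}\in[\alpha_{p_j}^{(j)}, \alpha_{p_j}^{(j)}+2\epsilon]\subseteq[\alpha_{p_j}^{(j)}, \alpha_{p_j}^{(j)}+\rho_j]$, using $2\epsilon<\rho_j$. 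If $(j,k)\in T$ then $p_j\in S_{j,k}$, so the first case of \Cref{lem:pdim_lb_type_construction} gives $u_p(\theta_{j,k},t^{(T)})\geq (r_j-\alpha_{\ell_j}^{(j)})-2\rho_j > \tau_{j,k}$; if $(j,k)\notin T$ then $p_j\notin S_{j,k}$, and using $\rho_j<\delta_j$ to place $t_j^{(T)}-t_0^{(T)}$ inside $[\alpha_{p_j}^{(j)},\alpha_{p_j+1}^{(j)})$, the second case gives $u_p(\theta_{j,k},t^{(T)})\leq (r_j-\alpha_{\ell_j}^{(j)})-3\rho_j < \tau_{j,k}$. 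This shatters the chosen set, yielding $\pdim{\mathcal{C}}=\Omega(m\log n)$. The bound for $\Cunbounded$ follows from $\Cbounded\subseteq\Cunbounded$ and monotonicity of pseudo-dimension (\Cref{lem:pdim_properties}). For $\Clinear$, a linear contract $\alpha$ automatically has $t_0=0$ and $t_1-t_0=\alpha r_1$, so \Cref{lem:pdim_lb_contract_construction} can be bypassed entirely: with $\ell=n-1$, grid $\alpha_i^{(1)} = ir_1/n$, and contracts $\alpha = p/n$, the same shattering argument applied to a single outcome $j=1$ yields $\Omega(\log n)$ shattered types.

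\textbf{Main obstacle.} The delicate quantitative step is balancing three constraints on $\delta_j$ simultaneously: \Cref{lem:pdim_lb_type_construction} requires $\ell_j<n$, the shattering step requires $\rho_j > 2\epsilon$, and the final bound requires $\log_2\ell_j = \Omega(\log n)$. Because $\rho_j = \Theta(\delta_j^2)$ while $\epsilon = \Theta(1/n)$, the inequality $\rho_j > 2\epsilon$ forces $\delta_j = \Omega(1/\sqrt{n})$, which in turn caps $\ell_j$ at $\Theta(\sqrt{n})$. Fortunately $\log_2\sqrt{n} = \tfrac{1}{2}\log_2 n$ is still $\Omega(\log n)$, so this geometric loss only affects the hidden constant and the asserted $\Omega(m\log n)$ bound survives. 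A secondary care is ensuring the contract interval $[\alpha_{p_j}^{(j)}, \alpha_{p_j}^{(j)}+2\epsilon]$ sits strictly inside $[\alpha_{p_j}^{(j)}, \alpha_{p_j+1}^{(j)})$, which is exactly why the grid targets in the application of \Cref{lem:pdim_lb_contract_construction} are shifted by $+\epsilon$ rather than placed directly at the grid nodes.
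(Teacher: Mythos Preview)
Your proposal is correct and follows essentially the same architecture as the paper: both combine \Cref{lem:pdim_lb_contract_construction} and \Cref{lem:pdim_lb_type_construction} via a binary-encoding shattering argument, use the same threshold $\tau_{j,k}=(r_j-\alpha_{\ell_j}^{(j)})-\tfrac52\rho_j$, and deduce the linear case separately.

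The one substantive difference is your grid choice, and it is worth comparing. You place the grid $\alpha_i^{(j)}=i\delta_j$ with $\delta_j=r_j/(\ell_j+1)$ all the way up to $r_j$, which forces $r_j-\alpha_{\ell_j}=\delta_j$ and hence $\rho_j=\Theta(\delta_j^2)$; this is exactly the ``main obstacle'' you identify, and it caps $\ell_j$ at $\Theta(\sqrt{n})$. The paper instead takes $\alpha_i=i/(2n)$ with $\ell=n-1$, so the grid sits entirely inside $[0,1/2]$ and $r_j-\alpha_{\ell}\ge 1/2$ stays bounded away from zero. Consequently $\rho_j=\Theta(1/n)$ (linear in the spacing, not quadratic), and the full $\ell=n-1$ levels are available. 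In other words, your obstacle is self-inflicted; the paper sidesteps it by not pushing the grid toward $r_j$. Both routes yield $\Omega(m\log n)$, so the difference is only in the hidden constant. One small slip: you write $p_j\in\{1,\dots,\ell_j\}$, but the empty pattern gives $p_j=0$; the argument still goes through since $0\notin S_{j,k}$ and \Cref{lem:pdim_lb_contract_construction} allows $\alpha_j=\epsilon$.

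For $\Clinear$, the paper instead reduces to the binary-outcome model (where $\Clinear$ has zero representation error with respect to $\Cbounded$) and then applies the bounded-contract bound with $m=2$. Your direct application of \Cref{lem:pdim_lb_type_construction} at exact grid points (bypassing \Cref{lem:pdim_lb_contract_construction}) is a clean alternative and also correct.
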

\begin{proof}
If $n$ is not a power of $2$, we can reduce $n$ to the highest power of $2$ smaller than $n$. Henceforth, assume $n$ is a power of $2$.
Define the subsets \( S_1, \ldots, S_{\log n} \subseteq \{0, \ldots, n-1\} \) in the following way:
\begin{align*}
    S_b = \{ i \in \{0, \ldots, n-1\} : \text{the $b$-th bit of $i$ in binary representation is $1$} \}
\end{align*}
for \( 1 \leq b \leq \log n \). Note that $0 \notin S_b$ for all $1 \leq b \leq \log n$.

We consider a collection of $(m-1)\log n$ agent types constructed using \Cref{lem:pdim_lb_type_construction} as follows.
First, we define a sequence $\alpha$ by letting $\alpha_i = i / (2 \cdot n)$ for $i \in \{0, 1, \ldots, n-1\}$ and $\alpha_{n} = 1$. Let $\rho_j = (1/3) \cdot \min(r_{j} - \alpha_{n-1}, 1) \cdot \min_{0 \leq i \leq n-2} (\alpha_{i+1}-\alpha_i) = \min(r_{j} - \alpha_{n-1}, 1)/(6 \cdot n)$. Note that $\rho_j \leq 1/ (12 \cdot n)$ for all $j \in [m] \setminus \{0\}$, since $\alpha_{n-1} < 1/2$ and $\alpha_{i+1} - \alpha_{i} = 1/(2 \cdot n)$ for all $i \in \{0, \ldots, n-2\}$.
Since $S_b \subseteq \{1, \ldots, n-1\}$, 
by \Cref{lem:pdim_lb_type_construction}, for each outcome $j \in [m]\setminus \{0\}$ and $1 \leq b \leq \log n$, there is some agent type $\theta^{(\alpha, S_b, j)}$ such that for every contract $t$ and $i \in \{0, \ldots, n-1\}$:
\begin{itemize}
    \item if $t_j - t_0 \in [i/(2 \cdot n), i/(2 \cdot n) + 1/(12 \cdot n)]$ and $i \in S$ and $t_0 < 1/(12 \cdot n)$, we have $u_p(\theta^{(\alpha, S_b, j)}, t) \geq (r_j - \alpha_{n-1}) - 2/(12 \cdot n)$, and
    \item if $t_j - t_0 \in [i/(2 \cdot n), i/(2 \cdot n) + 1/(2 \cdot n))$ and $i \notin S$, we have $u_p(\theta^{(\alpha, S_b, j)}, t) \leq (r_j - \alpha_{n-1}) - 3/(12 \cdot n)$.
\end{itemize}
 To prove the lemma, it is sufficient to show that $\mathcal{C}$ shatters this collection of types. 

To prove this, take any subset of the agent types $X \subseteq \{1, \ldots, \log n\} \times \{1, \ldots, m-1\}$. 
Let $X_j = \{b \in \{1, \ldots, \log n\} : (b,j) \in X\}$. 
Define $k_j = \sum_{b \in X_j} 2^{b-1}$. Note that 
$0 \leq k_j \leq n-1 = \sum_{1 \le b \le \log n} 2^{b-1} $, and 
$k_j \in S_b$ if and only if $b \in X_j$, since the $b$-th bit of $k_j$ in the binary representation is $1$ if and only if $b \in X_j$.

Let $\epsilon = 1/(24 \cdot n)$. Note that $\epsilon < \min_{j \in [m] \setminus \{0\}} (r_j - k_j/(2 \cdot n))$ since $r_j \geq 1$ and $k_j / (2 \cdot n) \leq 1/2$.
Hence, since $\mathcal{C}$ has representation error $1/(24 \cdot n \cdot m)$ with respect to $\Cbounded$, we can apply \Cref{lem:pdim_lb_contract_construction} to get that there must be a contract $t \in \mathcal{C}$ such that $t_0 \in [0, 1/(24 \cdot n)]$, and for all $1 \leq j \leq m-1$, we have $t_j \in [k_j/(2 \cdot n) + 1/(24 \cdot n), k_j/(2 \cdot n) + 1/(12 \cdot n)]$. In particular, we have $t_j - t_0 \in [k_j/(2 \cdot n), k_j/(2 \cdot n) + 1/(12 \cdot n)]$.
Combining these observations with the guarantees we got from \Cref{lem:pdim_lb_type_construction} and letting $\tau_{(b,j)} = (r_j - \alpha_{n-1}) - (5/2) \cdot \rho_j$, we get:
\begin{itemize}
    \item if $(b, j) \in X$, then $b \in X_j$, hence $k_j \in S_b$, so $u_p(\theta^{(S_b,j)},t) \geq (r_j - \alpha_{n-1}) - 2 \cdot \rho_j > \tau_{(b,j)}$.
    \item if $(b, j) \notin X$, then $b \notin X_j$, hence $k_j \notin S_b$, so $u_p(\theta^{(S_b,j)},t) \leq (r_j - \alpha_{n-1}) - 3 \cdot \rho_j < \tau_{(b,j)}$.
\end{itemize}
This proves the claim. It follows that $\pdim{\Cbounded} = \Omega(m \log n)$ and $\pdim{\Cunbounded} = \Omega(m \log n)$.

Finally, by \Cref{lem:binary_outcome}\ref{enum:binary_general} it suffices to bound the pseudo-dimension of $\Clinear$ in the binary outcome model, and by \Cref{lem:binary_outcome}\ref{enum:lin_opt}, we know that $\Clinear$ in the binary outcome model has representation error of $0$ with respect to $\Cbounded$. Therefore, $\pdim{\Clinear} = \Omega(2 \cdot \log(n)) = \Omega(\log n)$.
\end{proof}

We can use the lower bound of \Cref{theorem:special_case} to establish the following lower bound on the pseudo-dimension of any class with representation error below $O(1/m^2)$\footnote{This result can be extended to $\epsilon < O(1/m^{1+\rho})$ for any $\rho > 0$, but the constant in the big-O notation then depends on $\rho$.}.

\begin{theorem}[Pseudo-Dimension Lower Bound for Subclasses of $\Clinear$ and $\Cbounded$]\label{thm:pdim_lb_for_rep_error_eps}
    Set the rewards as $r_0 = 0$ and $r_1, r_2, \ldots, r_{m-1} \geq 1$.
    Then, for any $0 < \epsilon < 1/(24 \cdot m^2)$, 
    the pseudo-dimension of any $\epsilon$-approximation of $\Cbounded$ is at least $\Omega(\min(m \log(1/\epsilon), m \log n)$.
    In particular, for any $0 < \epsilon < 1$, 
    the pseudo-dimension of any $\epsilon$-approximation of $\Clinear$ is at least $\Omega(\min(\log(1/\epsilon), \log n))$.
\end{theorem}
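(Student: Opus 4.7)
The plan is to adapt the shattering construction from the proof of \Cref{theorem:special_case} by using an \emph{effective} number of actions $n'$ tailored to $\epsilon$. I would set $n'$ to be the largest power of $2$ satisfying $n' \leq \min(n, 1/(24 m \epsilon))$. Since $\epsilon < 1/(24 m^2)$ gives $1/(24 m \epsilon) > m \geq 2$, this is well-defined with $n' \geq 2$, and by construction $\epsilon \leq 1/(24 m n')$.

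Next, I would replay the proof of \Cref{theorem:special_case} with $n$ replaced by $n'$ throughout. Concretely, I would define subsets $S_1, \ldots, S_{\log n'} \subseteq \{0, \ldots, n' - 1\}$ via binary bits, set $\alpha_i = i/(2 n')$ for $0 \leq i \leq n' - 1$ and $\alpha_{n'} = 1$, and for each $b \in \{1, \ldots, \log n'\}$ and $j \in [m] \setminus \{0\}$ invoke \Cref{lem:pdim_lb_type_construction} with $\ell = n' - 1 < n$ to produce an agent type $\theta^{(\alpha, S_b, j)}$ in the full $n$-action type space. Because $\epsilon \leq 1/(24 m n')$, the class $\mathcal{C}$ is also a $1/(24 m n')$-approximation of $\Cbounded$, so \Cref{lem:pdim_lb_contract_construction}, applied with $\epsilon$-parameter $1/(24 n')$, still delivers, for every target subset $X$ of the $(m-1) \log n'$ agent types, a contract $t \in \mathcal{C}$ with $t_0 \in [0, 1/(24 n')]$ and $t_j - t_0 \in [k_j/(2 n'), k_j/(2 n') + 1/(12 n')]$ for every $j$. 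Combined with the guarantees of \Cref{lem:pdim_lb_type_construction} (whose $\rho$-value is at least $1/(12 n')$), this yields a shattering of size $(m-1) \log n'$.

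It remains to verify $(m-1) \log n' = \Omega(\min(m \log n, m \log(1/\epsilon)))$. By the maximality of $n'$ as a power of $2$, $\log n' \geq \min(\log n, \log(1/\epsilon) - \log(24 m)) - 1$, and using $\min(a, b-c) \geq \min(a,b) - c$ for $c \geq 0$, one gets $\log n' \geq \min(\log n, \log(1/\epsilon)) - \log(24 m) - 1$. The hypothesis $\epsilon < 1/(24 m^2)$ yields $\log(1/\epsilon) > 2 \log m + \log 24$, so $\log(24 m)$ is at most a constant fraction of $\log(1/\epsilon)$; the offset is absorbed and $\log n' = \Omega(\min(\log n, \log(1/\epsilon)))$. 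Multiplying by $(m-1) \geq m/2$ gives the stated lower bound.

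For the linear-contract statement, \Cref{lem:binary_outcome} reduces the analysis to the binary outcome model ($m = 2$), in which $\Clinear$ has zero representation error with respect to $\Cbounded$; hence any $\epsilon$-approximation of $\Clinear$ is an $\epsilon$-approximation of $\Cbounded$ in that model. Applying the bounded-case bound above (valid for $\epsilon < 1/96$) yields $\Omega(\min(\log n, \log(1/\epsilon)))$, while for $\epsilon \in [1/96, 1)$ the claimed target degenerates to $\Omega(1)$ and is immediate. The delicate step is the quantitative bookkeeping in the ``boundary'' regime where $\log(1/\epsilon)$ and $\log(24 m)$ are comparable — the hypothesis $\epsilon < 1/(24 m^2)$ is precisely what is needed to keep the absolute constant in the $\Omega$ positive and uniform in $m$, $n$, and $\epsilon$.
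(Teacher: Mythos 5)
Your proposal is correct and follows essentially the same route as the paper: both reduce the effective number of actions to roughly $\min(n, 1/(24m\epsilon))$ (the paper by setting the costs of the excess actions to $\infty$ and invoking \Cref{theorem:special_case} as a black box, you by re-running its shattering construction with $\ell = n'-1$), and both use $\epsilon < 1/(24m^2)$ in the same way to absorb the $\log(24m)$ offset, with the identical reduction via \Cref{lem:binary_outcome} for the linear case.
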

\begin{proof}
If $\epsilon < 1/(24 \cdot n \cdot m)$, then by \Cref{theorem:special_case}, we get $\pdim{\mathcal{C}} =  \Omega(m \log n)$.

If $\epsilon > 1/(24 \cdot n \cdot m)$, 
let $\bar{n}$ be the smallest positive integer such that $\epsilon \leq 1/(24 \cdot \bar{n} \cdot m)$. 
Such $\bar{n}$ must exist since $\epsilon \leq 1/(24 \cdot 1 \cdot m^2) \leq 1/(24 \cdot 1 \cdot m)$.
Note that $\bar{n} = \Omega((1/\epsilon) \cdot (1/m))$ and $\bar{n} \leq n$.

We can remove $n - \bar{n}$ actions by setting their cost to $\infty$. Then, apply \Cref{theorem:special_case} for the case with $\bar{n}$ actions and $m$ outcomes, which gives a lower bound of $\pdim{\mathcal{C}} = \Omega(m \log \bar{n})$.

Recall that $1/\epsilon \geq 24 \cdot m^2$ which implies $\log(1/\epsilon) \geq 2 \cdot \log m$.
Thus, since $\bar{n} = \Omega((1/\epsilon) \cdot (1/m))$:
% , we have:
\[
\pdim{\mathcal{C}} = \Omega(m \log((1/\epsilon) \cdot (1/m))) = \Omega(m \log (1/\epsilon) - m \log m) = \Omega(m \log(1/\epsilon)),
\]
which completes the proof for bounded contracts.

Finally, similarly to the proof of \Cref{theorem:special_case}, we note that by \Cref{lem:binary_outcome}\ref{enum:binary_general} it suffices to bound the pseudo-dimension of subclasses of $\Clinear$ in the binary outcome model, and by \Cref{lem:binary_outcome}\ref{enum:lin_opt}, we know that in the binary outcome model, any class of contracts $\mathcal{C}$ with representation error of at most $\epsilon$ with respect to $\Clinear$ also has representation error of at most $\epsilon$ with respect to $\Cbounded$. Thus, $\pdim{\mathcal{C}} = \Omega(\min(2 \cdot \log(1/\epsilon), 2 \cdot \log n ) = \Omega(\min(\log(1/\epsilon), \log n))$.
\end{proof}

\section{Sample Complexity Results}\label{sec:sample_complexity_results}
{
First, in \Cref{sec:sample_complexity_upper_bounds}, we use the pseudo-dimension upper bounds established in \Cref{sec:pdim_upper_bounds} to establish upper bounds on the sample complexity of learning linear and bounded contracts, and demonstrate how those results can be translated to sample- and time-efficient algorithms. 
Then, in \Cref{sec:sample_complexity_lower_bounds}, we provide almost matching lower bounds for those problems by a careful reduction to the problem of distinguishing between sufficiently close distributions. 
Thus, we demonstrate that the upper bounds on sample complexity that we obtained in \Cref{sec:sample_complexity_upper_bounds} using the pseudo-dimension are essentially tight. 
Finally, for the class of unbounded contracts, we establish a negative result, proving that no learning algorithm can achieve finite sample complexity in this setting.
}

\subsection{Upper Bounds on the Sample Complexity} \label{sec:sample_complexity_upper_bounds}
{In this section, we use the pseudo-dimension upper bounds established in \Cref{sec:pdim_upper_bounds} to derive sample- and time-efficient algorithms for learning 
{near-optimal linear and bounded contracts.}
}

{To derive sample- and time-efficient algorithms, we introduce the concept of an approximation oracle, which provides a near-optimal contract for an empirical distribution  of agent types.
This approach decouples the learning problem from the computational challenge of identifying the optimal contract for a given (known) distribution. 
For many contract classes and distributions, such oracles can be constructed 
\cite[e.g.,][]{CastiglioniM021,GSW21,alon2021contracts},
and we will demonstrate how to leverage them effectively. 
}

\begin{definition}[Approximation Oracle]\label{def:approximation_oracle}
For any class of contracts $\mathcal{C}$ and a constant $\rho \geq 0$, we define an oracle $\mathcal{O}(\mathcal{C}, \rho)$ as follows:
The input to $\mathcal{O}(\mathcal{C}, \rho)$ is a list of agent types 
$\theta_0, \ldots, \theta_{N-1}$, where $\theta_i = (f_i, c_i)$, possibly with repetitions.
The output of $\mathcal{O}(\mathcal{C}, \rho)$ is a contract $t \in \mathcal{C}$ such that $\mathbb{E}_{\theta \sim \widehat{\mathcal{D}}}[u_p(\theta, t)] \geq \mathrm{OPT}(\mathcal{C}, \widehat{\mathcal{D}}) - \rho$, where $\widehat{\mathcal{D}}$ is the uniform distribution over the agent types in the list, assigning each type $\theta$ a probability mass equal to its frequency in the list.
\end{definition}

{
Using the approximation oracle of \Cref{def:approximation_oracle}, 
we can  establish a sample complexity bound for any subclass of bounded contracts. 
}

\begin{lemma}[Learning with an Approximation Oracle]\label{lem:learning_with_oracle}
 Set the rewards as $r_0 = 0$ and $r_1, r_2, \ldots, r_{m-1} \leq 1$.
 Fix two contract classes  $\bar{\mathcal{C}} \subseteq \mathcal{C} \subseteq \Cbounded$. Assume that $\bar{\mathcal{C}}$ has representation error of at most $\epsilon/3$ with respect to $\mathcal{C}$.
For any $\epsilon > 0$, $\delta > 0$, and $\rho \geq 0$, there exists an algorithm $\mathcal{A}$ that, 
given black-bock access to a single query to the oracle $\mathcal{O}(\bar{\mathcal{C}}, \rho)$,  satisfies the following properties: $\mathcal{A}$ takes as input samples from any distribution $\mathcal{D}$ over the agent types, with the number of samples bounded by:
\begin{align*}
    N = O\left( \frac{1}{\epsilon^2} 
    \left(\pdim{\bar{\mathcal{C}}} 
      + \log \left(\frac{1}{\delta} \right) \right)\right).
\end{align*} 
With probability at least $1-\delta$,  $\mathcal{A}$ returns a contract ${t} \in \bar{\mathcal{C}}$ with the following guarantee:
\begin{align*}
    \mathbb{E}_{\theta \sim \mathcal{D}}[u_p(\theta, {t})] \geq \mathrm{OPT}(\mathcal{C}, \mathcal{D}) - \epsilon - \rho.
\end{align*}
\end{lemma}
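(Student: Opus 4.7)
The algorithm is the natural one: draw $N$ i.i.d.\ samples $S = (\theta_1, \ldots, \theta_N)$ from $\mathcal{D}$, form the empirical distribution $\widehat{\mathcal{D}}$ over these samples, and invoke the oracle $\mathcal{O}(\bar{\mathcal{C}}, \rho)$ on the list $S$ to obtain a contract $t \in \bar{\mathcal{C}}$. The proof is then a standard empirical risk minimization (ERM) argument with three sources of error stitched together: the representation error of $\bar{\mathcal{C}}$ with respect to $\mathcal{C}$, the estimation error from using the empirical distribution instead of $\mathcal{D}$, and the approximation slack $\rho$ of the oracle.

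The plan is to first observe that, under the reward assumption $r_0 = 0$ and $r_j \leq 1$, and since $\bar{\mathcal{C}} \subseteq \Cbounded$ has transfers in $[0,1]$, we have $|u_p(\theta, t)| \leq 1$ for every $\theta$ and every $t \in \bar{\mathcal{C}}$. Hence $u_p$ on input space $\Thetaall$ and parameter space $\bar{\mathcal{C}}$ satisfies the boundedness hypothesis of \Cref{theorem:uniform_learnability} with $H = 1$. Setting $\epsilon' = \epsilon/3$ and choosing the constant in $N$ appropriately, \Cref{theorem:uniform_learnability} implies that $\bar{\mathcal{C}}$ is $(\epsilon/3, \delta, N)$-uniformly learnable with
\[
N = O\!\left(\frac{1}{\epsilon^2}\bigl(\pdim{\bar{\mathcal{C}}} + \log(1/\delta)\bigr)\right).
\]
Condition on the event of probability at least $1-\delta$ that uniform convergence holds, i.e.\ $|\mathbb{E}_{\theta \sim \widehat{\mathcal{D}}}[u_p(\theta, t')] - \mathbb{E}_{\theta \sim \mathcal{D}}[u_p(\theta, t')]| \leq \epsilon/3$ simultaneously for all $t' \in \bar{\mathcal{C}}$.

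Now let $t^\star \in \bar{\mathcal{C}}$ be a (near-)maximizer of $\mathbb{E}_{\theta \sim \mathcal{D}}[u_p(\theta, \cdot)]$ over $\bar{\mathcal{C}}$, so $\mathbb{E}_{\theta \sim \mathcal{D}}[u_p(\theta, t^\star)] \geq \mathrm{OPT}(\bar{\mathcal{C}}, \mathcal{D}) - \eta$ for an arbitrarily small $\eta > 0$ (which will vanish at the end by taking $\eta \to 0$; alternatively, the sup is achieved in all the cases we apply the lemma). The chain of inequalities is:
\begin{align*}
\mathbb{E}_{\theta \sim \mathcal{D}}[u_p(\theta, t)]
&\geq \mathbb{E}_{\theta \sim \widehat{\mathcal{D}}}[u_p(\theta, t)] - \epsilon/3 && \text{(uniform convergence at $t$)} \\
&\geq \mathrm{OPT}(\bar{\mathcal{C}}, \widehat{\mathcal{D}}) - \rho - \epsilon/3 && \text{(oracle guarantee)} \\
&\geq \mathbb{E}_{\theta \sim \widehat{\mathcal{D}}}[u_p(\theta, t^\star)] - \rho - \epsilon/3 && \text{(definition of $\mathrm{OPT}$)} \\
&\geq \mathbb{E}_{\theta \sim \mathcal{D}}[u_p(\theta, t^\star)] - \rho - 2\epsilon/3 && \text{(uniform convergence at $t^\star$)} \\
&\geq \mathrm{OPT}(\bar{\mathcal{C}}, \mathcal{D}) - \rho - 2\epsilon/3 - \eta \\
&\geq \mathrm{OPT}(\mathcal{C}, \mathcal{D}) - \epsilon/3 - \rho - 2\epsilon/3 - \eta && \text{(representation error of $\bar{\mathcal{C}}$)} \\
&= \mathrm{OPT}(\mathcal{C}, \mathcal{D}) - \epsilon - \rho - \eta.
\end{align*}
Letting $\eta \to 0$ yields the claimed bound. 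There is no serious obstacle here; the only subtlety is the careful split of the error budget into three equal parts of size $\epsilon/3$ (one for each of the two uniform convergence applications and one for the representation error), and verifying that the $H = 1$ boundedness is inherited from $\bar{\mathcal{C}} \subseteq \Cbounded$ together with $r_j \leq 1$, so that \Cref{theorem:uniform_learnability} applies with a constant independent of $m$.
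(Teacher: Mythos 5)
Your proposal is correct and follows essentially the same route as the paper's proof: draw $N$ samples, invoke the oracle on the empirical distribution, and chain the two uniform-convergence bounds (each costing $\epsilon/3$), the oracle slack $\rho$, and the representation error $\epsilon/3$ of $\bar{\mathcal{C}}$ with respect to $\mathcal{C}$. Your extra care about whether the supremum over $\bar{\mathcal{C}}$ is attained (the $\eta \to 0$ device) is a minor refinement the paper glosses over, but the argument is otherwise identical.
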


\begin{proof}
    Note that for any bounded contract $t$, we have $|u_p(\theta, t)| \leq 1$ for all agent types $\theta = (f,c)$ since 
     $u_p(\theta, t) = \sum_{j \in \setoutcomes} f_{i^{\star}(\theta, t) ,j} (r_j - t_j)$ and
     $r_j - t_j  \in [-1, 1]$ because both $r_j$ and $t_j$ are bounded in $[0, 1]$ for all outcomes $j \in \setoutcomes$ and 
     $f_{i^{\star}(\theta, t)}$ is a probability distribution.

    Consequently, by \Cref{theorem:uniform_learnability}, the class of contracts $\bar{\mathcal{C}}$ is $(\epsilon/3, \delta, N)$-uniformly learnable for some 
    $N = O\left( ({1}/{\epsilon^2}) \left(\pdim{\bar{\mathcal{C}}} + \log \left(1 / {\delta}\right) \right)\right)$.

Consider the algorithm that takes as input a set of $N$ samples $S = \{\theta_0, \ldots, \theta_{N-1}\}$, where $\theta_i = (f_i, c_i)$, and returns the contract $\widehat{t}$ that 
the approximation oracle $\mathcal{O}( \bar{\mathcal{C}}, \rho ) $ would return. 
Additionally, let $\widehat{t^\star}$ and ${t^\star}$ be the optimal contracts in $\bar{\mathcal{C}}$ for the empirical and true distribution respectively. 
Observe that:
\begin{align*}
\mathbb{E}_{\theta \sim \mathcal{D}} [u_p(\theta, \widehat{t})] 
&\geq\mathbb{E}_{\theta \sim \mathcal{\widehat{D}}} [u_p(\theta, \widehat{t})] -  \epsilon/3 && (\text{by uniform learnability}) \\
&\geq\mathbb{E}_{\theta \sim \mathcal{\widehat{D}}} [u_p(\theta, \widehat{t^{\star}})] -  \epsilon/3 - \rho && (\text{by the approximation oracle}) \\
&\geq\mathbb{E}_{\theta \sim \mathcal{\widehat{D}}} [u_p(\theta, {t^{\star}})] -  \epsilon/3 - \rho && (\text{because $\widehat{t^\star}$ optimal for $\widehat{\mathcal{D}}$}) \\
&\geq\mathbb{E}_{\theta \sim \mathcal{{D}}} [u_p(\theta, {t^{\star}})] -  2 \epsilon/3 - \rho  && (\text{by uniform learnability})  \\
& = \mathrm{OPT}(\mathcal{B}_{\epsilon/3}, \mathcal{D}) -  2 \epsilon/3 - \rho && (\text{because ${t^\star}$ optimal for ${\mathcal{D}}$})\\
& \le  \mathrm{OPT}(\Cbounded, \mathcal{D}) -  2 \epsilon/3 - \rho -   \epsilon/3 && (\text{by \Cref{thm:discretization}}) \\ 
& = \mathrm{OPT}(\mathcal{C}, \mathcal{D}) -  \epsilon - \rho  
\end{align*}
Thus, the contract $\widehat{t}$ returned by the oracle satisfies the desired guarantee.
Note that our algorithm just needs to perform a single query to the oracle $\mathcal{O}( \bar{\mathcal{C}}, \rho ) $.
\end{proof}

\begin{remark}
Note that an analogous argument
cannot be made for the class of $\Cunbounded$ because the principal's utility can be arbitrarily negative for contracts with very high payments. 
\end{remark}

\paragraph{Linear Contracts.}
Next, we show how \Cref{lem:learning_with_oracle} can be combined with an existing approximation oracle for the discretization of linear contracts to derive a time- and sample- efficient algorithm.

\begin{theorem}[Efficient Learning for $\mathcal{C}_{\mathrm{linear}}$]\label{thm:efficient_linear}
 Set the rewards as $r_0 = 0$ and $r_1, r_2, \ldots, r_{m-1} \leq 1$.
For any $\epsilon > 0$ and $\delta > 0$, there exists an algorithm $\mathcal{A}$ satisfying the following properties:
$\mathcal{A}$ takes as input samples from any distribution $\mathcal{D}$ over the agent types, with the number of samples bounded by:
\begin{align*}
    N = O\left(\frac{1}{\epsilon^2}\left( \log \left(\frac{1}{\epsilon}\right) + \log \left(\frac{1}{\delta}\right) \right)\right).
\end{align*} 
With probability at least $1-\delta$, $\mathcal{A}$ returns a contract ${t} \in \mathcal{C}_{\mathrm{linear}}$ with the following guarantee:
\begin{align*}
    \mathbb{E}_{\theta \sim \mathcal{D}}[u_p(\theta, {t})] \geq \mathrm{OPT}(\mathcal{C}_{\mathrm{linear}}, \mathcal{D}) - \epsilon.
\end{align*}
Additionally, the running time of $\mathcal{A}$ is polynomial in the size of the action space $n$, the number of outcomes $m$, $1/\epsilon$, and $\log(1/\delta)$. 
\end{theorem}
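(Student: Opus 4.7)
\textbf{Proof Proposal for \Cref{thm:efficient_linear}.}

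The plan is to apply \Cref{lem:learning_with_oracle} with the outer class $\mathcal{C} = \mathcal{C}_{\mathrm{linear}}$ and the inner class $\bar{\mathcal{C}} = \mathcal{L}_{\epsilon/3}$, the discretized class of $(\epsilon/3)$-discrete linear contracts from \Cref{def:t-discretized-linear}. The three ingredients required by \Cref{lem:learning_with_oracle} are all already in hand: (i) by \Cref{lemma:t_contracts_representation_error}, $\mathcal{L}_{\epsilon/3}$ is an $(\epsilon/3)$-approximation of $\mathcal{C}_{\mathrm{linear}}$; (ii) by \Cref{theorem:t_contracts_true_pd}, $\pdim{\mathcal{L}_{\epsilon/3}} = O(\log(1/\epsilon))$; and (iii) $\mathcal{L}_{\epsilon/3} \subseteq \mathcal{C}_{\mathrm{linear}} \subseteq \Cbounded$ under the assumption $r_j \leq 1$. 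Plugging the pseudo-dimension bound into the sample complexity expression of \Cref{lem:learning_with_oracle} immediately yields the claimed sample size $N = O((1/\epsilon^2)(\log(1/\epsilon) + \log(1/\delta)))$ and the utility guarantee $\mathbb{E}_{\theta \sim \mathcal{D}}[u_p(\theta, t)] \geq \mathrm{OPT}(\mathcal{C}_{\mathrm{linear}}, \mathcal{D}) - \epsilon$, provided we can supply an exact approximation oracle (i.e., $\rho = 0$).

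The remaining task is to implement $\mathcal{O}(\mathcal{L}_{\epsilon/3}, 0)$ efficiently. The key observation is that $|\mathcal{L}_{\epsilon/3}| = O(1/\epsilon)$, so the oracle can simply enumerate all $\epsilon/3$-discrete linear contracts $\alpha \in \mathcal{L}_{\epsilon/3}$, compute the empirical expected utility $(1/N)\sum_{k=0}^{N-1} u_p(\theta_k, \alpha)$ on the input list of agent types, and return the best. For each fixed $\alpha$ and each sample $\theta_k = (f_k, c_k)$, evaluating $u_p(\theta_k, \alpha)$ requires identifying the agent's best response $i^\star(\theta_k, \alpha)$, which takes $O(n m)$ time since $u_a(\theta_k, \alpha, i) = \alpha \cdot \sum_j f_{k, i, j} r_j - c_{k, i}$ can be computed for every action. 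Thus the oracle runs in time $O((1/\epsilon) \cdot N \cdot n m)$, which is polynomial in $n$, $m$, $1/\epsilon$, and $\log(1/\delta)$.

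Putting everything together, the algorithm $\mathcal{A}$ draws $N$ i.i.d.\ samples from $\mathcal{D}$, runs the above enumeration-based oracle, and outputs its chosen contract. Correctness and the sample bound follow from \Cref{lem:learning_with_oracle} applied with $\rho = 0$, and the running time is dominated by the oracle call. I do not expect a genuinely hard step here: the pseudo-dimension upper bound of \Cref{theorem:t_contracts_true_pd} together with the representation error guarantee of \Cref{lemma:t_contracts_representation_error} do the statistical work, and the finite cardinality $|\mathcal{L}_{\epsilon/3}| = O(1/\epsilon)$ trivializes the computational side. The only modest care needed is in checking that using $\mathcal{L}_{\epsilon/3}$ rather than $\mathcal{L}_{\epsilon}$ leaves the sample bound unchanged up to constants, which is immediate since constant factors inside $\log(1/\epsilon)$ are absorbed.
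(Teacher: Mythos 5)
Your proposal is correct and follows essentially the same route as the paper's proof: apply \Cref{lem:learning_with_oracle} with $\bar{\mathcal{C}} = \mathcal{L}_{\epsilon/3}$ and $\mathcal{C} = \Clinear$, using \Cref{lemma:t_contracts_representation_error} and \Cref{theorem:t_contracts_true_pd}, and implement the oracle $\mathcal{O}(\mathcal{L}_{\epsilon/3}, 0)$ by brute-force enumeration over the $O(1/\epsilon)$ discretized contracts. Your statement of the pseudo-dimension bound as $O(\log(1/\epsilon))$ is in fact the correct one (the paper's proof contains a typo writing $O(1/\epsilon)$ there), and your running-time accounting is slightly more explicit but equivalent.
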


\begin{proof}
     By \Cref{theorem:t_contracts_true_pd}, the pseudo-dimension of $\mathcal{L}_{\epsilon/3}$ is  $\pdim{\mathcal{L}_{\epsilon/3}} = O(1/\epsilon)$, and by \Cref{lemma:t_contracts_representation_error}, the representation error of $\mathcal{L}_{\epsilon/3}$ is $\epsilon/3$.
   The sample complexity result follows by applying \Cref{lem:learning_with_oracle} to $\bar{\mathcal{C}} = \mathcal{L}_{\epsilon/3}$ and $\mathcal{C} = \Clinear$ with oracle $\mathcal{O}(\mathcal{L}_{\epsilon/3}, 0)$.

  It remains to show that the oracle $\mathcal{O}(\mathcal{L}_{\epsilon/3}, 0)$ can be implemented in polynomial time. This follows since $\mathcal{L}_{\epsilon/3} = \{0, \epsilon/3, 2\epsilon/3, \ldots, 1\}$ contains 
  $\lceil 3/\epsilon \rceil + 1$ 
  contracts and for each of those contracts we can compute the agent's best response and the principal's utility in polynomial time with respect to $n$ and $m$ {by simply calculating the agent's utility for all possible actions}. 
  The result follows since $N$ is polynomial in $1/\epsilon$ and $\log(1/\delta)$.
\end{proof}

\paragraph{Bounded Contracts.}

Next, we show how \Cref{lem:learning_with_oracle} can be combined with an approximation oracle for $\Cbounded$ to derive a 
sample-efficient algorithm for $\Cbounded$.

\begin{theorem}[Efficient learning for $\Cbounded$]\label{thm:bounded_efficient}
 Set the rewards as $r_0 = 0$ and $r_1, r_2, \ldots, r_{m-1} \leq 1$.
For any $\epsilon > 0$, $\delta > 0$, and $\rho \geq 0$, there exists an algorithm $\mathcal{A}$ that, 
given black bock access to a single query to the oracle $\mathcal{O}(\Cbounded, \rho)$, satisfies the following properties: $\mathcal{A}$ takes as input samples from any distribution $\mathcal{D}$ over the agent types, with the number of samples bounded by:
\begin{align*}
    N = O\left(\frac{1}{\epsilon^2} \left( m\log n + m\log m
      + \log \left(\frac{1}{\delta}\right) \right)\right).
\end{align*} 
With probability at least $1-\delta$,  $\mathcal{A}$ returns a contract ${t} \in \Cbounded$ with the following guarantee:
\begin{align*}
    \mathbb{E}_{\theta \sim \mathcal{D}}[u_p(\theta, {t})] \geq \mathrm{OPT}(\Cbounded, \mathcal{D}) - \epsilon - \rho.
\end{align*}
\end{theorem}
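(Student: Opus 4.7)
The plan is to invoke \Cref{lem:learning_with_oracle} in the simplest possible way, setting both the inner and outer classes to be the full bounded class itself. Specifically, I will apply \Cref{lem:learning_with_oracle} with $\bar{\mathcal{C}} = \mathcal{C} = \Cbounded$ and the same parameters $\epsilon$, $\delta$, and $\rho$ as in the statement. The hypothesis of \Cref{lem:learning_with_oracle} requires $\bar{\mathcal{C}} \subseteq \mathcal{C} \subseteq \Cbounded$ and that $\bar{\mathcal{C}}$ has representation error at most $\epsilon/3$ with respect to $\mathcal{C}$; both are immediate in this instantiation, since $\Cbounded \subseteq \Cbounded$ and the representation error of a class with respect to itself is $0 \leq \epsilon/3$.

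The algorithm $\mathcal{A}$ is then exactly the one produced by \Cref{lem:learning_with_oracle}: draw $N$ i.i.d.\ samples $\theta_0, \ldots, \theta_{N-1}$ from $\mathcal{D}$, feed them to the oracle $\mathcal{O}(\Cbounded, \rho)$, and return the contract the oracle produces. By \Cref{lem:learning_with_oracle}, with probability at least $1 - \delta$ the returned contract $t \in \Cbounded$ satisfies
\[
\mathbb{E}_{\theta \sim \mathcal{D}}[u_p(\theta, t)] \geq \mathrm{OPT}(\Cbounded, \mathcal{D}) - \epsilon - \rho,
\]
which is exactly the utility guarantee we want.

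It remains to plug in the pseudo-dimension bound. By \Cref{theorem:general_contract_pdim}, we have $\pdim{\Cbounded} = O(m \log n + m \log m)$. Substituting this into the sample-complexity expression of \Cref{lem:learning_with_oracle} yields
\[
N = O\left( \frac{1}{\epsilon^2} \left( \pdim{\Cbounded} + \log\frac{1}{\delta} \right) \right) = O\left( \frac{1}{\epsilon^2} \left( m \log n + m \log m + \log\frac{1}{\delta} \right) \right),
\]
matching the bound claimed in the theorem.

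Honestly, there is no serious obstacle here; the theorem is essentially a direct corollary of \Cref{lem:learning_with_oracle} combined with \Cref{theorem:general_contract_pdim}. The only point worth flagging is that, unlike the linear case in \Cref{thm:efficient_linear}, we do not pass through a discretized subclass such as $\Bepsilon$: doing so would replace $m \log n$ by $m \log(1/\epsilon)$ in the sample complexity (via \Cref{lemma:se_contracts_pseudo-dimension}), which is incomparable in general, and it would require an oracle for the discretized class rather than the stated oracle on $\Cbounded$. Since the statement is phrased in terms of $\mathcal{O}(\Cbounded, \rho)$ and the bound contains $m \log n$, applying \Cref{lem:learning_with_oracle} directly with $\bar{\mathcal{C}} = \Cbounded$ is the right instantiation, and no time-efficiency claim needs to be made because the theorem only assumes black-box oracle access.
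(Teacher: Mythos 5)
Your proposal is correct and matches the paper's own proof exactly: the paper likewise applies \Cref{lem:learning_with_oracle} with $\bar{\mathcal{C}} = \mathcal{C} = \Cbounded$ (noting the representation error of $\Cbounded$ with respect to itself is $0$) and plugs in the pseudo-dimension bound from \Cref{theorem:general_contract_pdim}. Your closing remark about why one should not route through the discretized class $\Bepsilon$ here is also consistent with how the paper separates this result from \Cref{thm:cone_efficient}.
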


\begin{proof}
By \Cref{theorem:general_contract_pdim}, the pseudo-dimension of $\Cbounded$ is  $\pdim{\Cbounded} = O (m\log n + m \log m)$. We also clearly have that $\Cbounded$ has representation error of $0$ with respect to $\Cbounded$. Therefore, the result follows by applying \Cref{lem:learning_with_oracle} with $\bar{\mathcal{C}} = \Cbounded$ and $\mathcal{C} = \Cbounded$ with oracle $\mathcal{O}(\Cbounded, \rho)$.
\end{proof}

\paragraph{Discretized Bounded Contracts.} 

Next, we show how we can leverage the discretization of bounded contracts of \Cref{thm:discretization} to obtain a sample-complexity bound that is independent of $n$.

\begin{theorem}[Sample complexity for $\Cbounded$ using discretization]\label{thm:cone_efficient}
 Set the rewards as $r_0 = 0$ and $r_1, r_2, \ldots, r_{m-1} \leq 1$.
For any $\epsilon > 0$, $\delta > 0$, and $\rho \geq 0$, there exists an algorithm $\mathcal{A}$ that, 
given black-bock access to a single query to the oracle $\mathcal{O}(\Bepsilon, \rho)$,  satisfies the following properties: $\mathcal{A}$ takes as input samples from any distribution $\mathcal{D}$ over the agent types, with the number of samples bounded by:
\begin{align*}
    N = O\left( \frac{1}{\epsilon^2}
    \left(m \log \left( \frac{1}{\epsilon}\right) + m \log m 
      + \log \left(\frac{1}{\delta} \right) \right)\right).
\end{align*} 
With probability at least $1-\delta$,  $\mathcal{A}$ returns a contract ${t} \in \Cbounded$ with the following guarantee:
\begin{align*}
    \mathbb{E}_{\theta \sim \mathcal{D}}[u_p(\theta, {t})] \geq \mathrm{OPT}(\Cbounded, \mathcal{D}) - \epsilon - \rho.
\end{align*}
\end{theorem}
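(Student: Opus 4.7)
The plan is to prove this theorem as a direct consequence of the general machinery assembled so far, specifically \Cref{lem:learning_with_oracle} combined with the pseudo-dimension and representation-error guarantees for the discretized class $\Bepsilon$ established in \Cref{lemma:se_contracts_pseudo-dimension}. The template of this proof should closely mirror that of \Cref{thm:bounded_efficient} and \Cref{thm:efficient_linear}, with the only substantive change being the choice of the intermediate class $\bar{\mathcal{C}}$.

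Concretely, first I would instantiate \Cref{lemma:se_contracts_pseudo-dimension} with discretization parameter $\epsilon' = \epsilon/3$ (rather than $\epsilon$) in order to satisfy the $\epsilon/3$ representation-error hypothesis of \Cref{lem:learning_with_oracle}. This yields a class $\mathcal{B}_{\epsilon/3} \subseteq \Cbounded$ whose representation error with respect to $\Cbounded$ is at most $\epsilon/3$, and whose pseudo-dimension satisfies
\[
\pdim{\mathcal{B}_{\epsilon/3}} = O\bigl(m \log(3/\epsilon) + m \log m\bigr) = O\bigl(m \log(1/\epsilon) + m \log m\bigr).
\]
Second, I would invoke \Cref{lem:learning_with_oracle} with $\bar{\mathcal{C}} = \mathcal{B}_{\epsilon/3}$, $\mathcal{C} = \Cbounded$, and oracle $\mathcal{O}(\mathcal{B}_{\epsilon/3}, \rho)$. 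Plugging the above pseudo-dimension bound into the sample-complexity expression from that lemma produces exactly
\[
N = O\!\left(\frac{1}{\epsilon^2}\Bigl(m\log(1/\epsilon) + m\log m + \log(1/\delta)\Bigr)\right),
\]
and the utility guarantee $\mathbb{E}_{\theta \sim \mathcal{D}}[u_p(\theta, t)] \geq \mathrm{OPT}(\Cbounded, \mathcal{D}) - \epsilon - \rho$ follows verbatim from the conclusion of \Cref{lem:learning_with_oracle}.

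There is essentially no substantive obstacle; the proof is a packaging exercise. The one cosmetic subtlety is the mismatch between the oracle name $\mathcal{O}(\Bepsilon, \rho)$ used in the statement and the class $\mathcal{B}_{\epsilon/3}$ actually queried in the proof. This is absorbed by the implicit constant in the sample-complexity big-$O$ (reparameterizing $\epsilon$ by a constant factor does not change the asymptotics of $m\log(1/\epsilon)$), and is analogous to how \Cref{thm:efficient_linear} uses $\mathcal{L}_{\epsilon/3}$ even though the discretization was defined for a generic $\epsilon$. Note finally that, in contrast to \Cref{thm:efficient_linear}, I would make no claim about polynomial time: the oracle is assumed as a black box, and no efficient implementation is available in general for $\mathcal{B}_{\epsilon/3}$, so the running time simply depends on the oracle's cost.
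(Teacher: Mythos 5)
Your proposal is correct and follows essentially the same route as the paper's proof: instantiate \Cref{lemma:se_contracts_pseudo-dimension} at parameter $\epsilon/3$ to get representation error $\epsilon/3$ and pseudo-dimension $O(m\log(1/\epsilon)+m\log m)$, then apply \Cref{lem:learning_with_oracle} with $\bar{\mathcal{C}}=\mathcal{B}_{\epsilon/3}$, $\mathcal{C}=\Cbounded$, and oracle $\mathcal{O}(\mathcal{B}_{\epsilon/3},\rho)$. Your remark on the cosmetic mismatch between $\Bepsilon$ in the statement and $\mathcal{B}_{\epsilon/3}$ in the proof is apt and matches the paper's implicit handling of the same point.
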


\begin{proof}
    By \Cref{lemma:se_contracts_pseudo-dimension}, the pseudo-dimension of $\mathcal{B}_{\epsilon/3}$ is  $\pdim{\mathcal{B}_{\epsilon/3}} = O(m \log (1 / \epsilon) + m \log m)$ and $\mathcal{B}_{\epsilon/3}$ has representation error of $\epsilon/3$ with respect to $\Cbounded$. The result then follows by applying \Cref{lem:learning_with_oracle} with $\bar{\mathcal{C}} = \mathcal{B}_{\epsilon/3}$ and $\mathcal{C} = \Cbounded$ with oracle $\mathcal{O}(\mathcal{B}_{\epsilon/3}, \rho)$.
\end{proof}

\begin{remark}
A natural way to implement the approximation oracle $\mathcal{O}(\Bepsilon, 0)$ is to construct the whole sequence of $|\Bepsilon| = (m/\epsilon)^{O(m)}$ contracts, compute the principal's empirical utility for each of them, and return the optimal contract in the sequence.
\end{remark}

\subsection{Lower Bounds on the Sample Complexity}\label{sec:sample_complexity_lower_bounds}
{
In this section, we establish nearly matching lower bounds for the sample complexity of finding near-optimal contracts, demonstrating that the upper bounds derived using the pseudo-dimension in \Cref{sec:sample_complexity_upper_bounds} are essentially tight. Finally, we prove that for the class of unbounded contracts, no learning algorithm with finite sample complexity exists.
}

\paragraph{Linear Contracts.} 
% The proof of the lemma below appears in \Cref{sec:proofs_for_sample_complexity}.
{
We will reduce the problem of learning an almost optimal linear contract to that of distinguishing between two sufficiently close Bernoulli distributions. 
We will use the following well-known result, see \cite[e.g.][Chapter 4]{DBLP:books/daglib/0035708}.
We also provide a proof in \Cref{sec:proofs_for_sample_complexity}.
}

\begin{restatable}[Sample Complexity Lower Bound for a Distribution over Two Elements]
{lemma}{samplelowerboundtwoelemenets}\label{lem:bernoulli_lb}
 Fix $0 < \epsilon < 1/4$, $0 < \delta < 1/2$, and let $\mathcal{X} = \{0, 1\}$. Consider two Bernoulli distributions $\mathcal{D}_1 = \mathrm{Ber}\left( 1/2 - \epsilon \right)$ and $\mathcal{D}_2 = \mathrm{Ber}\left( 1/2 + \epsilon \right)$ defined over $\mathcal{X}$. Suppose that there is an algorithm that, given $N$ samples from $\mathcal{D}_i$ for any unknown $i \in \{1, 2\}$, correctly decides whether $i = 1$ or $i=2$ with probability at least $1-\delta$. Then, the number of samples that the algorithm uses must be at least $N \geq \Omega((1/\epsilon^2) \log (1/\delta))$.
\end{restatable}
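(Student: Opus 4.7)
The plan is to prove this lower bound via a standard information-theoretic argument based on the data processing inequality for KL divergence, which is the right tool (as opposed to, say, Pinsker's inequality) because it yields the $\log(1/\delta)$ dependence rather than only constants.

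First I would set up the hypothesis test. Let $A \subseteq \mathcal{X}^N$ be the set of sample sequences on which the algorithm outputs ``$i=1$'' (for randomized algorithms, fix the internal randomness adversarially, or equivalently condition). Correctness yields $p_1 := \mathbb{P}_{\mathcal{D}_1^N}[A] \geq 1-\delta$ and $p_2 := \mathbb{P}_{\mathcal{D}_2^N}[A] \leq \delta$. By the data processing inequality applied to the indicator of $A$ (a deterministic function of the samples),
\begin{equation*}
\mathrm{KL}(\mathcal{D}_1^N \,\|\, \mathcal{D}_2^N) \;\geq\; \mathrm{KL}(\mathrm{Ber}(p_1) \,\|\, \mathrm{Ber}(p_2)).
\end{equation*}
Since $\mathrm{KL}(\mathrm{Ber}(\cdot) \,\|\, \mathrm{Ber}(\cdot))$ is monotone in each argument as the parameters move apart, and $p_1 \geq 1-\delta > \delta \geq p_2$ (using $\delta < 1/2$), the right-hand side is at least $\mathrm{KL}(\mathrm{Ber}(1-\delta) \,\|\, \mathrm{Ber}(\delta)) = (1-2\delta)\log\!\bigl((1-\delta)/\delta\bigr) = \Omega(\log(1/\delta))$.

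Next I would use tensorization of KL divergence for product distributions, which gives $\mathrm{KL}(\mathcal{D}_1^N \,\|\, \mathcal{D}_2^N) = N \cdot \mathrm{KL}(\mathcal{D}_1 \,\|\, \mathcal{D}_2)$, and then compute a direct upper bound on the single-sample KL. Writing $p = 1/2 - \epsilon$ and $q = 1/2 + \epsilon$, a short calculation yields
\begin{equation*}
\mathrm{KL}(\mathrm{Ber}(p) \,\|\, \mathrm{Ber}(q)) \;=\; 2\epsilon \cdot \log\!\left(\frac{1+2\epsilon}{1-2\epsilon}\right) \;\leq\; O(\epsilon^2),
\end{equation*}
valid for $\epsilon < 1/4$, using the elementary bound $\log((1+x)/(1-x)) \leq 4x$ for $|x| \leq 1/2$.

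Combining the lower bound from data processing with the upper bound from tensorization gives $N \cdot O(\epsilon^2) \geq \Omega(\log(1/\delta))$, i.e.\ $N = \Omega((1/\epsilon^2)\log(1/\delta))$, as required. No step is a serious obstacle; the most delicate point is just being careful about the monotonicity used to pass from $\mathrm{KL}(\mathrm{Ber}(p_1)\|\mathrm{Ber}(p_2))$ to $\mathrm{KL}(\mathrm{Ber}(1-\delta)\|\mathrm{Ber}(\delta))$ and the quantitative bound on the Bernoulli KL near $1/2$, both of which are routine.
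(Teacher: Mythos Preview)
Your proof is correct and reaches the same conclusion, but it takes a slightly different route than the paper's. The paper argues via the total variation distance and the Bretagnolle--Huber inequality: from the test's error guarantee it obtains $\delta \geq \tfrac{1}{2}(1 - d_{\mathrm{TV}}(\mathcal{D}_1^N,\mathcal{D}_2^N))$, then applies $1 - d_{\mathrm{TV}} \geq \tfrac{1}{2}\exp(-D_{\mathrm{KL}})$, and finally tensorizes and bounds the single-sample KL (arriving at the same $16\epsilon^2$ bound you do). You instead bypass TV entirely and use the data processing inequality to lower-bound $D_{\mathrm{KL}}(\mathcal{D}_1^N\Vert\mathcal{D}_2^N)$ directly by the binary KL $\mathrm{KL}(\mathrm{Ber}(1-\delta)\Vert\mathrm{Ber}(\delta)) = (1-2\delta)\log((1-\delta)/\delta)$. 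Your route is arguably cleaner because it avoids the intermediate TV step; the paper's route has the minor advantage that Bretagnolle--Huber is a black-box inequality requiring no monotonicity argument. Both yield $N \geq c\,\epsilon^{-2}\log(1/\delta)$, and both degrade in the same way as $\delta \to 1/2$ (your factor $1-2\delta$ vanishes; the paper's $\ln(1/(4\delta))$ goes nonpositive), which is immaterial for the intended asymptotic statement.
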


{We establish the following sample complexity lower bound by reducing the problem of learning an approximately optimal linear contract to the fundamental hypothesis testing problem above.}

\begin{theorem}[Sample Complexity Lower Bound for $\Clinear$]\label{thm:sample_lower_bound_linear}
Suppose that $n \geq 2$, $m \geq 2$, $r_0 = 0$, and $r_1 \geq 1$. Consider  any learning algorithm that given $N$ samples from any agent's type distribution $\mathcal{D}$, with probability at least $1-\delta$, returns a linear contract $t$ with $\mathbb{E}_{\theta \sim \mathcal{D}}[u_p(\theta, t)] \geq \mathrm{OPT}(\mathcal{D}, \Clinear) - \epsilon$. Then, the sample complexity is 
 \[
N = \Omega\left( \dfrac{1}{\epsilon^2} \cdot \log\left( \dfrac{1}{\delta} \right) \right).
\]   
\end{theorem}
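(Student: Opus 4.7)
The plan is to reduce the problem of learning an $\epsilon$-optimal linear contract to the Bernoulli distinguishing task of \Cref{lem:bernoulli_lb}. I will exhibit two distributions $\mathcal{D}_-$ and $\mathcal{D}_+$ over agent types that are statistically close (each sample gives little information) yet whose $\epsilon$-approximately optimal linear contracts lie in disjoint sub-intervals of $[0,1]$. Any algorithm meeting the stated guarantee then acts as a distinguisher between the underlying distributions and inherits the Bernoulli sample-complexity lower bound $\Omega((1/\gamma^{2})\log(1/\delta))$ with $\gamma = \Theta(\epsilon/r_{1})$.

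The construction uses only two actions and two outcomes with positive probability; extra outcomes are assigned probability $0$ and extra actions simply duplicate action $0$. Define two types: $\theta_{A}$, in which action $1$ has cost $0$ and deterministically yields outcome $1$, so that (breaking ties in the principal's favor) $u_{p}(\theta_{A},\alpha)=(1-\alpha)r_{1}$ for every $\alpha\in[0,1]$; and $\theta_{B}$, in which action $1$ has cost $r_{1}/2$ and again deterministically yields outcome $1$, so that the agent exerts effort iff $\alpha\ge1/2$ and hence $u_{p}(\theta_{B},\alpha)=(1-\alpha)r_{1}\cdot\mathbf{1}[\alpha\ge1/2]$. Let $\mathcal{D}_{q}$ place weight $q$ on $\theta_{A}$ and $1-q$ on $\theta_{B}$. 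Then $u_{p}(\mathcal{D}_{q},\alpha)$ is piecewise linear in $\alpha$ on $[0,1/2)$ and $[1/2,1]$, and its maximum over $[0,1]$ is the larger of two candidate values: $qr_{1}$ attained at $\alpha=0$, and $r_{1}/2$ attained at $\alpha=1/2$.

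I would then set $q_{\pm}=1/2\pm 2\epsilon/r_{1}$, which is a valid probability for $\epsilon<r_{1}/8$ (and the theorem is trivial otherwise, since then $1/\epsilon^{2}=O(1)$). Under $\mathcal{D}_{q_{+}}$ the optimum is $\alpha=0$ with value $r_{1}/2+2\epsilon$, and every $\alpha\ge1/2$ yields at most $r_{1}/2$, so any $\epsilon$-optimal contract must satisfy $\alpha<1/2$. Symmetrically, every $\epsilon$-optimal contract for $\mathcal{D}_{q_{-}}$ must satisfy $\alpha\ge1/2$. Hence an algorithm that outputs an $\epsilon$-optimal contract with probability $1-\delta$ identifies the underlying $q$ with that probability. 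Since a sample from $\mathcal{D}_{q_{\pm}}$ is information-theoretically equivalent to a sample from $\mathrm{Ber}(q_{\pm})$ (via the indicator of whether the type is $\theta_{A}$), applying \Cref{lem:bernoulli_lb} with parameter $2\epsilon/r_{1}$ gives $N=\Omega((r_{1}/\epsilon)^{2}\log(1/\delta))\ge\Omega((1/\epsilon^{2})\log(1/\delta))$, where the final inequality uses $r_{1}\ge1$.

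The main obstacle is designing a construction whose utility separation between candidate optima scales linearly with the shift $|q-1/2|$ in mixture weight, so that a $\Theta(\epsilon)$ utility gap translates into only a $\Theta(\epsilon/r_{1})$ shift in the Bernoulli parameter and thereby yields the tight $\Omega(1/\epsilon^{2})$ bound rather than a weaker $\Omega(1/\epsilon)$ type guarantee. The two-type construction above accomplishes this because the expected-utility curve has only the two candidate maxima $\alpha=0$ and $\alpha=1/2$, whose values depend on $q$ in an explicitly linear fashion; the remainder of the argument is a routine tie-breaking and best-response check together with a single invocation of \Cref{lem:bernoulli_lb}.
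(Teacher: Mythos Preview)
Your proposal is correct and follows essentially the same approach as the paper: construct two agent types so that one has a ``jump'' at $\alpha=1/2$ while the other does not, mix them with Bernoulli weights $1/2\pm\Theta(\epsilon)$, show that any $\epsilon$-optimal linear contract must lie on one side of $1/2$ depending on the sign, and invoke \Cref{lem:bernoulli_lb}. The only differences are cosmetic: your types use deterministic outcomes (which is slightly cleaner than the paper's $(1/2,1/2)$ distributions), and you explicitly carry the scaling with $r_1$ rather than tacitly setting $r_1=1$ as the paper's tables do.
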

\begin{proof}
    We first define two agent's types, $\theta_1$ and $\theta_2$. Let $\theta_1$ be a type defined as follows:
    \begin{table}[H]
        \centering
        \begin{tabular}{|c|cc|c|}
        \hline
        & $r_0 = 0$  & $r_1 = 1$ & \textbf{cost} \\ \hline
        \textbf{action 0:} & $1/2$ & $1/2$ & $c_0 = 0$ \\ 
        \textbf{action 1:} & $1/2$ & $1/2$ & $c_1 = 1/4$\\ \hline
        \end{tabular}
    \end{table}
    Let $\theta_2$ be a type defined as follows:
\begin{table}[H]
        \centering
        \begin{tabular}{|c|cc|c|}
        \hline
        & $r_0 = 0$  & $r_1 = 1$ & \textbf{cost} \\ \hline
        \textbf{action 0:} & $1$ & $0$ & $c_0 = 0$ \\ 
        \textbf{action 1:} & $1/2$ & $1/2$ & $c_1 = 1/4$\\ \hline
        \end{tabular}
    \end{table}
    Note that for $\theta_1$, the agent's best response is always action $0$, and so $u_p(\theta_1, \alpha) = (1/2) \cdot (1-\alpha)$. For $\theta_2$, the agent's best response is action $0$ for $\alpha < 1/2$, and action $1$ for $\alpha\geq 1/2$. Hence, $u_p(\theta_2, \alpha) = 0$ for $\alpha < 1/2$, and $u_p(\theta_2, \alpha) = (1/2) \cdot (1-\alpha)$ for $\alpha \geq 1/2$.
    
    Consider two distributions over the agent's type space, $\mathcal{D}_1$ and $\mathcal{D}_2$. The distribution $\mathcal{D}_1$ puts probability mass $1/2-4 \epsilon$ on type $\theta_1$ and probability mass $1/2+4 \epsilon$ on type $\theta_2$. The distribution $\mathcal{D}_2$ puts probability mass $1/2+4 \epsilon$ on type $\theta_1$ and probability mass $1/2-4  \epsilon$ on type $\theta_2$.

    We will show that for all $0 \leq \alpha < 1/2$, we have $u_p(\mathcal{D}_1, \alpha) < \mathrm{OPT}(\mathcal{D}_1, \Clinear) - \epsilon$. Thus, for the distribution $\mathcal{D}_1$, by the assumption, the algorithm $\mathcal{A}$ must return a linear contract in $[1/2,1]$ with probability at least $1-\delta$. 
    We will also show that for all $1/2 \leq \alpha \leq 1$, we have $u_p(\mathcal{D}_2, \alpha) < \mathrm{OPT}(\mathcal{D}_2, \Clinear) - \epsilon$.  Thus, for the distribution $\mathcal{D}_2$, the algorithm $\mathcal{A}$ must return a linear contract in $[0,1/2)$ with probability at least $1-\delta$. By \Cref{lem:bernoulli_lb} we know that distinguishing between $\mathcal{D}_1$ and $\mathcal{D}_2$ with probability at least $1-\delta$ requires at least $\Omega((1/(\epsilon)^2) \log (1/\delta))$ samples, as needed.

    The expected principal's utility for $\mathcal{D}_1$ is $u_p(\mathcal{D}_1, \alpha) = (1/2-4 \epsilon) \cdot u_p(\theta_1, \alpha) + (1/2+4  \epsilon) \cdot u_p(\theta_2, \alpha)$. For $\alpha < 1/2$, we have $u_p(\mathcal{D}_1, \alpha) = (1/2-4  \epsilon) \cdot (1/2) \cdot (1-\alpha)$. For $\alpha \geq 1/2$, we have $u_p(\mathcal{D}_1, \alpha)=(1/2-4  \epsilon) \cdot (1/2) \cdot (1-\alpha) + (1/2+4  \epsilon) \cdot (1/2) \cdot (1-\alpha) = (1/2) \cdot (1-\alpha)$. In particular, $\mathrm{OPT}(\mathcal{D}_1, \Clinear) = u_p(\mathcal{D}_1, 1/2) = (1/2) \cdot (1-1/2) = 1/4$. Moreover, for every $\alpha < 1/2$, we have $u_p(\mathcal{D}_1, \alpha) \leq u_p(\mathcal{D}_1, 0) = (1/2-4\epsilon) \cdot (1/2) \cdot (1-0) = 1/4 - 2\epsilon < \mathrm{OPT}(\mathcal{D}_1, \Clinear) - \epsilon$.

    The expected principal's utility for $\mathcal{D}_2$ is $u_p(\mathcal{D}_2, \alpha) = (1/2+4 \epsilon) \cdot u_p(\theta_1, \alpha) + (1/2-4  \epsilon) \cdot u_p(\theta_2, \alpha)$. For $\alpha < 1/2$, we have $u_p(\mathcal{D}_2, \alpha) = (1/2+4  \epsilon) \cdot (1/2) \cdot (1-\alpha)$. For $\alpha \geq 1/2$, we have $u_p(\mathcal{D}_2, \alpha)=(1/2+4  \epsilon) \cdot (1/2) \cdot (1-\alpha) + (1/2-4  \epsilon) \cdot (1/2) \cdot (1-\alpha) = (1/2) \cdot (1-\alpha)$. In particular, $\mathrm{OPT}(\mathcal{D}_2, \Clinear) = u_p(\mathcal{D}_2, 0) = (1/2+4\epsilon) \cdot (1/2) \cdot (1-0) = 1/4 + 2\epsilon$. Moreover, for every $\alpha \geq 1/2$, we have $u_p(\mathcal{D}_1, \alpha) \leq u_p(\mathcal{D}_1, 1/2) = (1/2) \cdot (1-1/2) = 1/4 < \mathrm{OPT}(\mathcal{D}_1, \Clinear) - \epsilon$. This proves the claim, and so the result follows.
\end{proof}

\paragraph{Bounded Contracts.}
{Building on our technique for linear contracts, we reduce the problem of learning an approximately optimal bounded contract to that of distinguishing between distributions over 
$2n$ elements. 
We bound the sample complexity of distinguishing between such distributions using a a standard idea from learning theory 
\cite[e.g.,][Theorem 10]{diakonikolas2019lecture4}. 
We provide a proof of the lemma in \Cref{sec:proofs_for_sample_complexity}.}

\begin{restatable}[Sample Complexity Lower Bound for a Distribution over $2n$ Elements]
{lemma}{samplelowerboundmanyelemenets}\label{lem:discrete_lb}
Fix $0 < \epsilon < 1/4$ and let $\mathcal{X} = \{1_0, 2_0, 3_0, \ldots, n_0\} \cup \{1_1, 2_1, 3_1, \ldots, n_1\}$.
    For every vector $z \in \{-1, +1\}^n$, define a distribution $\mathcal{D}^{(z)}$ over $\mathcal{X}$ by putting probability mass $(1- \epsilon \cdot z_i)/(2n)$ on $i_0$ and probability mass $(1 + \epsilon \cdot z_i)/(2n)$ on $i_1$. Suppose that there is an algorithm that, given $N$ samples from $\mathcal{D}^{(z)}$ for any unknown $z \in \{-1,+1\}^n$, with probability at least $3/4$, outputs a vector $z' \in \{-1,+1\}^n$ such that $\sum_{i=1}^n \mathbbm{1}[z_i \neq z_i'] \leq (1/4) \cdot n$. Then, the number of samples that the algorithm uses is at least $N = \Omega\left({n}/{\epsilon^2}\right)$.  
\end{restatable}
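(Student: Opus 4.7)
The plan is to use Assouad's lemma, which is the standard tool for lower bounds on parameter estimation when the loss decomposes coordinate-wise (here, Hamming distance). Recall that Assouad's lemma states: for any family of distributions $\{P_z : z \in \{-1,+1\}^n\}$ and any estimator $\hat z$ based on samples,
\[
\sup_{z} \mathbb{E}_{P_z^N}\bigl[d_H(\hat z, z)\bigr] \;\geq\; \frac{n}{2}\, \min_{z \sim z'} \bigl(1 - \mathrm{TV}(P_z^N, P_{z'}^N)\bigr),
\]
where $z \sim z'$ denotes pairs differing in exactly one coordinate. I would apply this with $P_z = \mathcal{D}^{(z)}$.

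First, I would translate the high-probability guarantee into an expectation. Writing $p = \Pr[d_H(\hat z, z) > n/4] \leq 1/4$ and using $d_H \leq n/4$ on success and $d_H \leq n$ always,
\[
\mathbb{E}[d_H(\hat z, z)] \;\leq\; (n/4)(1-p) + n\, p \;=\; n/4 + (3n/4)\, p \;\leq\; 7n/16,
\]
which holds uniformly in $z$. Plugging into Assouad gives $\min_{z \sim z'}(1 - \mathrm{TV}(P_z^N, P_{z'}^N)) \leq 7/8$, so there exist neighbors $z \sim z'$ with $\mathrm{TV}(P_z^N, P_{z'}^N) \geq 1/8$.

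Next I would convert this TV lower bound into a sample-count bound via Pinsker's inequality $\mathrm{TV}^2 \leq \mathrm{KL}/2$, yielding $\mathrm{KL}(P_z^N \,\|\, P_{z'}^N) \geq 1/32$. Since KL tensorizes over product distributions, $\mathrm{KL}(P_z^N \,\|\, P_{z'}^N) = N \cdot \mathrm{KL}(P_z \,\|\, P_{z'})$. For $z, z'$ differing only in coordinate $i$, the distributions agree on every other coordinate, so a direct computation gives
\[
\mathrm{KL}(P_z \,\|\, P_{z'}) \;=\; \frac{1-\epsilon}{2n}\log\frac{1-\epsilon}{1+\epsilon} + \frac{1+\epsilon}{2n}\log\frac{1+\epsilon}{1-\epsilon} \;=\; \frac{\epsilon}{n}\log\frac{1+\epsilon}{1-\epsilon} \;\leq\; O\!\left(\frac{\epsilon^2}{n}\right),
\]
using $\log\frac{1+\epsilon}{1-\epsilon} \leq 3\epsilon$ for $\epsilon < 1/4$. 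Combining, $N \cdot O(\epsilon^2/n) \geq 1/32$, hence $N = \Omega(n/\epsilon^2)$.

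The main obstacle is ensuring that the translation from ``success with probability $3/4$'' to an expected Hamming-distance bound leaves enough slack for Assouad to bite: the bound $\mathbb{E}[d_H] \leq 7n/16$ is strictly below $n/2$ only by the constant factor $7/8 < 1$, so the constants have to be tracked carefully. Had the allowed error radius been $n/3$ instead of $n/4$, this approach would have collapsed and one would need to strengthen the hypothesis or pass to a Fano-style argument with a Gilbert--Varshamov packing of $\{-1,+1\}^n$ at Hamming distance $n/2$; such a packing has size $2^{\Omega(n)}$ and would yield the same $\Omega(n/\epsilon^2)$ bound via $I(Z; X^N) \leq N\cdot \max_{z,z'} \mathrm{KL}(P_z \,\|\, P_{z'})$. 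I prefer Assouad here since it matches the coordinate-wise structure directly and avoids the packing construction.
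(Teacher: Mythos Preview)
Your proof via Assouad's lemma is correct. The expectation conversion $\mathbb{E}[d_H]\le 7n/16$, the Assouad bound, Pinsker, the KL tensorization, and the single-coordinate KL computation $\mathrm{KL}(P_z\|P_{z'})=\tfrac{\epsilon}{n}\log\tfrac{1+\epsilon}{1-\epsilon}=O(\epsilon^2/n)$ all check out, and the constants leave enough slack.

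The paper, however, argues differently. Rather than invoking Assouad, it reduces to the two-point Bernoulli lower bound already proved as \Cref{lem:bernoulli_lb}: it rewrites a draw from $\mathcal{D}^{(z)}$ as first picking a coordinate $i$ uniformly and then sampling from $\mathrm{Ber}(1/2+\epsilon z_i/2)$, argues by a pigeonhole count that at least a $29/30$ fraction of coordinates receive fewer than $C/\epsilon^2$ samples (if $N < (n/30)\cdot C/\epsilon^2$), and then applies the per-coordinate error floor from \Cref{lem:bernoulli_lb} to force the expected Hamming error above the $(28/64)\cdot n$ ceiling implied by the success hypothesis. Your approach is shorter and uses off-the-shelf information-theoretic machinery; theirs is more elementary and self-contained, recycling the binary lemma instead of recomputing anything. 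In spirit the two are the same Assouad-style ``decompose Hamming loss into per-coordinate binary tests'' idea, but the paper carries it out by hand.
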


{
We prove the following sample complexity lower bound by reducing the problem of learning an approximately optimal bounded contract to this multi-element hypothesis testing problem. The details of the reduction are deferred to \Cref{sec:proofs_for_sample_complexity}.}

\begin{restatable}[Sample Complexity Lower Bound for $\Cbounded$]
{theorem}{samplecomplexitylbboundedcontracts}\label{thm:sample_complexity_lb_bounded_contracts}
Suppose that $n \geq 2$ and $m \geq 2$.
 Set the rewards as $r_0 = 0$ and $r_1, r_2, \ldots, r_{m-1} \geq 1$.
 Consider  any learning algorithm $\mathcal{A}$ that given $N$ samples from any agent's type distribution $\mathcal{D}$, with probability at least $1- \delta$ for $0 < \delta \leq 1/4$, returns a bounded contract $t$ with $\mathbb{E}_{\theta \sim \mathcal{D}}[u_p(\theta, t)] \geq \mathrm{OPT}(\mathcal{D}, \Cbounded) - \epsilon$. Then, the sample complexity is:
 \[
N = \Omega\left( \dfrac{1}{\epsilon^2}  \left( m + \log\left( \dfrac{1}{\delta} \right) \right)\right).
\] 
\end{restatable}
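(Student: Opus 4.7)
The lower bound splits into two additive terms, $\Omega(m/\epsilon^2)$ and $\Omega(\log(1/\delta)/\epsilon^2)$, and the plan is to prove each separately and combine. For the $\log(1/\delta)/\epsilon^2$ term, I would reduce directly to \Cref{thm:sample_lower_bound_linear}: the two-type construction used there places all probability mass on outcomes $0$ and $1$, so by \Cref{lem:binary_outcome}, linear contracts are optimal in this subproblem. A short domination argument shows that for any bounded contract $(t_0, t_1, t_2, \ldots, t_{m-1})$, the contract $(0, t_1, 0, \ldots, 0)$ weakly improves the expected principal utility under both $\mathcal{D}_1$ and $\mathcal{D}_2$ (outcomes $2, \ldots, m-1$ never occur, and $t_0 = 0$ is dominant), yielding an $\epsilon$-optimal linear contract via $\alpha = t_1/r_1$; hence the $\Omega((1/\epsilon^2)\log(1/\delta))$ bound of \Cref{thm:sample_lower_bound_linear} transfers.

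For the $\Omega(m/\epsilon^2)$ term, I would reduce to \Cref{lem:discrete_lb} with parameter $n = m-1$. For each non-trivial outcome $i \in \{1, \ldots, m-1\}$ construct a pair of agent types $\theta_{i,0}$ and $\theta_{i,1}$, each with two actions, such that both types produce outcomes only in $\{0, i\}$. This decoupling across outcomes, combined with the standard observation that $t_0 = 0$ is without loss of generality (raising $t_0$ weakly decreases the principal's utility on every type producing outcome $0$ with positive probability), makes the expected principal utility decompose as $\sum_{i=1}^{m-1} u_i(t_i)$. The types are designed so that each $u_i$ has two candidate maxima, at $t_i = 0$ and at an IC threshold $t_i = 1/2$, and the hidden bit $z_i \in \{-1,+1\}$ determines which is larger, with a $z_i$-dependent gap of $\Theta(\epsilon_0/m)$.

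Concretely, the cost-$0$ action of both types produces outcome $i$ with probability $q_0 = (2r_i - 1)/(4r_i)$, and the incentivized action of $\theta_{i,1}$ (with cost $1/(4r_i)$) produces outcome $i$ with probability $q_1 = (2r_i + 1)/(4r_i)$; this keeps the IC threshold at $t_i \geq 1/2$ for every $r_i \geq 1$, and the precise scaling makes the $z_i$-independent part of $u_i(1/2) - u_i(0)$ cancel identically, leaving a signed gap of $(2r_i - 1) \epsilon_0 z_i / (8 r_i (m-1))$, whose magnitude is at least $\epsilon_0/(8(m-1))$. For a hidden $z \in \{-1,+1\}^{m-1}$, defining $\mathcal{D}^{(z)}$ as in \Cref{lem:discrete_lb} with $\epsilon_0 = 32\epsilon$ and decoding an $\epsilon$-optimal algorithm's output via $z'_i = +1$ iff $t_i \geq 1/2$ gives at most $(m-1)/4$ mismatches with $z$, so \Cref{lem:discrete_lb} yields the desired $\Omega((m-1)/\epsilon_0^2) = \Omega(m/\epsilon^2)$ sample complexity.

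The main technical obstacle is the uniformity of the per-bit utility gap in the rewards $r_j \geq 1$. A naive construction (action $1$ deterministically producing outcome $j$ with cost $r_j/2$) has IC threshold $r_j/2$, which is infeasible for bounded contracts when $r_j > 2$; even when feasible, the per-type utility of order $r_j$ makes incentivization strictly preferable regardless of $z_i$ and breaks the reduction. The rescaled probabilities $q_b = (2r_i \pm 1)/(4r_i)$ resolve both issues simultaneously, keeping the IC threshold at $1/2$ and forcing exact cancellation of the $z_i$-independent part of the gap. The remaining verifications---the $t_0 = 0$ reduction, the decomposition across outcomes, and the decoding step---are then routine.
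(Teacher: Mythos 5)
Your proposal is correct and follows essentially the same route as the paper's proof: the $\Omega((1/\epsilon^2)\log(1/\delta))$ term is obtained by reduction to \Cref{thm:sample_lower_bound_linear} via the binary-outcome equivalence, and the $\Omega(m/\epsilon^2)$ term by building, for each outcome $j$, a pair of agent types with an incentive threshold at $t_j = 1/2$, mixing them with $z_j$-dependent weights so the principal's utility decomposes across outcomes, and decoding an $\epsilon$-optimal contract into a vector $z'$ fed to \Cref{lem:discrete_lb}. The one substantive difference is your rescaling of the outcome probabilities to $(2r_i\pm 1)/(4r_i)$ so that the threshold and the per-bit gap are uniform over all rewards $r_i \geq 1$; the paper's tables simply set $r_j = 1$, so your construction is, if anything, a slightly more careful treatment of the stated hypothesis.
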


\paragraph{Unbounded Contracts.}
{
Unlike linear and bounded contracts, the class of unbounded contracts presents a fundamental difficulty.  \Cref{thm:sample_impossibility_unbounded_contracts} establishes the impossibility of a learning algorithm with sufficiently small error. We defer the proof to \Cref{sec:proofs_for_sample_complexity}.
}

\begin{restatable}[Impossibility Result for Unbounded Contracts]
{theorem}{sampleimpossibilityunboundedcontracts}\label{thm:sample_impossibility_unbounded_contracts}
 If $m \geq 3$ and $n \geq 2$, then for any $\epsilon < 1/4$ and any $\delta < 1$, there is no learning algorithm with finite sample complexity that for every distribution $\mathcal{D}$ over the agent's types, with probability at least $1-\delta$, outputs a contract $t$ with  
$\mathbb{E}_{\theta \sim \mathcal{D}}[u_p(\theta, t)] \geq \mathrm{OPT}(\mathcal{D}, \Cunbounded) - \epsilon$. 
\end{restatable}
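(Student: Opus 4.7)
The plan is, given any candidate algorithm $\mathcal{A}$ using $N$ samples, to build a finite family of $K$ distributions $\mathcal{D}^{(1)}, \ldots, \mathcal{D}^{(K)}$ that (i) coincide on the samples with probability bounded away from $0$, and yet (ii) have pairwise disjoint sets of $\epsilon$-optimal contracts. A counting argument on $\mathcal{A}$'s success probabilities then prevents it from being $(1-\delta)$-correct on every $\mathcal{D}^{(i)}$ simultaneously.

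Concretely, I would take $m = 3$ and $n = 2$, and set rewards $r_0 = 0$, $r_1 = 1$, $r_2 = R$ where $R = T_K^2$ depends on our parameter choices. Introduce a shared baseline type $\theta_A$ with action $0$ (cost $0$, outcome $0$ deterministically) and action $1$ (cost $1-2\epsilon$, outcome $1$ deterministically): $\mathrm{OPT}(\theta_A, \Cunbounded) = 2\epsilon$, attained at any contract of the form $(0, 1-2\epsilon, \ast)$. For each $T > 0$, introduce a rare type $\theta_B^{(T)}$ with the same action $0$ and an action $1$ having sparse distribution $f_1 = (1 - 1/T, 0, 1/T)$ and cost $1/2$. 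Incentive compatibility then forces $t_2 \geq t_0 + T/2$, and the optimal contract uses $t_2 = T/2$ to yield principal utility $R/T - 1/2$. Consider the mixtures $\mathcal{D}^{(i)} = (1-p)\theta_A + p \theta_B^{(T_i)}$ with the geometric sequence $T_i = (3\epsilon/p)^{i-1}$; the key technical step is to show that any $\epsilon$-optimal contract for $\mathcal{D}^{(i)}$ must lie in the thin window $W_i := \{t : t_2 \in [T_i/2, T_i(1/2 + \epsilon/p)]\}$. The ratio $3\epsilon/p$ (together with $p < \epsilon$) guarantees that the $W_i$ are pairwise disjoint.

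To close the argument, pick $K > 1/(1-\delta)$ (so $K\delta/(K-1) < 1$) and $p$ small enough that $(1-p)^N > K\delta/(K-1)$. Under every $\mathcal{D}^{(i)}$, the event ``all $N$ samples equal $\theta_A$'' has probability at least $K\delta/(K-1)$, and conditioned on this event the output distribution $\mathcal{T}$ of $\mathcal{A}$ is independent of $i$. The $(1-\delta)$ guarantee then forces $\Pr_{t \sim \mathcal{T}}[t \in W_i] \geq 1 - \delta/(1-p)^N$ for every $i$. Summing over the $K$ disjoint $W_i$ yields $K\bigl(1 - \delta/(1-p)^N\bigr) \leq 1$, contradicting the choice of $p$.

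The main obstacle I anticipate is establishing the narrow-window claim with enough precision. This requires (a) a matching upper bound $\mathrm{OPT}(\mathcal{D}^{(i)}, \Cunbounded) = V_i^* := 2(1-p)\epsilon + p(R/T_i - 1/2)$, which follows from the separate per-type maxima $u_p(\theta_A, t) \leq 2\epsilon$ and $u_p(\theta_B^{(T_i)}, t) \leq R/T_i - 1/2$; (b) ruling out contracts with $t_2 < T_i/2$ (which lose $\geq pR/T_i - p/2$ compared to $V_i^*$, exceeding $\epsilon$ because $R = T_K^2$ makes $pR/T_i$ huge for every $i \leq K$) as well as contracts with $t_2 > T_i(1/2 + \epsilon/p)$ (which lose at least $p \cdot \epsilon/p = \epsilon$ on the $\theta_B^{(T_i)}$ component); and (c) consistent handling of the principal's tie-breaking at the IC boundary $t_2 = T_i/2$. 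A minor bookkeeping check is that $R = T_K^2$ remains a finite reward satisfying $r_2 \geq 1$ even as $p$ shrinks polynomially with $N$, so the construction stays within the theorem's hypothesis class.
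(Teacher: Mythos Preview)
Your disjoint-window argument is internally consistent and the counting step goes through, but there is a quantifier issue that makes your result weaker than what the paper proves: your reward $r_2 = R = T_K^2$ depends on $p$, and $p$ is chosen only after the algorithm's sample budget $N$ is fixed. In the paper's model the reward vector is part of the environment and is known to the algorithm, so the sample complexity may legitimately depend on $r$. Your construction therefore shows only that for every $N$ there is \emph{some} reward vector $(0,1,R_N)$ and some type distribution on which $N$ samples fail---equivalently, that the sample complexity is unbounded in $\|r\|_\infty$. It does not show that for a single fixed reward vector no finite $N$ works, which is the statement the paper actually establishes (with $r=(0,0,1)$). The orderings $\forall N\,\exists r$ and $\exists r\,\forall N$ are genuinely different here, and the circularity (the algorithm picks $N$ knowing $r$, but your $r$ depends on $N$) cannot be resolved within your construction as written: the window-disjointness forces $T_{i+1}/T_i > 1+2\epsilon/p$, so $T_K$ and hence $R$ blow up as $p\to 0$.

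The paper avoids this by making the catastrophe come from unbounded \emph{payments} rather than unbounded \emph{rewards}. It fixes $r=(0,0,1)$ and uses just two distributions: a point mass on a type $\theta^{(1)}$ whose $\epsilon$-optimal contracts are forced to have $t_0$ of order $1/\eta$ (with $\eta$ chosen as a function of $N$), and a mixture that puts small weight on $\theta^{(1)}$ and the rest on a type $\theta^{(2)}$ that deterministically realizes outcome $0$. Any contract with large $t_0$ then gives arbitrarily negative utility on $\theta^{(2)}$, and the rewards never move. Your scheme could likely be repaired along these lines---replace the ``forfeiting $pR/T_i$'' penalty by an overpayment penalty on some outcome---but as written the dependence $R=R(N)$ is a real gap.
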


\section{Extensions to Other Settings}\label{sec:extensions}

\subsection{Combinatorial Contracts}\label{sec:combinatorial_actions}
In this section, we extend our results to the combinatorial actions setting introduced in \citet*{Duetting2021CombinatorialC}. 
{ We demonstrate how pseudo-dimension enables us to leverage structural insights specific to these settings, achieving improvements over the {naive} extension {of our results} to this domain.
}

\paragraph{Combinatorial Actions Model.}%\label{sec:combinatorial_actions}

In this section, we define the combinatorial actions model.
In this setting, there is a ground set of actions $[\bar{n}] = \{0, 1, \ldots, \bar{n}-1\}$, and the agent can take any subset $S \subseteq [\bar{n}]$ of those actions. 
The agent's type $(\bar{f}, \bar{c})$ in this model comprises of a production function $\bar{f}_j : 2^{[\bar{n}]} \to \Delta^m$ and a cost function $\bar{c} : 2^{[\bar{n}]} \to [0,1]$.
If the agent with type $({\bar{f},\bar{c}})$ chooses to take action set $S$, outcome $j$ occurs with probability $\bar{f}_j(S)$ and the agent incurs cost equal to 
$\bar{c}(S)$.

The agent with type $\bar{\theta} = (\bar{f}, \bar{c})$ in the combinatorial actions model, with a set of $\numcombinatorialactions$ combinatorial actions, is equivalent to the agent with type ${\theta = (f, c)}$ in the non-combinatorial setting with $\numcombinatorialactionsexpanded = 2^{\numcombinatorialactions}$ actions, where each of the $2^{\numcombinatorialactions}$ possible sets of actions in the combinatorial setting corresponds to a single action $S$ in the non-combinatorial setting.
The transition probabilities are related as $f_{S,j} = \bar{f}_j(S)$ for every outcome $j \in \setoutcomes$ and corresponding costs are $c_S = \bar{c}(S)$.

Given the equivalence of the two models via the correspondence above, we extend all definitions from the model described in \Cref{sec:contract_prelims} to the combinatorial actions model.
In particular, we can restate \Cref{theorem:linear_contract_pdim} for the setting with combinatorial actions.

\begin{proposition}\label{corollary:pdim_combinatorial_actions}
The pseudo-dimension of the class of linear contracts in the combinatorial actions model with $\numcombinatorialactions$ combinatorial actions 
is $\pdim{\Clinear}  = O (\numcombinatorialactions)$.
\end{proposition}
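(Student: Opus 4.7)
The plan is to invoke \Cref{theorem:linear_contract_pdim} directly on the equivalent non-combinatorial instance. The correspondence described just above the proposition identifies every type $\bar{\theta} = (\bar{f}, \bar{c})$ in the combinatorial model with $\bar{n}$ combinatorial actions with a type $\theta = (f, c)$ in the standard (non-combinatorial) model that has $n = 2^{\bar{n}}$ actions, one per subset $S \subseteq [\bar{n}]$. Crucially, the principal's utility $u_p(\theta, \alpha)$ under any linear contract $\alpha$ is invariant under this correspondence, since both sides compute the same quantity $(1-\alpha) \cdot r_p(\theta, \alpha)$ where the best response of the agent is over exactly the same menu of (action, cost, outcome distribution) triples.

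Given this identification, the class of linear contracts over the combinatorial type space coincides, function by function, with the class of linear contracts over the induced non-combinatorial type space. Hence the pseudo-dimensions of these two classes agree, and we can simply apply \Cref{theorem:linear_contract_pdim} to the non-combinatorial instance with $n = 2^{\bar{n}}$ actions. By \Cref{corollary:linear_contracts_critical_values}, the number of critical values for any induced type is bounded by $n - 1 = 2^{\bar{n}} - 1$, so \Cref{theorem:linear_contract_pdim} gives
\[
\pdim{\Clinear} \;=\; O(\log n) \;=\; O(\log 2^{\bar{n}}) \;=\; O(\bar{n}),
\]
which is exactly the claimed bound.

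There is no real obstacle here; the only thing worth being careful about is verifying that the reduction preserves the function class (not merely the optimal value), so that a shattering in the combinatorial model corresponds to a shattering in the non-combinatorial model of the same size. This is immediate because the agent's best-response operator, and hence $u_p(\cdot, \alpha)$, is defined identically on both sides once subsets are renamed as actions. I would therefore write the proof as a two-sentence argument: state the equivalence, then invoke \Cref{theorem:linear_contract_pdim} with $n = 2^{\bar{n}}$.
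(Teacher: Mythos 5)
Your argument is exactly the one the paper intends: the proposition is stated as a restatement of \Cref{theorem:linear_contract_pdim} under the correspondence that maps each subset of the $\numcombinatorialactions$ combinatorial actions to a distinct action in a non-combinatorial instance with $n = 2^{\numcombinatorialactions}$ actions, yielding $O(\log 2^{\numcombinatorialactions}) = O(\numcombinatorialactions)$. Your added care in noting that the correspondence preserves the function class pointwise (so shatterings transfer) is a correct and worthwhile clarification, but it does not change the route.
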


\paragraph{Tighter Bounds with Structure.}

In the combinatorial actions setting it is natural to impose some form of ``decreasing marginal returns'' on the structure of the agent's production function $\bar{f}$, which in turn imply ``decreasing marginal returns'' on the principal's reward function.

\begin{definition}[Properties of Set Functions] \label{def:set_functions}
    Let $f : 2^{[\numcombinatorialactions]} \to \mathbb{R}$ be a set function. We say that:
    \begin{itemize}
        \item $f$ is additive if $f(S \cup T) = f(S) + f(T)$ for any disjoint $S, T \subseteq [\numcombinatorialactions]$,
        \item $f$ is submodular if $f(S \cup \{i\}) - f(S) \geq f(T \cup \{i\}) - f(T)$ for any $S \subseteq T \subseteq [\numcombinatorialactions]$ and $i \notin T$,
        \item $f$ is supermodular if $-f$ is submodular,
        \item $f$ is gross substitutes if for any two vectors $p \ge  q \in \mathbb{R}^{\numcombinatorialactions}_{\ge 0}$ and any $S \in \arg \max_{S' \subseteq [\numcombinatorialactions]}
f(S') - \sum_{i \in S'} p_i 
$ there is some $T \in \arg \max_{T' \subseteq [\numcombinatorialactions]} 
f(T') - \sum_{i \in T'} q_i 
$ and $\{ i \in S \,|\, q_i = p_i \} \subseteq T$.
    \end{itemize}
\end{definition}

It is well known that any additive function is gross subsitutes, submodular, and supermodular. Moreover, any gross substitutes function is submodular.

\Cref{corollary:linear_contracts_critical_values} gives a bound of $2^{\bar{n}}$ on the number of critical values for any agent type in the combinatorial actions model. \citet*{Duetting2021CombinatorialC,dutting2024combinatorial} provide improved bounds under the gross substitutes and submodular conditions, respectively.

\begin{theorem}[\cite{Duetting2021CombinatorialC}, \cite{dutting2024combinatorial}] \label{thm:combinatorial_critical_values}
    For any agent type $\bar{\theta} = (\bar{f}, \bar{c})$ in the combinatorial actions model,
    we consider the principal's reward as a function of the subset of actions takes by the agent: $r_p(\bar{\theta}, S) = r \cdot \bar{f}(S)$.
    We have the following:
    \begin{enumerate}[label=(\roman*)]
        \item If the principal's reward function $r_p(\bar{\theta}, \cdot)$ is gross substitutes and the cost function $\bar{c}$ is additive, then the number of critical values is $|C_{\bar{f},\bar{c}}| \leq n(n-1)/2$.
        \item If  the principal's reward function $r_p(\bar{\theta}, \cdot)$ is submodular and the cost function $\bar{c}$ is supermodular, then the number of critical values is  $|C_{\bar{f},\bar{c}}| \leq n$.
    \end{enumerate}
    Moreover, in both cases, the agent's best response can be computed in time polynomial in $\bar{n}$ and $m$. 
\end{theorem}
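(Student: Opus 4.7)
The plan is to handle the two cases separately, each exploiting distinct structural properties of the agent's combinatorial optimization under a linear contract, and then obtain the polynomial-time best-response claim as a consequence of the structural analysis.

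For part (i), I would rewrite the agent's objective at a linear contract $\alpha > 0$ as a demand query for the GS valuation $R(\cdot) = r_p(\bar{\theta}, \cdot)$ at the price vector $p_i(\alpha) = c_i/\alpha$. As $\alpha$ sweeps from $0$ to $1$, all prices decrease in lockstep, tracing a monotone ray in price space. The crucial structural step is to invoke the single-improvement property of GS demand along monotone price paths: the demanded bundle changes only at discrete breakpoints, and each change is either adding a single item or swapping one item for another. A charging argument then bounds the total number of such events by $\binom{\bar{n}}{2}$, since each ordered pair $(i,j)$ can account for at most one swap across the trajectory: once $j$ displaces $i$ under uniformly decreasing prices, the GS exchange axiom prevents a later reversal. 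The polynomial-time best response in this case follows from standard demand-query algorithms for GS valuations, such as the Kelso--Crawford ascending auction.

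For part (ii), I would analyze the principal's reward $r_p(\bar{\theta}, \alpha) = R(S^*(\alpha))$, which by \Cref{corollary:linear_contracts_critical_values} is a monotone non-decreasing step function of $\alpha$; the goal is to cap the number of jumps at $\bar{n}$. The key observation exploits the interplay between submodularity of $R$ and supermodularity of $\bar{c}$: the marginal profitability of adding item $i$ to a set $S$, namely $\alpha \bigl[R(S \cup \{i\}) - R(S)\bigr] - \bigl[\bar{c}(S \cup \{i\}) - \bar{c}(S)\bigr]$, is non-increasing in $S$, since the $R$-marginal weakly decreases (submodularity) and the $\bar{c}$-marginal weakly increases (supermodularity). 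Combined with the obvious monotonicity in $\alpha$, this implies that as $\alpha$ grows, items can only be added to---never removed from---the best response. Since only $\bar{n}$ items are available, the reward step function admits at most $\bar{n}$ jumps, giving $|C_{\bar{f},\bar{c}}| \leq \bar{n}$. The same observation yields a polynomial-time algorithm: starting from $S = \emptyset$, greedily add the item with the largest positive marginal profitability until no such item remains, which terminates after at most $\bar{n}$ rounds of $O(\bar{n})$ marginal evaluations each.

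The main obstacle I expect is making the ``items can only be added'' claim in case (ii) fully rigorous for \emph{global} optima, not merely the greedy trajectory. This requires careful handling of ties and a consistent selection rule that breaks ties in the principal's favor, so that distinct jumps of the reward step function can be injectively associated with freshly added items. In case (i), the $\binom{\bar{n}}{2}$ bound similarly rests on a non-trivial structural fact about GS demand along monotone price paths, which I would invoke from prior work \cite{Duetting2021CombinatorialC,dutting2024combinatorial} rather than re-derive from scratch.
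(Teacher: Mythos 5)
This theorem is not proven in the paper at all: it is imported verbatim from \cite{Duetting2021CombinatorialC} and \cite{dutting2024combinatorial}, so there is no in-paper argument to compare yours against. With that caveat, your sketch for part (i) has the right shape -- rewriting the agent's problem as a demand query for $r_p(\bar\theta,\cdot)$ at prices $p_i(\alpha)=c_i/\alpha$, using the single-swap structure of gross-substitutes demand along this price ray, and charging each swap to a pair of items -- and deferring the key exchange lemma to the cited sources is reasonable for a result the paper itself only cites.

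Part (ii), however, contains a genuine gap. With $r_p(\bar\theta,\cdot)$ submodular and $\bar c$ supermodular, the agent's objective $\alpha\, r_p(\bar\theta,S)-\bar c(S)$ is \emph{submodular} in $S$, and decreasing marginals do \emph{not} imply that the maximizer grows with $\alpha$; that monotone-comparative-statics conclusion (Topkis) requires \emph{increasing} differences, i.e.\ a supermodular objective. A two-item example with additive (hence supermodular) costs already breaks your claim: take $R(\emptyset)=0$, $R(\{1\})=0.1$, $R(\{2\})=1$, $R(\{1,2\})=1.001$ (submodular) with $c_1=0.01$, $c_2=0.5$; the best response is $\{1\}$ at $\alpha=0.2$ but $\{2\}$ at $\alpha=0.6$, so item $1$ is removed. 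Consequently the ``at most $\bar n$ jumps because sets are nested'' count does not follow, and your greedy procedure is also unsound: greedily adding the item of largest positive marginal does not compute the exact maximum of a submodular set function (unconstrained submodular maximization is NP-hard in general). The chain structure and the polynomial-time best response you are reaching for are exactly what one gets when the objective is \emph{supermodular} -- i.e.\ supermodular reward and submodular cost -- via Topkis's theorem and submodular-function minimization; note also that, as transcribed here, part (ii) would subsume part (i) with a strictly better bound (gross substitutes implies submodular, additive implies supermodular), which is a strong sign that the modularity conditions in the statement are transposed relative to the source. Either way, your argument as written does not establish the case as stated, and the correct proof must be taken from (or reconstructed along the lines of) the cited works rather than from the greedy/monotonicity reasoning you propose.
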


As the upper bound in \Cref{theorem:linear_contract_pdim} only depends on the number of critical values, 
\Cref{thm:combinatorial_critical_values} immediately yields improved pseudo-dimension bounds for linear contracts in two special cases.

\begin{corollary}[Pseudo-Dimension Upper Bounds with Combinatorial Actions]\label{cor:comb_better}
    Let $\Theta$ be a type space in the combinatorial actions model equal to either:
    \begin{enumerate}[label=(\roman*)]
        \item $\Theta = \{\bar{\theta} = (\bar{f},\bar{c}) : r_p(\bar{f}, \cdot) \text{ is gross substitutes and } \bar{c} \text{ is additive}\}$, or
        \item $\Theta = \{\bar{\theta} = (\bar{f},\bar{c}) : r_p(\bar{f}, \cdot) \text{ is submodular and } \bar{c} \text{ is supermodular}\}$,
    \end{enumerate}
    then, the pseudo-dimension of the class of linear contracts is $\pdim{\Clinear, \Theta} = O(\log \bar{n})$.
\end{corollary}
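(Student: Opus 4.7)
The plan is to chain together the two main results already proved in this section: \Cref{theorem:linear_contract_pdim}, which bounds the pseudo-dimension of linear contracts on a type space $\Theta$ by $O(\log k)$ where $k$ is a uniform upper bound on the number of critical values across all types in $\Theta$, and \Cref{thm:combinatorial_critical_values}, which provides small critical-value bounds in the two structured combinatorial settings. The statement of \Cref{cor:comb_better} follows essentially by composition.

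More concretely, I would first fix the type space $\Theta$ to be one of the two classes in the statement. For any $\bar{\theta} = (\bar{f}, \bar{c}) \in \Theta$, \Cref{thm:combinatorial_critical_values} directly gives a uniform bound on $|C_{\bar{f}, \bar{c}}|$: in case (i), $|C_{\bar{f}, \bar{c}}| \leq \bar{n}(\bar{n}-1)/2$, and in case (ii), $|C_{\bar{f}, \bar{c}}| \leq \bar{n}$. In both cases, the bound is polynomial in $\bar{n}$, namely $k = O(\bar{n}^2)$.

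Next I would invoke \Cref{theorem:linear_contract_pdim} with this value of $k$. Since \Cref{theorem:linear_contract_pdim} was stated for an arbitrary type space and an arbitrary uniform upper bound on the critical-value count, and since the combinatorial actions model is (by the equivalence discussed at the start of \Cref{sec:combinatorial_actions}) just a special case of the non-combinatorial model with $n = 2^{\bar{n}}$ actions, the theorem applies verbatim to $\Theta$ viewed as a subset of $\Thetaall$ in the expanded representation. This yields $\pdim{\Clinear, \Theta} = O(\log k) = O(\log \bar{n}^2) = O(\log \bar{n})$ in case (i), and the same $O(\log \bar{n})$ bound in case (ii).

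I do not anticipate any real obstacle here; the whole argument is a two-line composition. The only subtlety worth a sentence in the write-up is to emphasize that the equivalence between the combinatorial and non-combinatorial models legitimizes applying \Cref{theorem:linear_contract_pdim}, whose statement is phrased in the non-combinatorial language, to the combinatorial type space $\Theta$ without any change. The polynomial-time best-response claim in \Cref{thm:combinatorial_critical_values} is not needed for the pseudo-dimension bound itself (it would only matter if we were also claiming a time-efficient learning algorithm), so I would omit it from this proof.
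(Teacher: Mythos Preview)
Your proposal is correct and matches the paper's own approach: the paper states the corollary as an immediate consequence of \Cref{theorem:linear_contract_pdim} combined with the critical-value bounds of \Cref{thm:combinatorial_critical_values}, with no further argument. Your extra remark about the combinatorial-to-non-combinatorial equivalence is a nice clarification but not something the paper spells out.
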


\paragraph{Efficient Learning Algorithms.} \label{section:computational_aspects_combinatorial_actions}

Next, we establish an efficient learning algorithm for this model.

\begin{theorem}[Efficient Learning with Combinatorial Actions]\label{thm:efficient_combinatorial}
Let $\Theta$ be a subset of the agent's type space in the combinatorial actions model and let $k$ be an upper bound on the number of critical values in $\Theta$, i.e., $|C_{\bar{f}, \bar{c}}| \leq k$ for all $\bar{\theta} = (\bar{f}, \bar{c}) \in \Theta$.
For any $\epsilon > 0$, $\delta > 0$, there exists an algorithm $\mathcal{A}$ satisfying the following properties: $\mathcal{A}$ takes as input samples from any distribution $\mathcal{D}$ over the agent types in the combinatorial actions model, with the number of samples bounded by:
\begin{align*}
    N = O\left(\left(\frac{1}{\epsilon}\right)^2 \left( \log (k)    + \log \left(\frac{1}{\delta}\right) \right)\right).
\end{align*} 
With probability at least $1-\delta$,  $\mathcal{A}$ returns a contract $\alpha \in \mathcal{C}_{\mathrm{linear}}$ with the following guarantee:
\begin{align*}
    \mathbb{E}_{\bar{\theta} \sim \mathcal{D}}[u_p(\bar{\theta}, {\alpha})] \geq \mathrm{OPT}(\mathcal{C}_{\mathrm{linear}}, \mathcal{D}) - \epsilon.
\end{align*}
Additionally, if $\mathcal{A}$ is given black-box access 
to the demand oracle that returns the agent's best response for any contract $\alpha \in \mathcal{C}_{\mathrm{linear}}$ and agent type $\bar{\theta}$, then the running time of $\mathcal{A}$ is polynomial in 
{$k$, $\bar{n}$, $m$, $1/\epsilon$ and $\log(1/\delta)$.} 
\end{theorem}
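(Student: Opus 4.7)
The plan is to combine the pseudo-dimension bound for linear contracts in terms of critical values (\Cref{theorem:linear_contract_pdim}) with the generic learning-with-oracle reduction (\Cref{lem:learning_with_oracle}), and then implement the required empirical-maximizer oracle $\mathcal{O}(\Clinear, 0)$ efficiently using demand queries. By \Cref{theorem:linear_contract_pdim}, $\pdim{\Clinear, \Theta} = O(\log k)$. Applying \Cref{lem:learning_with_oracle} with $\bar{\mathcal{C}} = \mathcal{C} = \Clinear$ and $\rho = 0$ (the representation error is trivially zero, and after normalizing so that $r_j \le 1$ we are in the bounded regime required by the lemma) yields the sample complexity bound $N = O((1/\epsilon^2)(\log k + \log(1/\delta)))$, provided we can exactly maximize the empirical principal utility over $\Clinear$.

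For the computational part, the key structural observation is that the empirical objective $\hat{u}(\alpha) = (1/N) \sum_i (1-\alpha) \cdot r_p(\bar{\theta}_i, \alpha)$ is piecewise affine in $\alpha \in [0,1]$, with all breakpoints lying in $\mathcal{C}_{\mathrm{emp}} = \{0\} \cup \bigcup_{i=1}^N C_{\bar{\theta}_i}$; by hypothesis $|C_{\bar{\theta}_i}| \le k$, so $|\mathcal{C}_{\mathrm{emp}}| \le Nk + 1$. Moreover, by \Cref{corollary:linear_contracts_critical_values} each $r_p(\bar{\theta}_i, \cdot)$ is right-continuous and non-decreasing, so on every interval between consecutive breakpoints $\hat{u}$ equals $(1-\alpha) \cdot \hat{R}$ for some constant $\hat{R} \ge 0$, and is therefore non-increasing. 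Consequently, the maximum of $\hat{u}$ over $[0,1]$ is attained at some element of $\mathcal{C}_{\mathrm{emp}}$, reducing the optimization to a finite enumeration over $O(Nk)$ candidates.

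It remains to enumerate $\mathcal{C}_{\mathrm{emp}}$ and evaluate $\hat{u}$ at its elements. Evaluating $\hat{u}(\alpha)$ for any fixed $\alpha$ costs $N$ demand-oracle calls plus $O(N \bar{n} m)$ arithmetic. To compute $C_{\bar{\theta}_i}$ for a single type, I would run a parametric sweep: starting at $\alpha_0 = 0$ with $S_0 = \text{demand}(\bar{\theta}_i, 0)$, iteratively locate the next critical value $\alpha_{j+1} > \alpha_j$ and the corresponding demand $S_{j+1}$ by querying the demand oracle at strategically chosen $\alpha$ values. For instance, query the demand at $\alpha = 1$ to obtain an ``extremal'' response $T$, compute the indifference level $\alpha' = (\bar{c}(T) - \bar{c}(S_j))/(r \cdot \bar{f}(T) - r \cdot \bar{f}(S_j))$, query the demand at $\alpha'$, and recurse on $[\alpha_j, \alpha']$ whenever an intermediate demand is uncovered. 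Since there are at most $k$ distinct critical values per type, the sweep terminates after $\mathrm{poly}(k, \bar{n}, m)$ demand queries, and aggregating over the $N$ samples together with evaluating $\hat{u}$ at the $O(Nk)$ candidates yields total running time $\mathrm{poly}(N, k, \bar{n}, m) = \mathrm{poly}(k, \bar{n}, m, 1/\epsilon, \log(1/\delta))$.

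The main obstacle I anticipate lies in the sweep subroutine: one must argue that the recursive demand-query strategy actually discovers every critical value of $\bar{\theta}_i$, without skipping intermediate ones. Establishing this cleanly requires maintaining an explicit list of unexplored $\alpha$-intervals and exploiting the monotonicity of $r_p(\bar{\theta}_i, \cdot)$ from \Cref{corollary:linear_contracts_critical_values} to guarantee that each recursive call either certifies an interval contains no further critical values or uncovers a strictly new one, so that the recursion depth and total query count remain bounded by $\mathrm{poly}(k)$.
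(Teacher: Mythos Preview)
Your proposal is correct and follows essentially the same route as the paper: bound $\pdim{\Clinear,\Theta}$ via \Cref{theorem:linear_contract_pdim}, invoke uniform learnability to get the sample bound (you go through \Cref{lem:learning_with_oracle}, the paper unpacks it into \Cref{theorem:uniform_learnability} plus \Cref{lem:empirical_maximization}, which is the same thing), observe that the empirical maximizer lies in $\{0\}\cup\bigcup_i C_{\bar\theta_i}$ (the paper states this separately as \Cref{lemma:empirical_optimal_contract_at_critical_point}), and then enumerate the critical values of each sampled type with demand queries.

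The one place you diverge is the critical-value sweep. The paper does not prove this subroutine from scratch; it imports it as a black box (\Cref{thm:enumerate_critical}) from \cite{dutting2024combinatorial}, which is precisely the recursive interval-splitting procedure you sketch. So the ``main obstacle'' you flag is already a known result in the literature, and the correctness argument you anticipate---each recursive step either certifies an interval is empty of critical values or uncovers a new one, bounding the recursion by $O(k)$---is exactly what that prior work establishes. You need not re-derive it; citing \Cref{thm:enumerate_critical} closes the gap.
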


\begin{remark}
    Similarly to \Cref{cor:comb_better}, for the special cases where the type space $\Theta$ is either:
    \begin{enumerate}[label=(\roman*)]
        \item $\Theta = \{(\bar{f},\bar{c}) : \bar{f} \text{ is gross substitutes and } \bar{c} \text{ is additive}\}$, or
        \item $\Theta = \{(\bar{f},\bar{c}) : \bar{f} \text{ is submodular } \text{and } \bar{c} \text{ is supermodular}\}$,
    \end{enumerate}
    the sample complexity guarantee in \Cref{thm:efficient_combinatorial} becomes:
\begin{align*}
    N = O\left(\left(\frac{1}{\epsilon}\right)^2 \left( \log (\bar{n})    + \log \left(\frac{1}{\delta}\right) \right)\right).
\end{align*}
and the running time of the algorithm $\mathcal{A}$ is polynomial in $\bar{n}$, $m$, $1/\epsilon$, and $1/\delta$.
\end{remark}

The proof of \Cref{thm:efficient_combinatorial} relies on the following result of \citet*{dutting2024combinatorial}.

\begin{theorem}[\cite{dutting2024combinatorial}]\label{thm:enumerate_critical}
Let $\Theta$ be a subset of the agent's type space in the combinatorial actions model and let $k$ be an upper bound on the number of critical values in $\Theta$, i.e., $|C_{\bar{f}, \bar{c}}| \leq k$ for all $\bar{\theta} = (\bar{f}, \bar{c}) \in \Theta$.
There exists an algorithm that for any agent's type $\bar{\theta} = (\bar{f}, \bar{c}) \in \Theta$ enumerates all critical values in $C_{\bar{f}, \bar{c}}$.
If the algorithm is given black-box access 
to the demand oracle that returns the agent's best response for any linear contract $\alpha \in \mathcal{C}_{\mathrm{linear}}$ and any agent's type $\bar{\theta} \in \Theta$, then the running time of the algorithm is polynomial in the number of critical values $k$, the number of combinatorial actions $\bar{n}$, the number of outcomes $m$, $1/\epsilon$, and $\log(1/\delta)$. 
\end{theorem}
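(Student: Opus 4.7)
The plan is to design an algorithm that sequentially discovers the critical values $0 = \alpha_0 < \alpha_1 < \ldots < \alpha_k$ together with the agent's best-response set $S_i$ on each interval $[\alpha_i, \alpha_{i+1})$. I would initialize $S_0$ by querying the demand oracle at $\alpha = 0$, and then, for each $i \geq 0$, run a subroutine that either declares termination or outputs the next pair $(\alpha_{i+1}, S_{i+1})$.

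The subroutine proceeds as follows. Initialize a candidate $T$ by querying the demand oracle at $\alpha = 1$. If $r \cdot \bar{f}(T) = r \cdot \bar{f}(S_i)$, the principal's reward is constant on $[\alpha_i, 1]$ and there are no further critical values. Otherwise, \Cref{corollary:linear_contracts_critical_values} gives $r \cdot \bar{f}(T) > r \cdot \bar{f}(S_i)$, and I would compute the indifference value
\[
\alpha_T \;=\; \frac{\bar{c}(T) - \bar{c}(S_i)}{r \cdot \bar{f}(T) - r \cdot \bar{f}(S_i)},
\]
which is the smallest $\alpha$ at which $T$ becomes as good as $S_i$ for the agent. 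Then query the demand oracle at $\alpha_T$ and let $T'$ be its response. If $T'$ is tied with $S_i$ in agent utility at $\alpha_T$, then $\alpha_T$ is declared the next critical value and $S_{i+1}$ is set to $T'$; by the principal-favoring tie-break, $T'$ has the highest reward among all sets tied at $\alpha_T$. Otherwise $T'$ strictly beats $S_i$ at $\alpha_T$, forcing $\alpha_{T'} < \alpha_T$; the candidate is updated to $T \leftarrow T'$ and the step is repeated.

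Correctness is immediate once one observes that the next critical value equals $\min\{\alpha_T : r \cdot \bar{f}(T) > r \cdot \bar{f}(S_i)\}$ and that the inner loop produces a strictly decreasing sequence of valid upper bounds for this minimum, terminating exactly when the current upper bound is tight.

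The main obstacle will be bounding the number of demand oracle calls. The key observation is that every non-terminating intermediate candidate $T$ in the inner loop is the oracle's response at a query point strictly greater than the yet-to-be-found $\alpha_{i+1}$, so $T$ must coincide with one of the later chain sets $S_{i+1}, S_{i+2}, \ldots, S_k$. Since the indifference values $\alpha_T$ strictly decrease inside the inner loop, the candidates visited there are pairwise distinct, capping each inner loop by $k - i$ iterations. Summing over $i = 0, \ldots, k-1$ yields $O(k^2)$ total oracle calls, with per-iteration bookkeeping polynomial in $\bar{n}$ and $m$ for evaluating $\bar{f}$ and $\bar{c}$ on the returned sets. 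Any residual dependence of the overall running time on $1/\epsilon$ and $\log(1/\delta)$ would come solely from the precision and confidence parameters of the supplied demand oracle itself, which is the only source of such parameters in the reduction.
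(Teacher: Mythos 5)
The paper does not prove this statement---it is imported verbatim from \cite{dutting2024combinatorial} as a black box---so there is no internal proof to compare against; your reconstruction is essentially the standard upper-envelope traversal that the cited reference uses, and it is correct in its main lines. The agent's utility for each action set $S$ is the linear function $\alpha \mapsto \alpha\,(r\cdot\bar f(S)) - \bar c(S)$, the critical values are the points where the slope of the (tie-broken) upper envelope increases, and your characterization of the next critical value as $\min\{\alpha_T : r\cdot\bar f(T) > r\cdot\bar f(S_i)\}$, together with the monotonically shrinking upper bound maintained by the inner loop, is sound. Your indifference formula, the termination test at $\alpha=1$ (via monotonicity of $r_p$), and the conclusion that $S_i$ remains an agent best response on all of $[\alpha_i,\alpha_{i+1})$ all check out.

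One step of your complexity argument is stated imprecisely: you claim each non-terminating inner-loop candidate ``must coincide with one of the later chain sets $S_{i+1},\ldots,S_k$.'' That is not guaranteed, because at a query point strictly between two critical values the oracle may return any agent-utility maximizer of maximal reward, which need not be the specific set returned at the preceding critical value (the principal-favoring tie-break does not distinguish between distinct sets with equal reward). The count survives via a slightly different observation: if $T_j$ and $T_{j+1}$ are consecutive inner-loop candidates with $r\cdot\bar f(T_{j+1}) = r\cdot\bar f(T_j)$, then equal slopes plus each being a maximizer at its own query point forces their utility lines to coincide, contradicting $\alpha_{T_{j+1}} < \alpha_{T_j}$; hence the candidates' rewards are \emph{strictly} decreasing, and since each reward equals $r_p$ at a query point in $[\alpha_{i+1},1]$ it lies among the at most $k-i$ values $r_p(\alpha_{i+1}),\ldots,r_p(\alpha_k)$. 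This gives the same $k-i$ bound per inner loop and $O(k^2)$ oracle calls overall, so the gap is cosmetic rather than structural.
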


{In addition to this result, we use the following lemma, which establishes that any empirically optimal linear contract in the combinatorial actions model must correspond to either zero or a critical value of one of the agent types in the given sample set. }

\begin{lemma} \label{lemma:empirical_optimal_contract_at_critical_point}
Fix any multiset $S = \{\bar{\theta}_0, \ldots, \bar{\theta}_{N-1} \}$ of agent types in the combinatorial actions model, where $\bar{\theta}_i = (\bar{f}_i, \bar{c}_i)$. Let $\alpha \in \mathcal{C}_{\mathrm{linear}}$ be the linear contract maximizing $(1/|S|) \cdot \sum_{\bar{\theta} \in S} u_p(\bar{\theta}, \alpha)$. Then, $\alpha$ is either $0$ or a critical value for one of the agent types in $S$, i.e., either $\alpha = 0$ or $\alpha \in C_{\bar{f}, \bar{c}}$ for some $\bar{\theta} = (\bar{f}, \bar{c}) \in S$.
\end{lemma}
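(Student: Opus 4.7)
The plan is to exploit the step-function structure of $r_p(\bar{\theta}, \cdot)$ given by \Cref{corollary:linear_contracts_critical_values} (extended to combinatorial actions via the equivalence of models). For a fixed type $\bar{\theta} = (\bar{f}, \bar{c})$ and a linear contract $\alpha$, the principal's expected utility simplifies to $u_p(\bar{\theta}, \alpha) = (1-\alpha) \cdot r_p(\bar{\theta}, \alpha)$. By \Cref{corollary:linear_contracts_critical_values}, the expected reward $r_p(\bar{\theta}, \cdot)$ is constant on each half-open interval bounded by consecutive critical values (with the tie-breaking convention ensuring right-continuity). Consequently, on each such interval $u_p(\bar{\theta}, \alpha) = (1-\alpha) \cdot (\text{constant} \ge 0)$ is affine in $\alpha$ with non-positive slope.

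I would then pool critical values across the sample. Let $B = \{0\} \cup \bigcup_{\bar{\theta} \in S} C_{\bar{f}, \bar{c}}$ and order its elements as $0 = \beta_0 < \beta_1 < \cdots < \beta_L$. Between consecutive breakpoints, every individual $u_p(\bar{\theta}_i, \cdot)$ is affine with non-positive slope, so their average
\[
\hat{u}(\alpha) := \frac{1}{|S|} \sum_{\bar{\theta}_i \in S} u_p(\bar{\theta}_i, \alpha) = (1-\alpha) \cdot \frac{1}{|S|} \sum_{\bar{\theta}_i \in S} r_p(\bar{\theta}_i, \beta_j)
\]
is affine and non-increasing on each interval $[\beta_j, \beta_{j+1})$.

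It then follows that the supremum of $\hat{u}$ on each piece $[\beta_j, \beta_{j+1})$ is attained at the left endpoint $\beta_j$, and taking a maximum over all pieces and over the boundary $\alpha = 1$ (where $\hat{u}(1) = 0 \le \hat{u}(0)$), a maximizer must lie in $B$. Since $\beta_0 = 0$ and every other $\beta_j$ is a critical value of some $\bar{\theta} \in S$, the claim follows.

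The only subtle point I anticipate is the boundary behavior at the breakpoints: because $r_p$ can jump upward at a critical value and the pieces are half-open on the right, I need to be explicit that the maximizer of $\hat{u}$ over the compact set $[0,1]$ is actually attained at a breakpoint rather than merely approached as a supremum. This is handled by the right-continuity of $r_p(\bar{\theta}, \cdot)$ (so $\hat{u}$ is right-continuous at each $\beta_j$), ensuring the supremum on $[\beta_j, \beta_{j+1})$ is in fact the value $\hat{u}(\beta_j)$, and the jump at $\beta_{j+1}$, if any, only makes the next interval's left endpoint a candidate maximizer, which is already in $B$.
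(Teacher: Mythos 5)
Your proof is correct and rests on exactly the same fact the paper uses (Lemma~\ref{corollary:linear_contracts_critical_values}: $r_p(\bar\theta,\cdot)$ is constant on the half-open intervals between critical values, so $u_p=(1-\alpha)r_p$ is affine and non-increasing there); the paper phrases this as a local perturbation-to-the-left contradiction at a non-critical maximizer, while you enumerate the pooled breakpoints and argue the supremum on each piece sits at its left endpoint. Your direct version is, if anything, slightly more careful about attainment of the maximum, but it is essentially the same argument.
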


\begin{proof}
Suppose for the purpose of contradiction that there is a linear contract $\alpha^\star \in \mathcal{C}_{\mathrm{linear}}$ such that $(1/|S|) \cdot \sum_{\bar{\theta} \in S} u_p(\bar{\theta}, \alpha^\star) > (1/|S|) \cdot \sum_{\bar{\theta}\in S} u_p(\bar{\theta}, \alpha)$ for $\alpha = 0$ and all $\alpha \in C_{\bar{f},\bar{c}}$ for all $\bar{\theta} = (\bar{f},\bar{c}) \in S$.

Consider some $\bar{\theta}_i = (\bar{f}_i,\bar{c}_i) \in S$. By \Cref{corollary:linear_contracts_critical_values}, since $\alpha^\star > 0$ and $\alpha^\star$ is not a critical value for any of the agent's types in $S$, 
there must be some $\bar{\epsilon}_i > 0$ such that
% \begin{align*}
     $r_p (\bar{\theta}_i, \alpha^\star - \epsilon) = r_p (\bar{\theta}_i, \alpha^\star) \text{ for all $0 < \epsilon \leq \bar{\epsilon}_i$}$.
% \end{align*}
Let  $\epsilon = \min_{i \in [N]} \bar{\epsilon}_i$. We observe that:
\begin{align*}
    \frac{1}{|S|} \sum_{\bar{\theta} \in S} u_p(\bar{\theta}, \alpha^\star - \epsilon) & =  \frac{1}{|S|} \sum_{\bar{\theta} \in S} (1-\alpha^\star + \epsilon) \cdot r_p(\bar{\theta}, \alpha^\star - \epsilon)
    \nonumber \\ 
    & = \frac{1}{|S|} \sum_{\bar{\theta} \in S} (1-\alpha^\star + \epsilon) \cdot r_p(\bar{\theta}, \alpha^\star)
    && (\text{since $\epsilon \leq \bar{\epsilon}_i$ for all $i \in [N]$}) \\ 
    & > \frac{1}{|S|} \sum_{\bar{\theta} \in S} (1-\alpha^\star) \cdot r_p(\bar{\theta}, \alpha^\star)
    \nonumber && (\text{since $\epsilon > 0$}) \\ 
    & = \frac{1}{|S|} \sum_{\bar{\theta} \in S} u_p(\bar{\theta}, \alpha^\star)
\end{align*}
which contradicts the assumption. This concludes the proof.
\end{proof}

We are now ready to prove \Cref{thm:efficient_combinatorial}.

\begin{proof}[Proof of \Cref{thm:efficient_combinatorial}]

 By \Cref{theorem:linear_contract_pdim}, the pseudo-dimension of $\mathcal{C}_{\mathrm{linear}}$ for the agent's type space $\Theta$ is  $\pdim{\Clinear, \Theta} = O(k)$. 
     {Additionally, note that for all linear contracts, we have $|u_p(\bar{\theta}, \alpha)| \leq 1$ for all agent types $\bar{\theta} = (\bar{f},\bar{c})$, since $u_p(\bar{\theta}, \alpha) = (1-\alpha) \cdot r_p(\bar{\theta}, \alpha) = (1-\alpha) \cdot \sum_{j \in \setoutcomes} f_{j}(S) \cdot r_j$ for some set of action $S \subseteq [\numcombinatorialactions]$ and $r_j \in [0,1]$ for all $j \in \setoutcomes$ and $f_i$ is a probability distribution.}
    Consequently, by \Cref{theorem:uniform_learnability}, the class of contracts $\Clinear$ is $(\epsilon/2, \delta, N)$-uniformly learnable over the type space $\Theta$ for some $N = O\left((1/{\epsilon^2}) \left( \log(k)  + \log \left({1}/{\delta}\right) \right)\right)$.

    Consider the algorithm that takes as input a set of $N$ samples $S = \{\bar{\theta}_0, \ldots, \bar{\theta}_{N-1}\}$, where $\bar{\theta}_i = (\bar{f}_i, \bar{c}_i)$, and returns the empirically optimal linear contract $\alpha \in \Clinear$ that maximizes $(1/|S|) \cdot \sum_{\bar{\theta} \in S} u_p(\bar{\theta}, \alpha)$. By \Cref{lem:empirical_maximization}, we have $\mathbb{E}_{\bar{\theta} \sim \mathcal{D}} [u_p(\bar{\theta}, \alpha)]  \geq \mathrm{OPT}(\Clinear, \mathcal{D}) - 2 \cdot (\epsilon/2) =  \mathrm{OPT}(\mathcal{C}_{\mathrm{linear}}, \mathcal{D}) - \epsilon$.
    Thus, the empirically optimal linear contract $\alpha$ satisfies the desired guarantee.

  It remains to show that the algorithm can be implemented in polynomial time. 
  By \Cref{lemma:empirical_optimal_contract_at_critical_point}, we know that the empirically optimal contract $\alpha$ is a critical value for one of the agent's types in the sample.
  By \Cref{thm:enumerate_critical}, we can enumerate the critical values for every agent's type in the sample.
  Note that there are at most $N \cdot k$ of those critical values in total.
  The result follows since for each of those critical values we compute the agent's best response and the principal's utility in polynomial time with respect to $n$ and $m$ {by simply calculating the agent's utility for all possible actions}. 
  The result follows since $N$ is polynomial in $k$, $1/\epsilon$, and $\log(1/\delta)$.
\end{proof}

\subsection{Menus of Contracts} \label{sec:menus}
In this section, we show how pseudo-dimension can be used to derive sample-efficient algorithms for learning menus of contracts. This highlights the flexibility of our approach and its applicability 
beyond the scenarios studied in the previous sections.

First, we define menus of contracts of a bounded size.

\begin{definition}[Menu of Contracts]
A menu of contracts of size \( K \) is a collection of \( K \) contracts, denoted as \( M = (t^0, \dots, t^{K-1}) \), where each contract \( t^k = (t^k_0, \dots, t^k_{m-1}) \in \mathbb{R}_{\geq 0}^m \) for \( k \in [K] \).

An agent with type \( \theta = (f, c) \) selects a contract \( t^{k^\star} \) and corresponding action \( i^\star \) to maximize her expected utility. This is formalized by solving:
\[
(k^\star, i^\star) \in \argmax_{k \in [K], i \in \setactions} u_a(\theta, t^k, i), \quad \text{where } u_a(\theta, t^k, i) = \sum_{j \in \setoutcomes} f_{i,j} t^k_j - c_i.
\]
As is standard in the literature, the agent is assumed to break ties in favor of the principal.

Given the agent's chosen contract \( k^\star \) and best response action \( i^\star \), the principal's expected utility from the menu \( M \) is:
\[
u_p(\theta, M) = \sum_{j \in \setoutcomes} f_{i^\star, j} (r_j - t^{k^\star}_j).
\]

We denote the set of all menus with \( K \) bounded contracts as \( \mathcal{M}_{\mathrm{bounded}}^K \) and the set of all menus with \( K \) unbounded contracts as \( \mathcal{M}_{\mathrm{unbounded}}^K \).
\end{definition}

We derive a pseudo-dimension bound for the classes of menus of unbounded and bounded contracts. The proof is deferred to \Cref{sec:proofsforextensions}.

\begin{restatable}[Pseudo-Dimension Upper Bound for $\mathcal{M}_{\mathrm{bounded}}^K$ and $\mathcal{M}_{\mathrm{unbounded}}^K$]
{theorem}{pdimupperboundsformenus}\label{theorem:menu_general_contracts_pdim}
The pseudo-dimension of $\mathcal{M}_{\mathrm{unbounded}}^K$, and hence also of $\mathcal{M}_{\mathrm{bounded}}^K$, is at most 
% $O (m\log n + m \log m)$.
$O (K \cdot \numoutcomes \cdot  \log (K \cdot \numoutcomes \cdot \numactions)) $.
\end{restatable}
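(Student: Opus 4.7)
The plan is to extend the delineability argument used in \Cref{theorem:general_contract_pdim} from single contracts to menus. Parameterize a menu by the concatenation of its $K$ contracts, so the parameter space is $\mathbb{R}_{\geq 0}^{K \cdot m}$, of dimension $d = K \cdot m$. The goal is to show that $\mathcal{M}_{\mathrm{unbounded}}^K$ is $(K \cdot m, O((Kn)^2))$-delineable, and then invoke \Cref{theorem:delineable_pseudo-dimensions} to conclude $\pdim{\mathcal{M}_{\mathrm{unbounded}}^K} = O(Km \log (Kmn))$. Since $\mathcal{M}_{\mathrm{bounded}}^K \subseteq \mathcal{M}_{\mathrm{unbounded}}^K$, the bound for bounded menus then follows from \Cref{lem:pdim_properties}\ref{enum:pdim_mono}.

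The key observation is that with a menu, the agent's choice is over pairs $(k, i) \in [K] \times [n]$, where $k$ indexes the contract and $i$ indexes the action. For a fixed agent type $\theta = (f, c)$, the agent's utility $u_a(\theta, t^k, i) = \sum_j f_{i,j} t^k_j - c_i$ is a linear function of the menu parameters for each fixed $(k,i)$. Hence, for any two distinct pairs $(k, i)$ and $(k', i')$, the set of menus where the agent is indifferent between them is a hyperplane $H_{(k,i),(k',i')} = \{ M : \sum_j f_{i,j} t^k_j - c_i = \sum_j f_{i',j} t^{k'}_j - c_{i'}\}$ in the parameter space $\mathbb{R}^{K \cdot m}$. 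The total number of such hyperplanes is $\binom{Kn}{2} = O((Kn)^2)$.

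Inside any connected component of the parameter space with these hyperplanes removed, the agent's maximizing pair $(k^\star, i^\star)$ is constant (no indifference relation can change sign without crossing one of the hyperplanes), so the principal's utility simplifies to $\sum_j f_{i^\star, j}(r_j - t^{k^\star}_j)$, which is linear in the menu parameters. This verifies the delineability definition with $d = Km$ and $k = O((Kn)^2)$.

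I do not anticipate a significant obstacle here since the argument mirrors the single-contract proof. The main bookkeeping step is just to recognize that pairs $(k,i)$ play the role of ``actions'' in the earlier argument, so the count of indifference hyperplanes grows from $\binom{n}{2}$ to $\binom{Kn}{2}$. Plugging into \Cref{theorem:delineable_pseudo-dimensions} then yields $O(Km \log(Km \cdot (Kn)^2)) = O(Km \log(Kmn))$, as required.
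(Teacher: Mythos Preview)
Your proposal is correct and matches the paper's own proof essentially line for line: parameterize a menu in $\mathbb{R}^{Km}$, define one indifference hyperplane for each pair of (contract, action) choices, observe that the agent's best pair---and hence the principal's utility---is locally constant (respectively linear) off these $\binom{Kn}{2}$ hyperplanes, and apply \Cref{theorem:delineable_pseudo-dimensions}. The only cosmetic difference is that the paper cites the containment $\mathcal{M}_{\mathrm{bounded}}^K \subseteq \mathcal{M}_{\mathrm{unbounded}}^K$ directly rather than invoking \Cref{lem:pdim_properties}\ref{enum:pdim_mono}, but the content is identical.
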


We can use the pseudo-dimension bound to learn menus of contracts as follows.

\begin{theorem}[Efficient Learning for Menus of Contracts]\label{thm:menus_efficient}
 Set the rewards as $r_0 = 0$ and $r_1, r_2, \ldots, r_{m-1} \leq 1$.
For any $\epsilon > 0$, $\delta > 0$, and $\rho \geq 0$, there exists an algorithm $\mathcal{A}$ that, 
given black bock access to a single query to the oracle $\mathcal{O}(\mathcal{M}_{\mathrm{bounded}}^K, \rho)$, satisfies the following properties: $\mathcal{A}$ takes as input samples from any distribution $\mathcal{D}$ over the agent types, with the number of samples bounded by:
\begin{align*}
    N = O\left(\left(\frac{1}{\epsilon}\right)^2 \left( K \cdot m \cdot \log (K \cdot m \cdot n) 
      + \log \left(\frac{1}{\delta}\right) \right)\right).
\end{align*} 
With probability at least $1-\delta$,  $\mathcal{A}$ returns a contract ${t} \in \Cbounded$ with the following guarantee:
\begin{align*}
    \mathbb{E}_{\theta \sim \mathcal{D}}[u_p(\theta, {t})] \geq \mathrm{OPT}(\Cbounded, \mathcal{D}) - \epsilon - \rho.
\end{align*}
\end{theorem}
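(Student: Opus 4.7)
The plan is to mirror the proofs of \Cref{thm:efficient_linear} and \Cref{thm:bounded_efficient}, replacing the contract class with the class of menus $\mathcal{M}_{\mathrm{bounded}}^K$ and invoking the pseudo-dimension bound from \Cref{theorem:menu_general_contracts_pdim}. Concretely, I would first prove a menu-analogue of \Cref{lem:learning_with_oracle}: for any two menu classes $\bar{\mathcal{M}} \subseteq \mathcal{M} \subseteq \mathcal{M}_{\mathrm{bounded}}^K$ where $\bar{\mathcal{M}}$ has representation error at most $\epsilon/3$ with respect to $\mathcal{M}$, we can learn with $O((1/\epsilon^2)(\pdim{\bar{\mathcal{M}}} + \log(1/\delta)))$ samples given one query to $\mathcal{O}(\bar{\mathcal{M}},\rho)$. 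The proof is essentially identical: since payments in a bounded menu lie in $[0,1]^m$ and rewards lie in $[0,1]$, we have $|u_p(\theta, M)| \le 1$ for every type $\theta$ and every menu $M \in \mathcal{M}_{\mathrm{bounded}}^K$, so \Cref{theorem:uniform_learnability} applies with $H=1$.

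For the statement of \Cref{thm:menus_efficient} itself, I would then specialize this menu-learning lemma with $\bar{\mathcal{M}} = \mathcal{M} = \mathcal{M}_{\mathrm{bounded}}^K$, so that the representation error is $0$ and no discretization is needed. The algorithm is straightforward: draw $N$ samples $S = \{\theta_0,\ldots,\theta_{N-1}\}$ from $\mathcal{D}$, invoke the oracle $\mathcal{O}(\mathcal{M}_{\mathrm{bounded}}^K,\rho)$ on the empirical distribution $\widehat{\mathcal{D}}$ over $S$, and output whatever menu $\widehat{M}$ the oracle returns. Let $M^\star \in \argmax_{M \in \mathcal{M}_{\mathrm{bounded}}^K} \mathbb{E}_{\theta \sim \mathcal{D}}[u_p(\theta, M)]$ be the true optimum. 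Plugging the pseudo-dimension bound $\pdim{\mathcal{M}_{\mathrm{bounded}}^K} = O(K m \log(Kmn))$ from \Cref{theorem:menu_general_contracts_pdim} into \Cref{theorem:uniform_learnability} yields the desired sample count $N$ for $(\epsilon/2,\delta)$-uniform learnability.

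The final step is the standard chain of inequalities, analogous to the one inside the proof of \Cref{lem:learning_with_oracle}: with probability at least $1-\delta$,
\begin{align*}
\mathbb{E}_{\theta \sim \mathcal{D}}[u_p(\theta, \widehat{M})]
&\ge \mathbb{E}_{\theta \sim \widehat{\mathcal{D}}}[u_p(\theta, \widehat{M})] - \epsilon/2 \\
&\ge \mathbb{E}_{\theta \sim \widehat{\mathcal{D}}}[u_p(\theta, M^\star)] - \epsilon/2 - \rho \\
&\ge \mathbb{E}_{\theta \sim \mathcal{D}}[u_p(\theta, M^\star)] - \epsilon - \rho,
\end{align*}
where the first and third inequalities use uniform learnability and the middle one uses the oracle guarantee together with optimality of the empirical maximizer within $\mathcal{M}_{\mathrm{bounded}}^K$.

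The main conceptual obstacle, and the only place where care is needed, is verifying that $u_p(\theta, M)$ as a function of the menu parameters fits into the pseudo-dimension framework in the same way that individual contracts do (this is what \Cref{theorem:menu_general_contracts_pdim} handles). Beyond that, no new ideas are needed: the boundedness of utilities transfers verbatim from single contracts to menus of contracts because the principal's utility from a menu is always realized on a single contract from that menu. I would also note in passing that the theorem statement should read ``returns a menu $M \in \mathcal{M}_{\mathrm{bounded}}^K$'' and ``$\mathrm{OPT}(\mathcal{M}_{\mathrm{bounded}}^K, \mathcal{D})$'' on the right-hand side, which is the natural reading consistent with the oracle access assumed in the hypothesis.
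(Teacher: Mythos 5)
Your proposal is correct and follows essentially the same route as the paper: the paper's proof likewise observes that $|u_p(\theta,M)|\leq 1$ for bounded menus, notes that the argument of \Cref{lem:learning_with_oracle} carries over verbatim to menu classes, and plugs in the pseudo-dimension bound of \Cref{theorem:menu_general_contracts_pdim}. Your remark that the conclusion should be read as returning a menu $M\in\mathcal{M}_{\mathrm{bounded}}^K$ with a guarantee against $\mathrm{OPT}(\mathcal{M}_{\mathrm{bounded}}^K,\mathcal{D})$ is also the natural reading of the statement.
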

\begin{proof}
    Observe that for any menu of bounded contracts $M \in \mathcal{M}_{\mathrm{bounded}}^K$, the principal's utility is bounded in the range $[-1, 1]$ for any agent type. 
    A closer examination of the proof of \Cref{lem:learning_with_oracle} reveals that it applies not only to learning a single bounded contract but also to classes of menus of bounded contracts. Hence, we conclude the proof using \Cref{theorem:menu_general_contracts_pdim}.
\end{proof}

\subsection{Online Learning} \label{section:online}

In this section, we show how our results for offline learning of contracts can be extended to online learning. We first define the online learning variant of our model.

\begin{definition}[Online Learning Variant of Our Model] \label{def:online_learning_problem}
Fix any class of contracts $\mathcal{C}$.
We define an online learning algorithm for the contract class $\mathcal{C}$ as follows.
We assume that there is some fixed but unknown probability distribution $\mathcal{D}$ over the agent type space $\Thetaall$.
We consider a sequence of $T$ rounds for some $T > 0$. 
In each round $i = 1, \ldots, T$, the following events happen in the given order:
\begin{itemize}
    \item The algorithm commits to a contract $t_{i} \in \mathcal{C}$.
    \item An agent type $\theta_i$ is drawn independently from $\mathcal{D}$.
    \item The algorithm observes the type $\theta_{i}$.
    \item The algorithm receives a reward equal to $u_p(\theta_i, t_{i})$.
\end{itemize}
The chosen contract in round $i$ can depend only on the types observed in the previous round, i.e., $\theta_1, \ldots, \theta_{i-1}$. We define the regret of an online learning algorithm as:
\begin{align*}
    R_T = \sup_{\mathcal{D}} \left\{ T \cdot \mathrm{OPT}(\mathcal{C}, \mathcal{D})  - \mathbb{E}_{\theta_1, \ldots, \theta_T \sim \mathcal{D}}\left[ \sum_{i=1}^T u_p(\theta_i, t_i) \right]  \right\}.
\end{align*}
That is, the regret is the difference between the total reward of the optimal contract for $\mathcal{D}$ and the total expected reward of the algorithm for the worst-case distribution $\mathcal{D}$.
\end{definition}

The online learning problem studied in
\citet*{ZhuBYWJJ23} differs from the problem described above in its feedback model. In their problem, in round $i$, the algorithm observes only the realized outcome sampled given the agent's best response to contract $t_i$, not the type $\theta_i$. Consequently, the contracts defined in the following rounds can only depend on the outcomes observed in the previous rounds. In our model, the algorithm observes the full agent's type $\theta_i$. Note that the feedback model of \cite{ZhuBYWJJ23} can be simulated in our model: We can simply compute the agent's best response and sample an outcome from the corresponding distribution over outcomes. As we show in this section, the stronger feedback model allows us to obtain better regret guarantees compared to those in \cite{ZhuBYWJJ23}.

{
To analyze the regret in the online learning setting, we establish a general connection between the pseudo-dimension of a contract class and the achievable regret. Specifically, we show that if a contract class $\bar{\mathcal{C}}$ provides a good approximation (low representation error) for the target class $\mathcal{C} \subseteq \Cbounded$, then the regret can be bounded in terms of the pseudo-dimension of $\bar{\mathcal{C}}$. 
This result forms the foundation of our regret guarantees: by applying this lemma to the efficient approximations of linear and bounded contracts derived in \Cref{sec:pdim_upper_bounds}, we obtain improved upper bounds on regret for these contract classes.
}

\begin{lemma}[Regret Upper Bound via Approximation]\label{lem:online_learning_reduction}
Fix any class of contracts $\mathcal{C} \subseteq \Cbounded$.
Let $\bar{\mathcal{C}}$ be a contract class with representation error of at most $O(1/\sqrt{T})$ with respect to $\mathcal{C}$.
Then, we can get regret $R_T = O(\sqrt{T\cdot \pdim{\bar{\mathcal{C}}}} + \sqrt{T \log T})$ for learning over the contract class $\mathcal{C}$ in the online learning problem of \Cref{def:online_learning_problem}.
\end{lemma}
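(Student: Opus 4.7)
The plan is to run a Follow-the-Leader (FTL) algorithm over the approximating class $\bar{\mathcal{C}}$, and translate the pseudo-dimension of $\bar{\mathcal{C}}$ into a regret bound via uniform convergence (\Cref{theorem:uniform_learnability}), while the representation error contributes only an additive $O(\sqrt{T})$ term. Concretely, at each round $i \ge 2$ the algorithm plays
\[
\hat{t}_i \in \argmax_{t \in \bar{\mathcal{C}}} \sum_{j=1}^{i-1} u_p(\theta_j, t),
\]
with $\hat{t}_1$ being any fixed contract in $\bar{\mathcal{C}}$. The crucial structural observation is that $\hat{t}_i$ depends only on $\theta_1, \ldots, \theta_{i-1}$ and hence is independent of $\theta_i$, so $\mathbb{E}[u_p(\theta_i, \hat{t}_i) \mid \hat{t}_i] = U(\hat{t}_i)$ where $U(t) := \mathbb{E}_{\theta \sim \mathcal{D}}[u_p(\theta, t)]$. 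Let $t^\star \in \argmax_{t \in \bar{\mathcal{C}}} U(t)$.

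Decompose the regret as
\[
R_T = \underbrace{T\bigl(\mathrm{OPT}(\mathcal{C}, \mathcal{D}) - U(t^\star)\bigr)}_{\text{representation error}} + \sum_{i=1}^T \mathbb{E}\bigl[U(t^\star) - U(\hat{t}_i)\bigr].
\]
The first term is at most $T \cdot O(1/\sqrt{T}) = O(\sqrt{T})$ by the representation error assumption. For the second term, I would apply \Cref{theorem:uniform_learnability} (part~2) to the class $\bar{\mathcal{C}}$: since $\bar{\mathcal{C}} \subseteq \Cbounded$ gives a uniform bound $H = 1$ on $|u_p|$, for any $\delta \in (0,1)$ and $i \ge 2$, with probability at least $1-\delta$ over $\theta_1, \ldots, \theta_{i-1}$,
\[
\sup_{t \in \bar{\mathcal{C}}}\left| \frac{1}{i-1}\sum_{j=1}^{i-1} u_p(\theta_j, t) - U(t) \right| \le \varepsilon_i(\delta), \quad \varepsilon_i(\delta) = O\!\left(\sqrt{\tfrac{\pdim{\bar{\mathcal{C}}} + \ln(1/\delta)}{i-1}}\right).
\]
The standard empirical maximization argument (\Cref{lem:empirical_maximization}) then yields $U(t^\star) - U(\hat{t}_i) \le 2\varepsilon_i(\delta)$ on this event.

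Setting $\delta = 1/T^2$ and applying a union bound, with probability at least $1 - 1/T$ the inequality $U(t^\star) - U(\hat{t}_i) \le 2\varepsilon_i(1/T^2)$ holds simultaneously for all $i = 2, \ldots, T$. On the complementary event, each term contributes at most $2$ (since utilities lie in $[-1,1]$), adding at most $2T \cdot (1/T) = 2$ to the expected sum. The round $i=1$ contributes $O(1)$. Therefore
\[
\sum_{i=1}^T \mathbb{E}\bigl[U(t^\star) - U(\hat{t}_i)\bigr] \;\le\; O(1) + \sum_{i=2}^T O\!\left(\sqrt{\tfrac{\pdim{\bar{\mathcal{C}}} + \log T}{i-1}}\right) \;=\; O\!\left(\sqrt{T \cdot \pdim{\bar{\mathcal{C}}}} + \sqrt{T \log T}\right),
\]
where the last step uses $\sum_{i=2}^T 1/\sqrt{i-1} = O(\sqrt{T})$. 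Adding the $O(\sqrt{T})$ representation-error term gives the claimed bound.

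The main obstacle is the standard subtlety in translating a high-probability uniform convergence bound into an in-expectation regret bound: one has to choose the confidence parameter $\delta$ appropriately (here $\delta = 1/T^2$) so that the union bound over $T$ rounds is cheap, and separately absorb the rare ``bad'' event's contribution into the regret via the uniform boundedness of $u_p$ on $\Cbounded$. Everything else is a straightforward plug-in of \Cref{theorem:uniform_learnability} and \Cref{lem:empirical_maximization}, crucially exploiting that the expert-advice feedback model of \Cref{def:online_learning_problem} makes $u_p(\theta_j, \cdot)$ known at the time $\hat{t}_{j+1}$ is chosen, which is what allows FTL (unlike in the bandit setting) to achieve $\sqrt{T}$-type regret.
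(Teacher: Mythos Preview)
Your proposal is correct and follows essentially the same approach as the paper: play the empirically optimal contract in $\bar{\mathcal{C}}$ at each round, invoke \Cref{theorem:uniform_learnability} (part~2) for the per-round uniform-convergence bound, control the failure event via a union bound and the boundedness of $u_p$, and sum the $O(1/\sqrt{i})$ terms to $O(\sqrt{T})$. The only cosmetic differences are your choice $\delta = 1/T^2$ versus the paper's $\delta = T^{-3/2}$, and that you separate the representation-error term up front whereas the paper folds it into the per-round inequality; neither affects the argument or the final bound.
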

\begin{proof}

        Fix some $T > 0$. Let $\theta_1, \ldots, \theta_T$ be the agent types drawn from $\mathcal{D}$ in the $T$ rounds. 

    The idea behind the algorithm $\mathcal{A}$ is that, in each round $i$ it returns the empirically optimal bounded contract $t_i$ from that class $\bar{\mathcal{C}}$
    given the past $i - 1$ samples. 

We set the failure probability $\delta = T^{-3/2}$.
    By \Cref{theorem:uniform_learnability}(\ref{enum:epsilon_unif_learnability}) with probability at least $1 - \delta$, for the contract returned on round $i$ we have: 
    \begin{align*}
    \mathbb{E}_{\theta_{i} \sim \mathcal{D}} [u_p(\theta_{i}, t_{i}) | \theta_1, \ldots, \theta_{i-1}] &\geq \mathrm{OPT}({\bar{\mathcal{C}}}_{i}, \mathcal{D}) - O\left( (1/\sqrt{i}) \cdot \left( \sqrt{\pdim{\bar{\mathcal{C}}}} + \sqrt{\log(1/\delta)} \right) \right) \\
        &\geq \mathrm{OPT}(\bar{\mathcal{C}}, \mathcal{D}) - O\left( (1/\sqrt{i}) \cdot \left(  \sqrt{\pdim{{\bar{\mathcal{C}}}}} + \sqrt{\log T} \right) \right) \\
        &\geq \mathrm{OPT}(\mathcal{C}, \mathcal{D}) - O(1/\sqrt{T}) - O\left( (1/\sqrt{i}) \cdot \left(  \sqrt{\pdim{{\bar{\mathcal{C}}}}} + \sqrt{\log T} \right) \right) \\
        &\geq \mathrm{OPT}(\mathcal{C}, \mathcal{D}) -  O\left( (1/\sqrt{i}) \cdot \left(  \sqrt{\pdim{{\bar{\mathcal{C}}}}} + \sqrt{\log T} \right) \right) 
    \end{align*}
    
    By the union bound the inequality above holds simultaneously for all $i = 1, \ldots, T$ with probability at least $1-\delta T = 1-1/\sqrt{T}$. 
    If the inequality above does not hold, which happens with probability at most $\delta T$, note that $\mathrm{OPT}(\mathcal{C}, \mathcal{D}) \leq 1$ and $u_p(\theta_i, t_i) \geq 0$, which means we can bound $\mathrm{OPT}(\mathcal{C}, \mathcal{D}) - u_p(\theta_i, t_i) \leq 1$.
    Thus, we obtain the following regret bound:
    \begin{align*}
        T \cdot \mathrm{OPT}(\mathcal{C}, \mathcal{D}) 
 - \mathbb{E}\left[ \sum_{i = 1}^T u_p(\theta_{i}, t_{i}) \right] &\leq 
 (\delta T) \cdot T +  (1-\delta T)  \sum_{i=1}^T O\left(\frac{\sqrt{\pdim{\bar{\mathcal{C}}}} + \sqrt{\log T}}{\sqrt{i}}\right) \\
 &\leq \sqrt{T} + O\left(\sqrt{\pdim{\bar{\mathcal{C}}}}  + \sqrt{\log T}\right) \cdot O \left( \sum_{i=1}^{T} \frac{1}{\sqrt{i}} \right) \\
  &= \sqrt{T} + O\left(\sqrt{\pdim{\bar{\mathcal{C}}}}  + \sqrt{\log T}\right) \cdot O \left( \int_1^{T+1}  \frac{1}{\sqrt{\lfloor x \rfloor}} \,\mathrm{d}x
 \right) \\
 &\leq \sqrt{T} + O\left(\sqrt{\pdim{\bar{\mathcal{C}}}}  + \sqrt{\log T}\right) \cdot O \left(\int_{1}^{T+1} \frac{1}{\sqrt{x}} \,\mathrm{d}x\right) \\
 &=  \sqrt{T} + O\left(\sqrt{\pdim{\bar{\mathcal{C}}}}  + \sqrt{\log T}\right) \cdot O\left( 2 \sqrt{T+1} - 2 \right) \\
 &= O\left(\sqrt{T \cdot \pdim{\bar{\mathcal{C}}}}  + \sqrt{T \log T}\right)
    \end{align*}
    which completes the proof.
\end{proof}

\citet{ZhuBYWJJ23} show that with their feedback model, the optimal regret one can achieve for learning linear contracts is $\tilde{O}(T^{2/3})$. In our model, this can be improved to $\tilde{O}(T^{1/2})$.

\begin{theorem}[Regret Upper Bound for $\Clinear$]
    We can get regret $R_T =  O(\sqrt{T \log T})$ for learning over the class of linear contracts $\Clinear$ in the online learning problem of \Cref{def:online_learning_problem}.
\end{theorem}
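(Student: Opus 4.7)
The plan is to apply \Cref{lem:online_learning_reduction} directly, with $\mathcal{C} = \Clinear$ and $\bar{\mathcal{C}} = \Lepsilon$ for a suitable choice of $\epsilon$. For this to make sense, I first need $\Clinear \subseteq \Cbounded$, which holds under the standing assumption that $r_j \leq 1$ for all outcomes $j \in [m]$ (as noted after \Cref{def:contract_classes}); the result then applies to the online linear contract problem.

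Next, I pick $\epsilon = 1/\sqrt{T}$. By \Cref{lemma:t_contracts_representation_error}, the representation error of $\Lepsilon$ with respect to $\Clinear$ is at most $\epsilon = 1/\sqrt{T}$, which is exactly the $O(1/\sqrt{T})$ representation error required by the hypothesis of \Cref{lem:online_learning_reduction}. By \Cref{theorem:t_contracts_true_pd}, the pseudo-dimension of $\Lepsilon$ satisfies $\pdim{\Lepsilon} = O(\min(\log n, \log(1/\epsilon))) = O(\log T)$, where the bound $O(\log T)$ uses our choice $\epsilon = 1/\sqrt{T}$.

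Plugging these two estimates into the regret bound of \Cref{lem:online_learning_reduction} yields
\[
R_T = O\bigl(\sqrt{T \cdot \pdim{\Lepsilon}} + \sqrt{T \log T}\bigr) = O\bigl(\sqrt{T \log T} + \sqrt{T \log T}\bigr) = O\bigl(\sqrt{T \log T}\bigr),
\]
as claimed. There is no real obstacle here: the hard work was already done in proving \Cref{lem:online_learning_reduction} and in establishing the tight pseudo-dimension and representation-error tradeoff for $\Lepsilon$. The only delicate point is ensuring the choice $\epsilon = 1/\sqrt{T}$ simultaneously achieves $O(1/\sqrt{T})$ representation error and $O(\log T)$ pseudo-dimension, so that neither term dominates the other in the final regret expression; this balance is automatic because the Pareto frontier for linear contracts is logarithmic in $1/\epsilon$, so a polynomial shrinking of $\epsilon$ costs only a logarithmic increase in pseudo-dimension.
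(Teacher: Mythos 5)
Your proof is correct and follows exactly the paper's argument: choose $\bar{\mathcal{C}} = \mathcal{L}_{1/\sqrt{T}}$, invoke \Cref{lemma:t_contracts_representation_error} for the $1/\sqrt{T}$ representation error and \Cref{theorem:t_contracts_true_pd} for the $O(\log T)$ pseudo-dimension, and plug both into \Cref{lem:online_learning_reduction}. Your explicit remark that $\Clinear \subseteq \Cbounded$ requires $r_j \leq 1$ is a small point the paper leaves implicit, but otherwise the two proofs are identical.
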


\begin{proof}
    Recall that $\mathcal{L}_{1/\sqrt{T}}$ has pseudo-dimension $O(\log \sqrt{T}) = O(\log T)$ and representation error $1/\sqrt{T}$, as shown in \Cref{lemma:t_contracts_representation_error}. Therefore, the result follows by \Cref{lem:online_learning_reduction}.
\end{proof}

\citet{ZhuBYWJJ23} show that with their feedback model, the optimal regret one can achieve for learning bounded contracts is at most $\tilde{O}(\sqrt{m} \cdot T^{1-1/(2m+1)})$ and at least $\Omega(T^{1-1/(m+2)})$. In the following theorem, we show that in our model, this can be improved to $\tilde{O}(\sqrt{m} \cdot T^{1/2})$.

\begin{theorem}[Regret Upper Bound for $\Cbounded$]
    We can get regret $R_T =  O\left(   \sqrt{m \cdot T \log T} \right)$ for learning over the class of bounded contracts $\Cbounded$ in the online learning problem of \Cref{def:online_learning_problem}.
\end{theorem}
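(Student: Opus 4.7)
The plan is to apply the general regret-via-approximation result (\Cref{lem:online_learning_reduction}) with $\mathcal{C} = \Cbounded$ and $\bar{\mathcal{C}} = \mathcal{B}_{1/\sqrt{T}}$, i.e., the discretized bounded contract class from \Cref{lemma:se_contracts_pseudo-dimension} tuned so that the representation error is $O(1/\sqrt{T})$. This mirrors exactly what was done for the linear-contract theorem immediately preceding, so the conceptual work is already in place.

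First, I would invoke \Cref{lemma:se_contracts_pseudo-dimension} with $\epsilon = 1/\sqrt{T}$: this guarantees that $\mathcal{B}_{1/\sqrt{T}}$ is an $(1/\sqrt{T})$-approximation of $\Cbounded$, and it bounds the pseudo-dimension as
\[
\pdim{\mathcal{B}_{1/\sqrt{T}}} \;=\; O\bigl(m \log \sqrt{T} + m \log m\bigr) \;=\; O\bigl(m \log T + m \log m\bigr).
\]
Since the representation error $1/\sqrt{T}$ meets the $O(1/\sqrt{T})$ hypothesis of \Cref{lem:online_learning_reduction}, that lemma immediately yields
\[
R_T \;=\; O\bigl(\sqrt{T \cdot \pdim{\mathcal{B}_{1/\sqrt{T}}}} + \sqrt{T \log T}\bigr) \;=\; O\bigl(\sqrt{m T \log T} + \sqrt{m T \log m} + \sqrt{T \log T}\bigr).
\]

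To reach the clean form $O(\sqrt{m T \log T})$ claimed in the statement, I would note that without loss of generality one may assume $m \le T$ (otherwise the bound is vacuous in the relevant regimes, since the regret is trivially bounded by $T$, while $\sqrt{mT \log T} \ge \sqrt{T^2} = T$ whenever $m \log T \ge T$). Under this convention $\log m \le \log T$, so the middle term is absorbed into the first, and $\sqrt{T \log T}$ is dominated by $\sqrt{mT \log T}$ since $m \ge 1$. Combining these observations gives $R_T = O(\sqrt{mT \log T})$, as required.

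The only minor subtlety is the $\log m$ versus $\log T$ comparison at the end; the main work — namely the uniform-convergence machinery behind \Cref{lem:online_learning_reduction} and the pseudo-dimension bound on the discretization — is already done in earlier sections, so this theorem is essentially a one-line corollary of plugging $\mathcal{B}_{1/\sqrt{T}}$ into the reduction lemma.
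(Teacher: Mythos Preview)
Your proposal is correct and follows exactly the paper's approach: apply \Cref{lem:online_learning_reduction} with $\bar{\mathcal{C}}=\mathcal{B}_{1/\sqrt{T}}$ and invoke \Cref{lemma:se_contracts_pseudo-dimension} for the representation error and pseudo-dimension bounds. You are in fact slightly more careful than the paper, which silently drops the $m\log m$ term; your observation that $m\le T$ may be assumed (since otherwise the bound is trivial) cleanly justifies absorbing it into $m\log T$.
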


\begin{proof}
    Recall that $\mathcal{B}_{1/\sqrt{T}}$ has pseudo-dimension $ O(m \log(\sqrt{T})) = O(m \log T)$ and representation error $1/\sqrt{T}$, as shown in \Cref{lemma:se_contracts_pseudo-dimension}. Therefore, the result follows by \Cref{lem:online_learning_reduction}.
\end{proof}

Note that the lower bound of regret of $\Omega(T^{1-1/(m+2)})$ for the {bandit} feedback model of \citet*{ZhuBYWJJ23} implies that any learning algorithm requires an exponential number of samples to achieve constant regret, even if the distribution over agent types is a point mass on a single agent type.\footnote{The standard approach to converting a regret bound in an online learning setting to a sample complexity guarantee is to require the average regret in the online setting to be below the approximation error in the offline setting and solve for the number of required rounds, \( T \).
This means that if the regret is at least \( \Omega(T^{1-1/(m+2)}) \), then for the average regret, which is at least \( \Omega(T^{1-1/(m+2)}/T) \), to be smaller than \(\epsilon\), we require \( T \geq \Omega((1/\epsilon)^{m+2}) \).
} 
By comparison, with the full feedback model, our learning algorithm only requires a polynomial number of samples to achieve a constant regret, for any distribution over agent types. 
Thus, in our eyes, \citet*{ZhuBYWJJ23} and all other works that follow this feedback model focus primarily on the problem of learning the distribution over outcomes for the agent's action, rather than the problem of learning the agent type distribution.

\section{Conclusion and Future Work}

In this work, we formalize a framework for learning nearly-optimal contracts in an offline setting, where samples are drawn from the agent type distribution. 
A central tool in our analysis is the \emph{pseudo-dimension}, which measures the intrinsic complexity of a contract class.
We leverage structural insights on optimal contract design to derive essentially tight tradeoffs between representation error, pseudo-dimension and sample complexity for linear, bounded and unbounded contracts.
This enables the principal to balance contract expressivity with sample complexity.

We also extend our results to the combinatorial actions model. We show refined bounds that link the pseudo-dimension (and consequently sample complexity) of optimal contracts in that model to the number of critical values. Prior work had linked the number of critical values to the computational complexity of finding (near-)optimal contracts. Our work establishes a formal link between this concept and pseudo-dimension. As another extension we consider menus of contracts, and show that the pseudo-dimension scales linearly with menu size. This shows a very benign scaling behavior, when going from single contracts to menus of contracts.
Last but not least, we adapt our offline framework to the online learning setting under expert advice, and show polynomial sample complexity. 
This is in contrast to prior work, which showed that, with bandit feedback, exponentially many samples are required,  even if the agent is of a fixed but unknown type.
This reveals a key distinction between the two feedback models: while bandit feedback focuses on learning the distribution over outcomes for the actions of a single agent, we focus on learning the distribution over agent types.

Our work suggests several directions for future work.
A significant challenge lies in scenarios where even a polynomial number of samples from the agent's type distribution is infeasible.
Examining the few-sample regime, where the principal has access to only a constant number of samples, could provide meaningful insights; see \cite[e.g.,][]{DBLP:conf/sigecom/DhangwatnotaiRY10,DBLP:conf/ec/CorreaDFS19,DBLP:conf/focs/DuttingKLR024} for an analysis of this regime in related settings. 
It remains an open question how, and under what conditions, learning algorithms can be designed to guarantee a constant-factor approximation of the optimal contract with such limited data.

Another avenue for future research involves refining our results for the pseudo-dimension of approximations of the class of bounded contracts. While our bounds are tight in the region where $\epsilon < 1 /(24 \cdot m^2)$, a gap remains between our lower and upper bounds in the case where $\epsilon > 1/(24 \cdot m^2)$, leaving room for further improvement.

It would also be interesting to study other contract classes, beyond those studied in this work. 
Open questions in this domain include determining whether our pseudo-dimension upper bound for menus of contracts of bounded size (\Cref{theorem:menu_general_contracts_pdim}) is tight, and whether the pseudo-dimension of more expressive classes, such as menus of unbounded size or menus of randomized contracts, can be bounded.

The action query model \citep{ChenEtAl2024,dutting2021complexity} presents another promising avenue for analysis. In this model, the agent's type is fixed, and the principal observes samples from the distribution over outcomes induced by each action. A key open question is whether learning algorithms can be designed for this feedback model when the agent's type distribution is also unknown.

Finally, an important special case of the model studied in this paper is where the agent's type corresponds to their cost per unit of effort---a setting studied within the Bayesian framework \citep{alon2021contracts,AlonDLT23,CastiglioniCLXS25}. An open question is whether this additional structure can be leveraged to obtain improved pseudo-dimension and sample complexity bounds for the key contract classes analyzed in this work.

\section*{Acknowledgments}

We would like to thank Yurong Chen, Yoav Gal-Tzur, Franciszek Malinka, Nadav Merlis, Vishnu V. Narayan, and Maya Schlesinger for their helpful feedback and invaluable discussions. We also thank the audience at the FILOFOCS workshop (Paris, 2024) for their comments during the presentation of a preliminary version of this paper.

This project has been partially funded by the European Research Council (ERC) under the European Union's Horizon 2020 research and innovation program (grant agreement No. 866132), by an Amazon Research Award, by the NSF-BSF (grant number 2020788), by the Israel Science Foundation Breakthrough Program (grant No.2600/24), and by a grant from TAU Center for AI and Data Science (TAD).

% Bibliography
\bibliographystyle{plainnat}
\bibliography{references}

% Appendix
% \newpage
\appendix
\section*{Appendix}

\section{Proofs Omitted from Section~\ref{sec:contract_prelims}}\label{sec:proofs_for_contracts_prelims}

\begin{restatable}[Linear Contracts in the Binary Outcome Model]
    {lemma}{equivalenceofbinaryoutcome} \label{lem:binary_outcome}
        Let $R > 0$. Let $\Theta_{\mathrm{general}}$  denote the space of all agent types for the general $m$-outcome model with rewards $r_0 = 0$ and $r_1 = R$ and $r_2, \ldots, r_{m-1} \leq R$.
     Let $\Theta_{\mathrm{binary}}$ denote the space of all agent types for the binary outcome model with rewards $r_0' =0$ and $r_1' = R$.
     Then:
    \begin{enumerate}[label=(\alph*)]
        \item For every distribution over $\Theta_{\mathrm{binary}}$, a linear contract is optimal among all unbounded contracts.\label{enum:lin_opt}
        \item There is an onto map $\psi : \Theta_{\mathrm{general}} \to \Theta_{\mathrm{binary}}$ such that $u_p(\theta, \alpha \cdot r) = u_p(\psi(\theta), \alpha \cdot r')$ for every agent type $\theta \in \Theta_{\mathrm{general}}$ and every linear contract $\alpha \in [0,1]$.\label{enum:binary_general}
    \end{enumerate}
\end{restatable}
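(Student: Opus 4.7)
The plan is to prove (a) and (b) separately. For (a), it suffices to show pointwise dominance: for every $t \in \Cunbounded$ I would exhibit a linear contract $\alpha(t) \in [0, 1]$ such that $u_p(\theta, \alpha(t) \cdot r') \geq u_p(\theta, t)$ for every $\theta \in \Theta_{\mathrm{binary}}$. Taking expectations over $\theta \sim \mathcal{D}$ and applying this to the optimal unbounded contract then yields $\mathrm{OPT}(\Clinear, \mathcal{D}) \geq \mathrm{OPT}(\Cunbounded, \mathcal{D})$. For (b), I would write down the explicit map $\psi$ that collapses a multi-outcome type into a binary one by averaging rewards, and verify that agent and principal utilities under every linear contract coincide on every action, so best responses and tie-breaks are preserved.

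The construction for (a) splits on the sign of $t_1 - t_0$. If $t_1 \geq t_0$, the algebraic identity $u_a(\theta, t, i) = t_0 + f_{i,1}(t_1 - t_0) - c_i$ shows that the agent's utilities under $(t_0, t_1)$ and under $(0, t_1 - t_0)$ differ by the additive constant $t_0$ for every action $i$, so the best response (with principal-preferred tie-break) is preserved, and a direct computation gives $u_p(\theta, (0, t_1 - t_0)) = u_p(\theta, t) + t_0 \geq u_p(\theta, t)$. If $t_1 - t_0 \leq R$, this is already linear with $\alpha = (t_1 - t_0)/R$; otherwise I would further replace it by $(0, R)$, using $u_p(\theta, (0, R)) = 0 \geq f_{i^\star, 1}(R - (t_1 - t_0)) = u_p(\theta, (0, t_1 - t_0))$ since $R - (t_1 - t_0) < 0$. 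In the remaining case $t_0 > t_1$, I would take $\alpha = 0$: under $(0, 0)$ the agent picks a zero-cost action $i_0 \in \argmax_{i : c_i = 0} f_{i, 1}$ by the principal-preferred tie-break, giving $u_p(\theta, (0, 0)) = f_{i_0, 1} R$. If the best response $i^\star$ under $t$ satisfies $c_{i^\star} = 0$, then $f_{i_0, 1} \geq f_{i^\star, 1}$ by definition of $i_0$ and the bound is immediate. Otherwise, the incentive inequality $u_a(\theta, t, i^\star) \geq u_a(\theta, t, i_0)$ simplifies, via $f_{i, 0} + f_{i, 1} = 1$, to $(f_{i_0, 1} - f_{i^\star, 1})(t_0 - t_1) \geq c_{i^\star} > 0$, forcing $f_{i_0, 1} > f_{i^\star, 1}$; combined with $u_p(\theta, t) \leq f_{i^\star, 1} R$, this yields $u_p(\theta, t) < f_{i_0, 1} R = u_p(\theta, (0, 0))$.

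For (b), I would define $\psi(f, c) = (f', c)$ by $f'_{i, 1} = R^{-1} \sum_{j \in [m]} f_{i, j} r_j$ and $f'_{i, 0} = 1 - f'_{i, 1}$; the assumptions $r_0 = 0$ and $r_j \leq R$ for all $j$ ensure $f'_{i, 1} \in [0, 1]$, so $\psi(\theta) \in \Theta_{\mathrm{binary}}$. For any linear $\alpha \in [0, 1]$ and any action $i$, a direct calculation gives $u_a(\theta, \alpha r, i) = \alpha \sum_j f_{i, j} r_j - c_i = \alpha R f'_{i, 1} - c_i = u_a(\psi(\theta), \alpha r', i)$ and $u_p(\theta, \alpha r, i) = (1 - \alpha) \sum_j f_{i, j} r_j = (1 - \alpha) R f'_{i, 1} = u_p(\psi(\theta), \alpha r', i)$; hence best responses and principal-preferred tie-breaks coincide, and $u_p(\theta, \alpha r) = u_p(\psi(\theta), \alpha r')$. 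Ontoness is immediate by embedding any $(f', c) \in \Theta_{\mathrm{binary}}$ via $f_{i, 0} = f'_{i, 0}$, $f_{i, 1} = f'_{i, 1}$, and $f_{i, j} = 0$ for $j \geq 2$. The main obstacle is the case $t_0 > t_1$ in (a), where one cannot simply zero out $t_0$ as in the other case; the crux is that the incentive-compatibility constraint at $t$ forces the agent's chosen action to have strictly lower success probability than the optimal zero-cost action, which is precisely the inequality needed to beat $u_p(\theta, t)$ using the null contract.
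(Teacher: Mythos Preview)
Your proof is correct. Part (b) matches the paper's argument almost verbatim, including the same map $\psi$ and the same per-action verification of $u_a$ and $u_p$; you additionally spell out the ontoness check, which the paper leaves implicit.

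For part (a) you take a genuinely different route. The paper replaces $t = (t_0, t_1)$ by $t' = (0, t_1)$ and, because the best response may change, compares the two best responses via a pair of incentive constraints (one at $t$, one at $t'$) to conclude $f_{i',0} \leq f_{i,0}$; this single step handles all $t_0 > 0$ uniformly. You instead replace $t$ by the shifted contract $(0, t_1 - t_0)$, exploiting the identity $u_a(\theta, t, i) = t_0 + f_{i,1}(t_1 - t_0) - c_i$ to keep the best response \emph{exactly} the same; this makes the comparison immediate but forces the separate case $t_0 > t_1$ (where the shift would violate limited liability), which you then dispatch with the null contract and a one-sided IC argument. Your approach trades a slightly slicker main case for an additional branch; both are clean and of comparable length.
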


\begin{proof}
    (a) 
   Let $t = (t_0, t_1) \in \Cunbounded$ be the optimal contract for a given distribution $\mathcal{D}$ over $\Theta_{\text{binary}}$. By definition, we have $t_0 \geq 0$. If $t_1 \geq R$, then $u_p(\theta, t) \leq 0$ since $r'_0 - t_0 \leq 0$ and $r'_1 - t_1 \leq 0$, implying that $u_p(\mathcal{D}, t) \leq u_p(\mathcal{D}, (0,0))$. Hence, we can assume without loss of generality that $t_1 \leq R$.
    
    Suppose that $t_0 > 0$. Consider the contract $t' = (0, t_1) \in \Cunbounded$. Let $\theta \in \Theta_{\mathrm{binary}}$, and let $i = i^\star(\theta, t)$ and $i' = i^\star(\theta, t')$. We must have:
    \begin{align*}
        t_0 \cdot f_{i,0} + t_1 \cdot f_{i,1} - c_i =  u_a(\theta, t, i)&\geq u_a(\theta, t, i') = t_0 \cdot f_{i',0} + t_1 \cdot f_{i',1} - c_{i'}
       \intertext{and}
        t_1 \cdot f_{i', 1} - c_{i'} =  u_a(\theta, t', i') &\geq u_a(\theta, t', i) = t_1 \cdot f_{i,1} - c_i.
    \end{align*}
   By summing the above inequalities, we get $t_0 \cdot f_{i', 0} \leq t_0 \cdot f_{i, 0}$, implying $f_{i',0} \leq f_{i,0}$. Since $f_{i', 1} = 1 - f_{i', 0}$ and $f_{i, 1} = 1 - f_{i,0}$, it follows that:
    \begin{align*}
        u_p(\theta, t') = f_{i',0} \cdot (0 - 0) + f_{i',1} \cdot (R - t_1) \geq f_{i,0} \cdot (0 - t_0) + f_{i,1} \cdot (R - t_1) = u_p(\theta, t).
    \end{align*}
    Thus, contract $t'$ generates a higher utility for the principal for every fixed agent type, and therefore for any distribution over agent types. Hence, we can assume that $t_0 = 0$. Then, we have $t = (0, t_1) = (t_1 / R) \cdot r'$, which shows that $t$ is a linear contract.
    This concludes the proof of part (a).

    (b) Fix an agent type $\theta = (f,c) \in \Theta_{\mathrm{general}}$. We define an agent type $\psi(\theta) = (f', c') \in \Theta_{\mathrm{binary}}$ by letting $f_{i, 1}' = (f_{i} \cdot r)/R$ and $f_{i,0}' = 1 - f_{i,1}'$ and $c_{i}' = c_i$ for every action $i \in [n]$.
    In terms of the agent's utility, for every linear contract $\alpha \in [0,1]$, we get:
    \begin{align*}
        u_a(\theta, \alpha \cdot r, i) = f_{i} \cdot r - c_{i} = f_{i}' \cdot r' - c_{i}' = u_a(\psi(\theta), \alpha \cdot r', i).
    \end{align*}
    It follows that $i^\star(\theta, \alpha \cdot r) = i^\star(\psi(\theta),\alpha \cdot r')$. Thus, letting $i = i^\star(\theta, \alpha \cdot r)$, we get:
    \begin{align*}
        u_p(\theta, \alpha \cdot r) = f_{i} \cdot (1-\alpha) \cdot r = f_{i}' \cdot (1-\alpha) \cdot r' = u_p(\psi(\theta), \alpha \cdot r', i).
    \end{align*}
    This completes the proof of the lemma.
\end{proof}

\section{Proofs Omitted from Section~\ref{sec:learning_theory_preliminaries}}\label{sec:proofs_from_learning_theory_preliminaries}

\empiricalmaximization*
\begin{proof}
Let $p^\star \in \mathcal{P}$ be the optimal {parametrization of $\mathcal{F}$} such that $\mathbb{E}_{i \sim \mathcal{D}}[\mathcal{F}(i, p^\star)] = \mathrm{OPT}(\mathcal{F}, \mathcal{D})$.
By the definition of uniform learnability, we have that with probability at least $1-\delta$, the following holds: $|\mathbb{E}_{i \sim \mathcal{D}}[\mathcal{F}(i, p)] - (1/|S|) \cdot \sum_{i \in S} \mathcal{F}(i, p)| \leq \epsilon$ for all parameters $p \in \mathcal{P}$. Thus, with probability at least $1-\delta$, it holds that:
\begin{align*}
    \mathbb{E}_{i \sim \mathcal{D}}[\mathcal{F}(i,\widehat{p})] &\geq \frac{1}{|S|} \cdot \sum_{i \in S} \mathcal{F}(i, \widehat{p}) - \epsilon && (\text{by uniform learnability of } \mathcal{F})\\
    &\geq \frac{1}{|S|} \cdot \sum_{i \in S} \mathcal{F}(i, p^\star) - \epsilon && (\text{by the choice of } \widehat{p})\\
    &\geq \mathbb{E}_{i \sim \mathcal{D}}[\mathcal{F}(i,p^\star)] - 2 \cdot \epsilon && (\text{by uniform learnability of } \mathcal{F}) \\
    &= \mathrm{OPT}(\mathcal{F}, \mathcal{D}) -2 \cdot \epsilon && (\text{by the choice of } p^\star)
\end{align*}
and the result follows.
\end{proof}

\pdimproperties*

\begin{proof}
We first prove part \ref{enum:pdim_size}. The pseudo-dimension of a class is defined as the maximum cardinality $k$ of a set of points that can be shattered by the class, i.e., 
there must exist a set 
$ S = \{ (i_1, \ldots, i_k \}$ of $k$ and a set of real numbers ${\tau^1, \tau^2, \dots, \tau^k}$, such that for all $T \subseteq S$
there exists a function $f$ in the class such that for all $i \in T$, it holds that
$ f(i) \ge \tau^i $ and for all $j \notin T$, it holds that
$ f(j) < \tau^{j} $.
However, any set of size $k$ supports exactly $2^k$ distinct labelings, and a single function $f$, from any class, only induces one labeling. 
Thus, we have $2^k \le |\mathcal{F} |$, and hence $\pdim{\mathcal{F}} = k \leq \log |\mathcal{F}|$.

Now we prove part \ref{enum:pdim_mono}.
Suppose that the pseudo-dimension of $\mathcal{F}$ is $k$. 
Then, for any set of inputs $S = \{i_1, \ldots , i_{k} \}$
and corresponding thresholds $\tau^1, \ldots , \tau^{k} \in \mathbb{R}$, 
there exists at least one $T \subseteq S$ such that there exists no function $f$ in the class satisfying:
$f(i) \ge \tau_i$ for all $i \in T$ and $f(j) < \tau_{j}$ for all $j \in S \setminus T$. 
However, in that case, the same property will be true for the class $\mathcal{F'}$, since any function in $\mathcal{F'}$ is also in $\mathcal{F}$.
% This concludes the proof. 
\end{proof}

\section{Proofs Omitted from Section~\ref{sec:pdim}}\label{sec:proofs_for_pdim}

\pdimlbcontractconstruction*
\begin{proof}
  Consider a distribution $\mathcal{D}$ that is uniform over $m$ agent types $\theta^{(0)}, \theta^{(1)}, \ldots, \theta^{(m-1)}$, defined as follows.
  Agent type $\theta^{(0)} = (f^{(0)}, c^{(0)})$ is defined as:
    \begin{table}[H]
    \centering
    \renewcommand{\arraystretch}{1.5}
    \begin{tabular}{|c|>{\centering\arraybackslash}m{2cm}>{\centering\arraybackslash}m{2cm}|>{\centering\arraybackslash}m{2cm}|}
        \hline
        & \textbf{outcome $0$} & \textbf{outcome $j \in \setoutcomes \setminus \{0\}$} & \textbf{cost} \\ \hline
        \textbf{action $0$:} & 
        \vspace{0.2cm}$\displaystyle f^{(0)}_{0,0} = 1$\vspace{0.2cm} & 
        \vspace{0.2cm}$\displaystyle f^{(0)}_{0,j} = 0$\vspace{0.2cm} & 
        \vspace{0.2cm}$\displaystyle c_0 = 0$\vspace{0.2cm} \\ 
        \textbf{action $i \geq 1$:} & 
        $\displaystyle f^{(0)}_{i,0} = 1$\vspace{0.2cm} & 
        $\displaystyle f^{(0)}_{1,j} = 0$\vspace{0.2cm} & 
       $\displaystyle c_i = \infty$\vspace{0.2cm} \\ \hline
    \end{tabular}
\end{table}
In words, any action chosen by agent of type $\theta^{(0)}$ leads to outcome $0$ with probability $1$.
At the same time, the agent's cost for action $0$ is $0$, and her cost for any other action is $\infty$.
   For any $j \in \setoutcomes \setminus \{0\}$, we define agent type $\theta^{(j)} = (f^{(j)}, c^{(j)})$ as follows: 
    \begin{table}[H]
    \centering
    \renewcommand{\arraystretch}{1.5}
    \begin{tabular}{|c|>{\centering\arraybackslash}m{2.3cm}>{\centering\arraybackslash}m{2.3cm}>{\centering\arraybackslash}m{2.3cm}|>{\centering\arraybackslash}m{2cm}|}
        \hline
        & \textbf{outcome $0$} & \textbf{outcome $j$} & \textbf{outcome $j' \in \setoutcomes \setminus \{ 0, j \}$} & \textbf{cost} \\ \hline
        \textbf{action $0$:} & 
        \vspace{0.2cm}$\displaystyle f^{(j)}_{0,0} = 1$\vspace{0.2cm} & 
        \vspace{0.2cm}$\displaystyle f^{(j)}_{0,j} = 0$\vspace{0.2cm} & 
        \vspace{0.2cm}$\displaystyle f^{(j)}_{0,j'} = 0$\vspace{0.2cm} & 
        \vspace{0.2cm}$\displaystyle c_0 = 0$\vspace{0.2cm} \\ 
        \textbf{action $1$:} & 
        $\displaystyle f^{(j)}_{1,0} = 0$\vspace{0.2cm} & 
        $\displaystyle f^{(j)}_{1,j} = 1$\vspace{0.2cm} & 
        $\displaystyle f^{(j)}_{1,j'} = 0$\vspace{0.2cm} & 
        $\displaystyle c_1 = \alpha_j$\vspace{0.2cm} \\ 
        \textbf{action $i \geq 2$:} & 
        $\displaystyle f^{(j)}_{i,0} = 1$\vspace{0.2cm} & 
        $\displaystyle f^{(j)}_{i,j} = 0$\vspace{0.2cm} & 
        $\displaystyle f^{(j)}_{i,j'} = 0$\vspace{0.2cm} & 
        $\displaystyle c_i = \infty$\vspace{0.2cm} \\ \hline
    \end{tabular}
\end{table}
In words, action $1$ leads to outcome $j$ with probability $1$, while any other action leads to outcome $0$ with probability $1$. 
The cost of action $0$ is zero, the cost of action $1$ is $\alpha_j$ and the cost of any other action is $\infty$. Thus, the agent only considers actions $0$ and $1$. Consider the contract $t^\star = (0, \alpha_1, \alpha_2, \ldots, \alpha_{m-1}) \in \Cbounded$ and observe the following:
    \begin{align*}
        \mathbb{E}_{\theta \sim \mathcal{D}}[u_p(\theta, t^\star)] =  \sum_{j=0}^{m-1} \frac{1}{m} \cdot u_p(\theta^{(j)}, t^\star) = \sum_{j=1}^{m-1} \frac{1}{m}\left(r_j-\alpha_j\right)
    \end{align*}
     Let $t \in \mathcal{C}$ be such that
     $\mathbb{E}_{\theta \sim \mathcal{D}}[u_p(\theta, t)] \geq \mathrm{OPT}(\mathcal{D}, \Cbounded) - \epsilon/m$. Recall that by our assumption on $\mathcal{C}$, there must be such a contract $t$.
    Suppose that $t_0 > \epsilon$. 
    Note that $i^\star(\theta^{(0)}, t) = 0$, regardless of $t$, since the costs of all actions other than $0$ are set to $\infty$. We have:
\begin{align*}
       \mathbb{E}_{\theta \sim \mathcal{D}}[u_p(\theta, t)] 
       & = \sum_{j=0}^{m-1} \frac{1}{m} \cdot u_p(\theta^{(j)}, t) \\
       & < -\frac{\epsilon}{m} + \sum_{j=1}^{m-1} \frac{1}{m} \cdot u_p(\theta^{(j)}, t) \\
       &\leq  - \frac{\epsilon}{m}  +  \mathbb{E}_{\theta \sim \mathcal{D}}[u_p(\theta, t^\star)], 
\end{align*}
where the first inequality follows from the fact that the contract will pay $t_0 > \epsilon$ to agent $\theta^{(0)}$, 
and the second inequality from the fact that for any other agent type $\theta^{(j')}$, transfer $t^{\star}_{j'}$ is the minimum transfer required to induce action $1$ from the agent.
This, however, contradicts our assumption that $\mathbb{E}_{\theta \sim \mathcal{D}}[u_p(\theta, t)] \geq \mathrm{OPT}(\mathcal{D}, \Cbounded) - \epsilon/m$. 
    % This implies $t_0 \leq 1/(4n^3)$. 
    Thus, it must hold that $t_0 \leq \epsilon$.
    If for $\ell \in \setoutcomes \setminus \{0\}$ we have $t_\ell < \alpha_\ell$, then $i^\star(\theta^{(\ell)}, t) = 0$, and therefore:
    \begin{align*}
          \mathbb{E}_{\theta \sim \mathcal{D}}[u_p(\theta, t)]
          &\leq \sum_{j \notin \{0,\ell\}} \frac{1}{m}\left(r_j-\alpha_j\right) 
          \\
          &= \mathbb{E}_{\theta \sim \mathcal{D}}[u_p(\theta, t^\star)] - \frac{1}{m}\left(r_{\ell}-\alpha_{\ell}\right) \\
          &< \mathbb{E}_{\theta \sim \mathcal{D}}[u_p(\theta, t^\star)] - \frac{\epsilon}{m},
    \end{align*}
    where the last inequality follows from the assumption that $\epsilon <  \min_{j \in [m] \setminus \{0\}} (r_j - \alpha_j)$.
    Finally, if $t_\ell > \alpha_{\ell} + \epsilon$, then $i^\star(\theta^{(\ell)}, t) = 1$, and therefore:
    \begin{align*}
         \mathbb{E}_{\theta \sim \mathcal{D}}[u_p(\theta, t)]
              < \sum_{j \notin \{0,\ell\}} \frac{1}{m}\left(r_j-\alpha_j\right) + \frac{1}{m}\left(r_{\ell}-\alpha_{\ell} - \epsilon\right)
          = \mathbb{E}_{\theta \sim \mathcal{D}}[u_p(\theta, t^\star)] - \frac{\epsilon}{m}
    \end{align*}
Thus, these two observations imply that for any $t$ such that $\mathbb{E}_{\theta \sim \mathcal{D}}[u_p(\theta, t)] \geq \mathrm{OPT}(\mathcal{D}, \Cbounded) - \epsilon/m$, it must also hold that $t_\ell \in [\alpha_\ell, \alpha_\ell + \epsilon]$, as needed.
\end{proof}

\pdimlbtypeconstruction*

\begin{proof}
The agent's type \(\theta^{(\alpha, S,j)} = (f^{(\alpha, j)}, c^{(\alpha, S)})\) is defined by the production function:\begin{table}[H]
    \centering
    \renewcommand{\arraystretch}{1.25}
    \begin{tabular}{|c|>{\centering\arraybackslash}m{3cm}>{\centering\arraybackslash}m{3cm}>{\centering\arraybackslash}m{3cm}|}
        \hline
        & \textbf{outcome $0$} & \textbf{outcome $j$} & \textbf{outcome $j' \in \setoutcomes \setminus \{j, 0\}$} \\ \hline 
        \textbf{action $0$:} & 
        \vspace{0.2cm}$\displaystyle f^{(\alpha,j)}_{0,0} = 1$\vspace{0.2cm}  & 
        \vspace{0.2cm}$\displaystyle f^{(\alpha,j)}_{0,j} = 0$\vspace{0.2cm}  & 
        \vspace{0.2cm}$\displaystyle f_{0,j'}^{(\alpha,j)} = 0 $\vspace{0.2cm}\\
        \textbf{action $i>0$:} & 
        \vspace{0.2cm}$\displaystyle f^{(\alpha,j)}_{i,0} = \frac{\alpha_{\ell}-\alpha_{i}}{r_j-\alpha_i}$\vspace{0.2cm}  & 
        \vspace{0.2cm}$\displaystyle f^{(\alpha,j)}_{i,j} = \frac{r_j-\alpha_{\ell}}{r_j-\alpha_i}$\vspace{0.2cm}  & 
        \vspace{0.2cm}$\displaystyle f_{i,j'}^{(\alpha,j)} = 0 $\vspace{0.2cm}\\
        \hline
    \end{tabular}
\end{table}
    \noindent and the cost vector:
     \begin{table}[H]
        \centering
        \renewcommand{\arraystretch}{1.25}
         \begin{tabular}{|c|>{\centering\arraybackslash}m{7cm}|}
        \hline
        & \textbf{cost} \\ \hline
        \textbf{action $0$:} & \vspace{0.2cm}$\displaystyle {c^{(\alpha, S)}_{0} = 0}$ \vspace{0.2cm}\\ \textbf{action $i \in S$:} & \vspace{0.2cm}$\displaystyle {c^{(\alpha, S)}_{i} = \sum_{j=1}^i \alpha_i\left(\frac{r_j-\alpha_{\ell}}{r_j-\alpha_j} - \frac{r_j-\alpha_{\ell}}{r_j-\alpha_{j-1}}\right)}$ \vspace{0.2cm}\\ 
        \textbf{action $i' \notin S$:}  & $\displaystyle c^{(\alpha, S)}_{i'} = \infty $\vspace{0.2cm} \\  \hline
        \end{tabular}
    \end{table}

We first anaylze the best responses of an agent with type $\theta^{(\alpha, S, j)}$, given any contract $t$ with $t_j - t_0 \in [\alpha_i, \alpha_{i+1})$ for some $i \in \{0, 1, \ldots, \ell\}$. We show that if $i \in S$, then the best response is $i^\star(\theta^{(\alpha, S, j)}, t) = i$, and if $i \notin S$, then the best response is $i^\star(\theta^{(\alpha, S, j)}, t) < i$.
To show this, we write the agent's utility for action $w \in S$ as:
\begin{align*}
    u_a(\theta^{(\alpha, S, j)}, t, w) = t_0 \cdot f_{w, 0}^{(\alpha, j)} + t_j \cdot f_{w, j}^{(\alpha, j)} - c_w^{(\alpha, S)} 
    = t_0 + (t_j - t_0) \cdot f_{w,j}^{(\alpha, j)} - c_w^{(\alpha, S)}. 
\end{align*}
Note that:
\begin{align*}
    (t_j - t_0) \cdot f_{w,j}^{(\alpha, j)} &= (t_j - t_0) \cdot \left(\frac{1-\alpha_{\ell}}{1-\alpha_{w}}\right)  = (t_j - t_0) \cdot \left( \frac{1-\alpha_{\ell}}{1-\alpha_0} + \sum_{k=1}^{w} \left(\frac{1-\alpha_{\ell}}{1-\alpha_k} - \frac{1-\alpha_{\ell}}{1-\alpha_{k-1}}\right)\right) \\
    c_w^{(\alpha, S)} &= \sum_{k=1}^{w} \alpha_k \left(\frac{1-\alpha_{\ell}}{1-\alpha_k} - \frac{1-\alpha_{\ell}}{1-\alpha_{k-1}}\right)
\end{align*}
Therefore, we have:
\begin{align*}
    u_a(\theta^{(\alpha, S, j)}, t, w) = t_0 + (t_j - t_0) \cdot \left(\frac{1-\alpha_{\ell}}{1-\alpha_0}\right) + \sum_{k=1}^{w} ((t_j - t_0) - \alpha_k) \left(\frac{1-\alpha_{\ell}}{1-\alpha_k} - \frac{1-\alpha_{\ell}}{1-\alpha_{k-1}}\right)
\end{align*}
Since $\alpha_{k-1} < \alpha_{k}$, the term $\frac{1-\alpha_{\ell}}{1-\alpha_k} - \frac{1-\alpha_{\ell}}{1-\alpha_{k-1}}$ is always positive. Additionally, we have $(t_j - t_0) - \alpha_k$ is non-negative for all $\alpha_k \leq t_j - t_0$ and is negative for all $\alpha_k > \alpha$. We also note that the agent's utility for taking action $0$ is $u_a(\theta^{(\alpha, S, j)}, t, 0) = t_0 \cdot f_{0,1}^{(\alpha,j)} - c_0^{(\alpha, S)} = t_0 \cdot 1 - 0 = t_0$ and for taking action $w \notin S$ is $-\infty$ since $c_w = \infty$. This proves the claim about the agent's best responses.

Let us now consider a contract $t$ with $t_j - t_0 \in [\alpha_i, \alpha_i + \rho)$ for some $i \in S$. 
Note that $t_j - t_0 \leq \alpha_i + \rho \leq \alpha_i + (1/3) \cdot (\alpha_{i+1} - \alpha_i) < \alpha_{i+1}$.
Thus, by the observation above, the agent's best response for each contract $t$ with $t_j - t_0 = \alpha_i$ for $i \in S$ is action $i$, i.e., $i^\star(\theta^{(\alpha, S, j)}) = i$. Since $t_j - t_0 < \alpha_i + \rho$ and $t_0 < \rho$, we can bound the principal's utility:
\begin{align*}
    u_p(\theta^{(\alpha, S, j)}, t) &= f_{i,0}^{(\alpha, j)} \cdot (0 - t_0) + f_{i,j}^{(\alpha, j)} \cdot (r_j- t_j) 
    \\
    &= f_{i,j}^{(\alpha, j)} \cdot (r_j- (t_j-t_0)) - t_0\\ 
    &= \left(\frac{r_j-\alpha_{\ell}}{r_j-\alpha_i}\right) \cdot (r_j-(t_j-t_0)) - t_0 \\
   &\geq \left(\frac{r_j-\alpha_{\ell}}{r_j-\alpha_i}\right) \cdot (r_j-\alpha_i - \rho) - \rho \\
   &= (r_j - \alpha_{\ell}) - \left(\frac{r_j-\alpha_{\ell}}{r_j-\alpha_i}\right) \cdot \rho - \rho \\
   &\geq (r_j - \alpha_{\ell}) - 2 \cdot \rho.
\end{align*}

Let us now consider a contract $t$ with $t_j - t_0 \in [\alpha_i, \alpha_{i+1})$ for some $i \notin S$ with $i > 0$. The agent's best response is some action $i' < i$, i.e., $i' = i^\star(\theta^{(\alpha, S, j)}) < i$.
We can thus bound the principal's utility:
\begin{align*}
    u_p(\theta^{(\alpha, S, j)}, t) &= f_{i',0}^{(\alpha,j)} \cdot (0-t_0) 
 + f_{i',1}^{(\alpha,j)} \cdot (r_j-t_j) \\ 
 &= f_{i', j}^{(\alpha, j)} \cdot (r_j - (t_j-t_0)) - t_0\\
 &= \left(\frac{r_j-\alpha_{\ell}}{r_j-\alpha_{i'}}\right) \cdot (r_j-(t_j-t_0)) - t_0 \\
  &\leq \left(\frac{r_j-\alpha_{\ell}}{r_j-\alpha_{i'}}\right) \cdot (r_j-\alpha_i)\\
    &\leq (r_j-\alpha_{\ell}) \cdot \left(\frac{r_j-\alpha_{i}}{r_j-\alpha_{i'}}\right) \\
    &= (r_j-\alpha_{\ell}) \cdot \left(1-\frac{\alpha_{i}-\alpha_{i'}}{r_j-\alpha_{i'}}\right) \\
    &\leq (r_j-\alpha_{\ell}) \cdot \left(1-(\alpha_{i}-\alpha_{i'})\right) \\
    &= (r_j-\alpha_{\ell}) - (r_j-\alpha_{\ell}) \cdot (\alpha_{i} - \alpha_{i'})\\
    &\leq (r_j-\alpha_{\ell}) - 3 \cdot \rho.
\end{align*}
since $\rho \leq (1/3) \cdot (r_j - \alpha_{\ell}) \cdot \min_{0 \leq w \leq \ell-1} (\alpha_{w+1}- \alpha_w)$.

Finally, we consider the principal's utility for a contract $t$ with $t_j - t_0 \in [0, \alpha_1)$. We have that the agent's best response is action $0$, and so:
\begin{align*}
    u_p(\theta^{(\alpha, S, j)}, t) = f_{0,0}^{(\alpha, j)} \cdot (0 - t_0)
     = - t_0 \leq 0 \leq (r_j - \alpha_{\ell}) - 3 \cdot \rho.
\end{align*}
This completes the proof of the lemma.
\end{proof}

\section{Proofs Omitted from Section~\ref{sec:sample_complexity_results}}\label{sec:proofs_for_sample_complexity}

\samplelowerboundtwoelemenets*

\begin{proof}
Suppose there exists an algorithm that can distinguish between $\mathcal{D}_1$ and $\mathcal{D}_2$ with probability at least $1 - \delta$ given $N$ samples. 
Let $\mathcal{D}_1^N$ and $\mathcal{D}_2^N$ be the probability disitrubutions over the sample space $\{0,1\}^N$.
By a standard result, we know that $\delta \geq (1/2) \cdot (1- d_{\mathrm{TV}}(\mathcal{D}_1^N, \mathcal{D}_2^N))$, where $d_{\mathrm{TV}} (\mathcal{D}_1^N, \mathcal{D}_2^N)$ is the total variation distance between $\mathcal{D}_1^N$ and $\mathcal{D}_2^N$, given by $d_{\mathrm{TV}} (\mathcal{D}_1^N, \mathcal{D}_2^N) = \sum_{x \in \{0,1\}^N} (1/2) \cdot |\mathcal{D}_1^N(x) - \mathcal{D}_2^N(x)|$. For a proof, see \citet*{DBLP:books/daglib/0035708}.

Using the {Bretagnolle-Huber inequality} \cite{DBLP:books/daglib/0035708}, we obtain:
\[
1- d_{\mathrm{TV}}(\mathcal{D}_1, \mathcal{D}_2) \geq (1/2) \cdot \exp\left({-D_{\text{KL}}\left(\mathcal{D}_1^N \,\Vert\, \mathcal{D}_2^N\right)}\right),
\]
where $D_{\text{KL}}\left(\mathcal{D}_1^N \,\Vert\, \mathcal{D}_2^N\right)$ denotes the Kullback-Leibler (KL) divergence between $\mathcal{D}_1^N$ and $\mathcal{D}_2^N$, given by $D_{\text{KL}}\left(\mathcal{D}_1^N \,\Vert\, \mathcal{D}_2^N\right) = \sum_{x \in \{0,1\}^N} \mathcal{D}_1^N(x) \cdot \log(\mathcal{D}_1^N(X)/\mathcal{D}_2^N(x))$.

By a standard property of the KL divergence, since the samples are independent and identically distributed (i.i.d.), the KL divergence between the sample distributions scales linearly with $N$:
\[
D_{\text{KL}}(\mathcal{D}_1^N \,\Vert\, \mathcal{D}_2^N) = N \cdot D_{\text{KL}}(\mathcal{D}_1 \,\Vert\, \mathcal{D}_2).
\]
We can combine these observations above as follows:
\[
\delta \geq (1/2)\cdot\left( 1- d_{\mathrm{TV}}(\mathcal{D}_1, \mathcal{D}_2)\right) \geq (1/4) \cdot \exp\left({-N \cdot D_{\text{KL}}\left(\mathcal{D}_1 \,\Vert\, \mathcal{D}_2\right)}\right).
\]
Taking the natural logarithm of both sides gives:
\[
\ln(4\delta) \geq - N \cdot D_{\text{KL}}(\mathcal{D}_1 \,\Vert\, \mathcal{D}_2).
\]
Rewriting the inequality, we get:
\[
N \geq \dfrac{1}{D_{\text{KL}}(\mathcal{D}_1 \,\Vert\, \mathcal{D}_2)} \cdot \ln\left( \dfrac{1}{4\delta} \right).
\]

We can now calculate the KL-divergence of the given Bernoulli distributions. By definition of the KL divergence we have:
\begin{align*}
    D_{\text{KL}}(\mathcal{D}_1 \,\Vert\, \mathcal{D}_2) &= \mathcal{D}_1(0) \cdot \log\left(\frac{\mathcal{D}_1(0)}{\mathcal{D}_2(0)}\right) + \mathcal{D}_1(1) \cdot \log\left(\frac{\mathcal{D}_1(1)}{\mathcal{D}_2(1)}\right)  \\
    &= (1/2+\epsilon)\cdot \log\left(\frac{1/2+\epsilon}{1/2-\epsilon}\right) + (1/2-\epsilon) \cdot \log\left(\frac{1/2-\epsilon}{1/2+\epsilon}\right) \\
    &= (1/2+\epsilon)\cdot \log\left(\frac{1/2+\epsilon}{1/2-\epsilon}\right) - (1/2-\epsilon) \cdot \log\left(\frac{1/2+\epsilon}{1/2-\epsilon}\right) \\
    &= (2\epsilon) \cdot \log\left(\frac{1/2+\epsilon}{1/2-\epsilon}\right) \\
    &= (2\epsilon) \cdot \log\left(1 + \frac{2\epsilon}{1/2-\epsilon}\right) \\
    &\leq (2\epsilon) \cdot \frac{2\epsilon}{1/2-\epsilon} \\
    &\leq 16 \epsilon^2
\end{align*}
where the first inequality follows from the fact that $\log(1+x) \leq x$ for all $x > -1$ and the second inequality from the assumption that $\epsilon < 1/4$.
It follows that:
    \[
N \geq \dfrac{1}{D_{\text{KL}}(\mathcal{D}_1 \,\Vert\, \mathcal{D}_2)} \cdot \ln\left( \dfrac{1}{4\delta} \right) \geq \dfrac{1}{16 \epsilon^2} \cdot \ln\left( \dfrac{1}{4\delta} \right)
\]
which completes the proof.
\end{proof}

\samplelowerboundmanyelemenets*
\begin{proof}
    Consider two Bernoulli distributions $\mathcal{D}_{-1} = \mathrm{Ber}\left( 1/2 - \epsilon/2 \right)$ and $\mathcal{D}_{+1} = \mathrm{Ber}\left( 1/2 + \epsilon/2 \right)$ over $\{0,1\}$. Consider any algorithm $\mathcal{A}$ that, given $\ell$ samples $x_1, \ldots, x_{\ell} \in \{0,1\}$ drawn independently from $\mathcal{D}_{j}$ for some unknown $j \in \{-1,+1\}$, outputs some $\mathcal{A}(x_1, \ldots, x_k) \in \{-1,+1\}$.  By \Cref{lem:bernoulli_lb}, we know that if the algorithm satisfies $\mathcal{A}(x_1, \ldots, x_k) = j$ with probability at least $34/64$, then $\ell \geq C/\epsilon^2$ for some constant $C > 0$. Equivalently, we can say that if $\ell < C/\epsilon^2$, then any algorithm satisfies $\mathcal{A}(x_1, \ldots, x_k) \neq j$ with probability at least $30/64$, and thus $\mathbb{E}[\mathbbm{1}[\mathcal{A}(x_1, \ldots, x_k) \neq j]] \geq 30/64$. 
    
    Suppose that there exists an algorithm satisfying the properties given in the statement of the lemma that takes only $N < (n/30) \cdot (C/\epsilon^2)$ samples, where $C$ is the constant given above. 
    We denote the samples as $x_1, \ldots, x_{N} \in \mathcal{X}$.
    By the assumption, the number of mistakes that the algorithm makes is at most $(1/4) \cdot n$ with probability $3/4$, and at most $n$ with probabiltiy $1/4$. Thus, we must have:
    \begin{align*}
        \mathbb{E}_{x_0, \ldots, x_{N-1} \sim \mathcal{D}^{(z)}}\left[ \sum_{i=1}^n \mathbbm{1}[z_i' \neq z_i] \right] \leq (1/4) \cdot n + (3/4) \cdot ((1/4) \cdot n) = (28/64) \cdot n.
    \end{align*}
    Our strategy to prove the lemma is to derive a contradiction from this inequality.

    Drawing $N$ samples $x_1, \ldots, x_{N}$ from $\mathcal{D}^{(z)}$ is equivalent to the following procedure that is easier to analyze: 
    Let $i_1, \ldots, i_N \in \{1, \ldots, n\}$ be independent random variables drawn uniformly at random from $\{1, \ldots, n\}$.
    Then, let $j_1, \ldots, j_{N} \in \{0, 1\}$ be independent random variables, where for each $k \in \{1, \ldots, N\}$, the random variable
    $j_k$ is drawn from $\mathcal{D}_{-1}$ if $z_{i_k} = -1$ and from $\mathcal{D}_{+1}$ if $z_{i_k} = +1$. Finally, let $x_k = (i_k)_{(j_k)}$. To verify the equivalence, note that for every $i \in \{1, \ldots ,n \}$, we have:
    \begin{align*}
        \mathbb{P}[x_k = i_0] &= \mathbb{P}[i_k = i \text{ and } j_k = 0] && (\text{since $x_k = (i_k)_{(j_k)}$})\\
        &= 
\mathbb{P}[i_k = i] \cdot \mathbb{P}[j_k = 0 \,|\, i_k = i] \\        
&= (1/n) \cdot (1/2 - z_i \cdot \epsilon/2) && (\text{since $j_k \sim \mathcal{D}_{(z_i)} = \mathrm{Ber}(1/2 + z_i \cdot \epsilon/2)$}) \\
        &= (1 - z_i \cdot \epsilon)/(2n) 
    \end{align*}
    In a similar way, we show:
    \begin{align*}
        \mathbb{P}[x_k = i_1] &= \mathbb{P}[i_k = i \text{ and } j_k = 1] = (1/n) \cdot (1/2 + z_i \cdot \epsilon/2) = (1 + z_i \cdot \epsilon)/(2n)
    \end{align*}
    as needed.

    Consider any fixed draw of $i_1, \ldots, i_{N}$. 
    For each $i \in \{1, \ldots, n\}$, let $N_i = |\{ k \in \{1, \ldots, N\}: i_k = i \}|$. Note that there are at least $(29/30) \cdot n$ values of $i$ with $N_i < C/\epsilon^2$ samples. Indeed, if the contrary holds, then we have at least $(1/30) \cdot n$ values of $i$ with $N_i \geq C/\epsilon^2$ samples, which implies $\sum_{i=1}^n N_i \geq ((1/30) \cdot n) \cdot (C/\epsilon^2) \geq (n/30) \cdot (C/\epsilon^2) > N$, which is a contradiction.
    Now, the crucial observation is that if $N_i < C/\epsilon^2$, then, since the algorithm observes less than $C/\epsilon^2$ samples from $\mathcal{D}_{z_i}$, we have $\mathbb{E}_{j_1, \ldots, j_{N}}[\mathbbm{1}[z_i' \neq z_i]] \geq 30/64$ by the argument we made earlier.
    Therefore, we have:
     \begin{align*}
         \mathbb{E}_{x_1, \ldots, x_{N}}\left[ \sum_{i=1}^n \mathbbm{1}[z_i' \neq z_i] \right] &= \mathbb{E}_{i_1, \ldots, i_{N}}\left[ \mathbb{E}_{j_1, \ldots, j_{N}}\left[\sum_{i=1}^n 
 \mathbbm{1}\left[z_i' \neq z_i\right]\,\bigg\rvert\, i_1, \ldots, i_N \right]  \right]\\
         &= \mathbb{E}_{i_1, \ldots, i_{N}}\left[ \sum_{i=1}^n \mathbb{E}_{j_1, \ldots, j_{N}}\left[\mathbbm{1}\left[z_i' \neq z_i\right] \,\rvert\, i_1, \ldots, i_N \right] \right] \\
         &\geq \mathbb{E}_{i_1, \ldots, i_{N}}\left[ \sum_{i=1}^n (30/64) \cdot \mathbbm{1}\left[N_i < C/\epsilon^2\right]  \right] \\
         &\geq \mathbb{E}_{i_1, \ldots, i_{N}}\left[ (29/30) \cdot n \cdot (30/64) \right] \\
         &= (29/64) \cdot n
    \end{align*}
    which contradicts the inequality we derived earlier, and hence completes the proof.
\end{proof}

\samplecomplexitylbboundedcontracts*

\begin{proof}
    Learning bounded contracts in the general $m$-outcome case is at least as hard as in the binary outcome case with $m=2$, and in the binary outcome model, learning bounded contracts is essentially equivalent to learning linear contracts by \Cref{lem:binary_outcome}\ref{enum:lin_opt}. Therefore, we obtain the lower bound of $N \geq \Omega((1/\epsilon^2) \log (1/\delta))$ directly from \Cref{thm:sample_lower_bound_linear}.

    We first define $2(m- 1)$ agent's types, $\theta_1^j$ and $\theta_2^j$ for $j \in [m-1]$. Let $\theta_1^j$ be a type defined as follows:
    \begin{table}[H]
        \centering
        \begin{tabular}{|c|ccc|c|}
        \hline
        & $r_0 = 0$  & $r_j = 1$ & $r_{j'} = 1$ &\textbf{cost} \\ \hline
        \textbf{action 0:} & $1/2$ & $1/2$ & $0$ & $c_0 = 0$ \\ 
        \textbf{action 1:} & $1/2$ & $1/2$ & $0$ & $c_1 = 1/4$\\ \hline
        \end{tabular}
    \end{table}
    Let $\theta_2^j$ be a type defined as follows:
\begin{table}[H]
        \centering
        \begin{tabular}{|c|ccc|c|}
        \hline
        & $r_0 = 0$  & $r_1 = 1$ & $r_{j'} = 0$ &\textbf{cost} \\ \hline
        \textbf{action 0:} & $1$ & $0$ & $0$ & $c_0 = 0$ \\ 
        \textbf{action 1:} & $1/2$ & $1/2$ & $0$ & $c_1 = 1/4$\\ \hline
        \end{tabular}
    \end{table}
    Note that for $\theta_1^j$, the agent's best response is always action $0$, and so $u_p(\theta_1^j, t) = (1/2) \cdot (1-t_j)$ for any $t_j$. For $\theta_2^j$, the agent's best response is action $0$ for $t_j < 1/2$, and action $1$ for $ t_j \geq 1/2$. Hence, $u_p(\theta_2^j, t) = 0$ for $t_j < 1/2$, and $u_p(\theta_2^j, t) = (1/2) \cdot (1-t_j)$ for $t_j \geq 1/2$.

    For every sequence $z_1, \ldots, z_{m-1} \in \{+1, -1\}$, we can define a distribution $\mathcal{D}^{(z)}$ which puts probability mass $(1-16\epsilon z_j)/(2(m-1))$ on type $\theta^j_1$ and probability mass $(1+16\epsilon z_j)/(2(m-1))$ on type $\theta^j_2$. Note that the these terms add up to $1$ and so $\mathcal{D}^{(z)}$ is a valid probability distribution for every sequence $z$ with elements in $\{+1,-1\}$.

    We can verify that for distribution $\mathcal{D}^{(z)}$, the optimal contract sets $t_j = 1/2$ if $z_j = +1$ and $t_j = 0$ if $z_j = -1$.
    Without loss of generality, we can assume that the learning algorithm $\mathcal{A}$ always returns a contract $t$ satisfying $t_0 = 0$ and $t_j \in \{0, 1/2\}$ for all $j \in \{1, \ldots, m-1\}$.
    For any such contract $t$, we can define $w_j \in \{0, 1\}$  depending on the combination of $z_j \in \{+1, -1\}$ and $t_j \in \{0, 1/2\}$. If $z_j = +1$ and $t_j = 0$, we set $w_j = 1$. If $z_j = -1$ and $t_j = 1/2$, we set $w_j = 1$. Otherwise, we set $w_j = 0$.

If $w_j$ is set to $1$, then this means that the contract $t$ did not identify correctly the payment $t_j$ for outcome $j$. 
Note that the agent type distribution is uniform over  
$j \in [m - 1]$ 
Conditioned on the agent being of type $\theta^j_i$ for $i \in \{1, 2\}$, the probability mass assigned to the two types is   $\frac{1}{2} - 16 \epsilon z_j $ and $\frac{1}{2} + 16 \epsilon z_j $, respectively.
Thus, now we are at exactly the setting of the proof in \Cref{thm:sample_lower_bound_linear}.
But in \Cref{thm:sample_lower_bound_linear} we showed that for this (now conditioned on $j$) agent type distribution, the principal's expected utility if  $w_j$ is $1$ is at least $2 \epsilon$ less compared to the principal's utility for the optimal value of $t_j$. 
Given that the agent distribution is uniform over $j \in [m - 1]$,
{for the principal's expected utility conditioned on $w = (w_1, \dots, w_{m-1})$, we have: }
\begin{align*}
    \mathbb{E}_{\theta \sim \mathcal{D}^{(z)}}[u_p(\theta, t) | w] - \mathrm{OPT}(\mathcal{D}^{(z)}, \Cbounded) = \frac{1}{m-1} \cdot \sum_{j=1}^{m-1} (2\epsilon) \cdot w_j
\end{align*}
Thus,  to satisfy $\mathbb{E}_{\theta \sim \mathcal{D}^{(z)}}[u_p(\theta, t)] \geq \mathrm{OPT}(\mathcal{D}^{(z)}, \Cbounded) - \epsilon$, we must have $\sum_{j=1}^{m-1} w_j \leq (1/4) \cdot (m-1)$. 
{In other words, our learning algorithm needs to be able to correctly identify the $z$ value for at least $(3/4) \cdot (m-1)$ indices.}
By \Cref{lem:discrete_lb}, this requires $N \geq \Omega(m/\epsilon^2)$.
\end{proof}

\sampleimpossibilityunboundedcontracts*

\begin{proof}
    We prove the statement for the case where $\numoutcomes=3$ and $\numactions=2$. The general case follows since we can always extend each distribution over three outcomes to include the remaining $\numoutcomes-3$ outcomes with probability $0$, and we can include additional copies of any of the two actions to make the total number of actions equal to $\numactions$.

    Suppose for the sake of contradiction that there is a learning algorithm  $\mathcal{A}$ such that for every distribution $\mathcal{D}$ over the agent's type space $\Thetaall$, given $K$ samples from $\mathcal{D}$, with probability at least $1-\delta$ over the draw of the samples from $\mathcal{D}$, algorithm $\mathcal{A}$ outputs a contract $t$ with $\mathbb{E}_{\theta \sim \mathcal{D}} [u_p(\theta,t)] \geq \mathrm{OPT}(\Cunbounded, \mathcal{D}) - \epsilon$.

    We first define two agent types, $\theta^{(1)}$ and $\theta^{(2)}$. 
    Let $\eta = (1/4 - \epsilon) \cdot (1-\delta^{1/K})$.
    The agent type $\theta^{(1)}$ is given in the following table:
    \begin{table}[H]
        \centering
        \begin{tabular}{|c|ccc|c|}
        \hline
        & $r_0 = 0$ & $r_1 = 0$ & $r_2 = 1$ & \textbf{cost} \\ \hline
        \textbf{action 0:} & $0$ & $1/2$ & $1/2$ & $c_1 = 0$ \\ 
        \textbf{action 1:} & $\eta$ & $0$ & $1-\eta$ & $c_0 = 1/4$\\ \hline
        \end{tabular}
    \end{table}
    The agent type $\theta^{(2)}$ is given in the following table:
    \begin{table}[H]
        \centering
        \begin{tabular}{|c|ccc|c|}
        \hline
        & $r_0 = 0$ & $r_1 = 0$ & $r_2 = 1$ & \textbf{cost} \\ \hline
        \textbf{action 0:} & $1$ & $0$ & $0$ & $c_1 = 0$ \\ 
        \textbf{action 1:} & $1$ & $0$ & $0$ & $c_0 = 0$\\ \hline
        \end{tabular}
    \end{table}
    We consider two distributions over agent types, $\mathcal{D}^{(1)}$ and $\mathcal{D}^{(2)}$. Distribution $\mathcal{D}^{(1)}$ assigns probability $1$ to type $\theta^{(1)}$. Distribution $\mathcal{D}^{(2)}$ assigns probability $\delta^{1/K}$ to type $\theta^{(1)}$ and probability $1-\delta^{1/K}$ to type $\theta^{(2)}$.

    We first analyze the output of algorithm $\mathcal{A}$ for distribution $\mathcal{D}^{(1)}$. Notice that there is only one possible sample that can be drawn from that distribution which means that the output is deterministic. Denote the contract returned by $\mathcal{A}$ as $t$. By the assumption, it must hold that $u_p(\theta^{(1)}, t) = \mathbb{E}_{\theta \sim \mathcal{D}^{(1)}}[u_p(\theta,t)] \geq \mathrm{OPT}(\Cunbounded, \mathcal{D}^{(1)})-\epsilon$. To get a lower bound on the value of $\mathrm{OPT}(\Cunbounded, \mathcal{D}^{(1)})$, we can consider a contract $t^\star$ given by:
    \begin{align*}
        t_0^\star = 1/(4 \cdot \eta), \quad t_1^\star = 0, \quad t_2^\star = 0.
    \end{align*}
     The contract above incentivizes action $1$ since:
    \begin{align*}
    u_a(\theta^{(1)}, t^\star, 0) &= t^\star \cdot f_1 - c_1 = 0 + 0 + 0 - 0 = 0,\\
    u_a(\theta^{(1)}, t^\star, 1) &= t^\star \cdot f_{0} - c_0 = 1/4 + 0 + 0 - 1/4 =  0.
    \end{align*}
    We get:
    \begin{align*}
        \mathrm{OPT}(\Cunbounded, \mathcal{D}^{(1)}) \geq u_p(\theta^{(1)}, t^\star) \geq u_p(\theta^{(1)}, t^\star, 1) = (r-t^\star) \cdot f_1 = -1/4 + 0 + 1 = 3/4.
    \end{align*}
    It  follows that $u_p(\theta^{(1)}, t)\geq \mathrm{OPT}(\Cunbounded, \mathcal{D}^{(1)})-\epsilon \geq 3/4 - \epsilon > 1/2$ since $\epsilon < 1/4$.
    In particular, $t$ must incentivize action $1$ because if the agent takes action $0$, the principal's utility is at most $r \cdot f_0 = 1/2$. Since $t$ incentivizes action $1$, we must have:
    \begin{align*}
        &(1/2) \cdot t_1 + (1/2) \cdot t_2 = u_a(\theta^{(1)}, t, 0) \leq u_a(\theta^{(1)}, t, 1) = \eta \cdot t_0 + (1-\eta) \cdot t_2 - 1/4.
    \end{align*}
    We can rearrange this inequality as follows:
    \begin{align}
         & \eta \cdot t_0 - (1/2) \cdot t_1 + (1/2-\eta) \cdot t_2 \geq 1/4 \nonumber  \\
        \Longrightarrow \quad 
        & \eta \cdot t_0  + (1/2-\eta) \cdot t_2 \geq 1/4  \nonumber  \\
        \Longrightarrow \quad & (1-\eta) \cdot \eta \cdot t_0 + (1/2-\eta) \cdot (1-\eta) \cdot t_2 \geq (1/4) \cdot (1-\eta) \label{eq:one}
    \end{align}
    where the first inequality follows since $t_1 > 0$ and the second one by  multiplying both sides by $(1-\eta)$.
    Moreover, since the principal's utility must be at least $3/4-\epsilon$, we must have
    \begin{align*}
        -\eta \cdot t_0 + (1-\eta) \cdot (1-t_2) = u_p(\theta^{(1)}, t, 1) \geq 3/4 - \epsilon.
    \end{align*}
    We can rearrange this inequality as follows:
    \begin{align}
        & -\eta \cdot t_0 + (1-\eta) \cdot (1-t_2) \geq 3/4-\epsilon \nonumber  \\
        \Longrightarrow \quad & -\eta \cdot (1/2-\eta) \cdot t_0 + (1-\eta) \cdot (1/2-\eta) \cdot t_2 \geq (3/4-\epsilon) \cdot (1/2-\eta) \label{eq:two}
    \end{align}
    where the second inequality follows by multiplying both sides by $(1/2-\eta)$. By adding inequaliaties (\ref{eq:one}) and (\ref{eq:two}), we obtain:
    \begin{align*}
        & (\eta/2) \cdot t_0 + (1/2-\eta) \cdot (1-\eta) \geq 5/8 - \eta - (1/2) \cdot \epsilon + \epsilon \cdot \eta \\
        \Longrightarrow \quad & (\eta/2) \cdot t_0 + (1/2-\eta) \cdot (1-\eta) \geq 5/8 - \eta - (1/2) \cdot \epsilon && (\text{since $\epsilon \geq 0$}) \\
        \Longrightarrow \quad & (\eta/2) \cdot t_0 + (1/2-\eta) \cdot (1-\eta) \geq 1/2 - \eta + (1/2) \cdot (1/4-\epsilon) \\
        \Longrightarrow \quad & (\eta/2) \cdot t_0 + 1/2 - (3/2) \cdot \eta + \eta^2 \geq 1/2 - \eta + (1/2) \cdot (1/4-\epsilon) \\
        \Longrightarrow \quad & (\eta/2) \cdot t_0 \geq (1/2) \cdot \eta - \eta^2 + (1/2) \cdot (1/4-\epsilon) \\
        \Longrightarrow \quad & t_0 \geq 1 - 2 \cdot \eta + (1/\eta) \cdot (1/4-\epsilon) \\
       \Longrightarrow \quad & t_0 \geq (1/\eta) \cdot (1/4-\epsilon) && (\text{since $\eta \leq 1/2$}) \\
        \Longrightarrow \quad & t_0 \geq 2/(1-\delta^{1/K}) && (\text{by the definition of $\eta$})
    \end{align*}
    Let us now consider the second disrtibution, $\mathcal{D}^{(2)}$. Notice that with probability $\delta^{(1/K) \cdot K} = \delta$, all samples are of type $\theta^{(1)}$. In that case, the algorithm $\mathcal{A}$ returns the same contract $t$ as in the case of distribution $\mathcal{D}^{(1)}$. Let us calculate the principal's utility generated by contract $t$ for the second type $\theta^{(2)}$:
    \begin{align*}
        u_p(\theta^{(2)}, t) = (r-t) \cdot f_0 = -t_0 \leq -2/(1-\delta^{1/K})
    \end{align*}
    We conclude:
    \begin{align*}
        u_p(\mathcal{D}^{(2)}, t) &= \delta^{1/K} \cdot u_p(\theta^{(1)}, t) + (1-\delta^{1/K}) \cdot u_p(\theta^{(2)}, t) \\
        &\leq \delta^{1/K} \cdot  1 + (1-\delta^{1/K}) \cdot (-2/(1-\delta^{1/K})) \\
        &\leq -1
    \end{align*}
    Note that by considering the zero contract, we can get that $\mathrm{OPT}(\Cunbounded, \mathcal{D}^{(2)}) \geq 0$, which means that $u_p(\mathcal{D}^{(2)}, t) \leq \mathrm{OPT}(\Cunbounded, \mathcal{D}^{(2)}) - \epsilon$ since $\epsilon < 1/4$.
    Therefore, with probability at least $\delta$, the algorithm returns a contract with representation error higher than $\epsilon$, which contradicts the initial assumption on the algorithm $\mathcal{A}$.
\end{proof}

\section{Proofs Omitted from Section~\ref{sec:extensions}}
\label{sec:proofsforextensions}
\pdimupperboundsformenus*

\begin{proof}
We first show that the class of all menus of contracts of size $K$ is $(M \cdot m, M^2 \cdot n^2)$-delineable, establishing a bound on the pseudo-dimension.

Fix any agent type $\theta = (f,c)$. A menu of $K$ (unbounded) contracts $M \in \mathcal{M}_{\text{unbounded}}^K$ is parametrized by $K$ contracts $ (t^{0}, \ldots, t^{K-1}) $, where each contract is represented as $t^k = (t^k_0, \ldots, t^k_{\numoutcomes - 1}) \in \mathbb{R}_{\ge 0}^m$ for all $k \in [K]$.

Fix a sample of the agent's type $(f, c)$. For this type, we define %${\numactions \cdot K \choose 2}$ 
$\binom{\numactions \cdot K}{2}$ hyperplanes. Specifically, for any two pairs $(i,k), (i',k') \in \setactions \times [K]$ (where each pair consists of an action and a contract index), we define the hyperplane:
\begin{align*}
H_{(i,k), (i',k')} = \{ M = (t^1, \dots, t^K) \in \mathbb{R}_{\ge 0 }^{K \cdot m} \,:\, t^k \cdot f_i - c_i = t^{k'} \cdot f_w - c_w\},    
\end{align*}
This hyperplane represents contract menus where the agent is indifferent between selecting the $k$-th contract and taking action $i$ versus selecting the $k'$-th contract and taking action $i'$.

For any contract menu $M$ within a connected component of
$\mathbb{R}_{\ge 0}^{K \cdot m} \setminus \bigcup_{(i,k), (i',k') \in \setactions \times [K]} H_{(i,k),(i',k')}$, 
the agent's optimal contract $t \in M$ and best response action $i^{*}(\theta, t)$ remain constant. Consequently, the principal's utility, given by  
$(r-t) \cdot f_{i^{\star}(\theta, t)}$, is linear across each connected component.

Thus, the class $\mathcal{M}_{\text{unbounded}}^K$ is $(K \cdot m, K^2 \cdot n^2)$-delineable. Applying Theorem \ref{theorem:delineable_pseudo-dimensions}, we obtain the bound $\pdim{\mathcal{M}_{\text{unbounded}}^K} = O (K \cdot \numoutcomes \cdot  \log (K \cdot \numoutcomes \cdot \numactions))$. Finally, since $\mathcal{M}_{\text{bounded}}^K \subseteq \mathcal{M}_{\text{unbounded}}^K$, it follows that $\pdim{\mathcal{M}_{\text{bounded}}^K} = O (K \cdot \numoutcomes \cdot  \log (K \cdot \numoutcomes \cdot \numactions))$.
\end{proof}

\end{document}